\newtheorem{remark}{Remark}
\newtheorem{theorem}{Theorem}
\newtheorem{proposition}{Proposition}
\newcommand{\dhat}[1]{\hat{\hat{#1}}}
\renewcommand*{\@opargbegintheorem}[3]{\trivlist
  \item[\hskip \labelsep{\itshape #1\ #2}] {\itshape (#3):} {\normalfont}}
\newcommand{\AuthorOne}{Shama N. Islam} 
\newcommand{\AuthorTwo}{Salman~Durrani}
\newcommand{\AuthorThree}{Parastoo~Sadeghi}
\newcommand{\ThankOne}{The authors are with the Research School of Engineering, College of Engineering and Computer Science, The
Australian National University, Canberra, ACT 0200, Australia.
Emails: \{shama.islam, salman.durrani, parastoo.sadeghi\}@anu.edu.au.}
\begin{document}
\title{A Novel User Pairing Scheme for Functional Decode-and-Forward Multi-way Relay Network}
\author{\authorblockN{\AuthorOne, \AuthorTwo \,and  \AuthorThree \thanks{\ThankOne}}}

\maketitle

\begin{abstract}
In this paper, we consider a functional decode and forward (FDF)
multi-way relay network (MWRN) where a common user facilitates each
user in the network to obtain messages from all other users. We
propose a novel user pairing scheme, which is based on the principle
of selecting a common user with the best average channel gain. This
allows the user with the best channel conditions to contribute to
the overall system performance. Assuming lattice code based
transmissions, we derive upper bounds on the average common rate
and the average sum rate with the proposed pairing scheme.
Considering $M$-ary quadrature amplitude modulation with square constellation as a special
case of lattice code transmission, we derive asymptotic average symbol
error rate (SER) of the MWRN. We show that in terms of the
achievable rates, the proposed pairing scheme outperforms the
existing pairing schemes under a wide range of channel scenarios.
The proposed pairing scheme also has lower average SER compared to
existing schemes. We show that overall, the MWRN performance with
the proposed pairing scheme is more robust, compared to existing
pairing schemes, especially under worst case channel conditions when
majority of users have poor average channel gains.
\end{abstract}
\vspace{-10pt}
\begin{IEEEkeywords}
Multi-way relay network, functional decode and forward, pairing
scheme, wireless network coding.
\end{IEEEkeywords}

\section{Introduction}
Multi-way relay networks (MWRNs), where a single relay facilitates
all users in the network to exchange information with every other
user, have important potential applications in teleconferencing,
data exchange in a sensor network or file sharing in a social
network X
\cite{Ong:2010,Ong:2012,Ong:2014,Gunduz:2013,Gyan:2012,Gayan:2013,GWang:2012,Ma:2013,Noori:2012,Ang:2013,Tao:2013,Reza:2014}.
A MWRN is a generalization of two-way relay networks (TWRNs), which
enable bidirectional information exchange between two users and are
widely recognized in the literature for their improved spectral
efficiency, compared to conventional
relaying~\cite{Zhang-2006,Katti:2007,Laneman-2004,Popovski:2007,Cui:2009,Liew:2013}.
Note that multi-user
TWRNs~\cite{MChen:2009,MChen:2010,Xu:2011,Jianshu:2012,Hein:2013,NYang:2012},
where each user exchanges information with a pre-assigned user only,
can be considered as a special case of MWRNs.

The users in a MWRN can adopt either pairwise
transmission~\cite{Ong:2010,Gyan:2012,Noori:2012} or non-pairwise
transmission~\cite{Nazer:2011,Ma:2013,Gayan:2013,Gunduz:2013}
strategy for message exchange. Though non-pairwise transmission can
offer larger spectral efficiency, its benefits come at the expense
of additional signal processing complexity at the relay
\cite{Gayan:2013}. Hence, in this paper, we focus on pairwise
transmission strategy. Recently, pairwise transmission based MWRNs
have been studied for different relaying protocols, e.g., functional
decode and forward (FDF) \cite{Ong:2010}, decode and
forward~\cite{Gunduz:2013}, amplify and forward~\cite{Gyan:2012} and
compute and forward \cite{GWang:2012} protocols. It was shown
in~\cite{Ong:2010} that pairwise FDF with binary linear codes for
MWRN, where the relay decodes a function of the users' messages
rather than the individual messages from a user pair, is
theoretically the optimal strategy since it achieves the common rate. Also it was shown in~\cite{Ong:2012} that for a MWRN with
lattice codes in an Additive White Gaussian Noise (AWGN) channel,
the pairwise FDF achieves the common rate. Hence, in this
paper, we consider FDF MWRN.

In a pairwise transmission based FDF MWRN, user pair formation is a
critical issue. In this regard, two different pairing schemes have
been proposed in the literature. In the pairing scheme
in~\cite{Ong:2010}, the $\ell^{th}$ and the $(\ell+1)^{th}$ users
form a pair at the $\ell^{th}$ time slot, where $\ell\in[1,L-1]$
and $L$ is the number of users in the MWRN. In the pairing scheme in
\cite{Noori:2012}, instead of consecutive users as in the pairing
scheme in~\cite{Ong:2010}, the $\ell^{th}$ and the $(L-\ell+1)^{th}$
user form a pair at the $\ell^{th}$ time slot when
$1\leq\ell\leq\lfloor L/2\rfloor$ and the $(\ell+1)^{th}$ and
$(L-\ell+1)^{th}$ user form a pair at the $\ell^{th}$ time slot when
$\lfloor L/2\rfloor<\ell\leq L-1$, where $\lfloor\cdot\rfloor$
denotes the floor operation. The achievable rates for these two
existing pairing schemes were analyzed
in~\cite{Ong:2010,Ong:2012,Noori:2012}, while the average bit error
rate (BER) for the first pairing scheme was analyzed in
\cite{Shama:2012}. A major drawback of the above two pairing schemes
is that they arbitrarily select users for pair formation and do not take the users' channel information into account when pairing the users. This is crucial since in a MWRN, the
decision about each user depends on the decisions about all other
users transmitting before it. Thus, in the above pairing schemes, if
any user experiences poor channel conditions, it can lead to
incorrect detection of another user's message, which can adversely
impact the system performance due to error propagation. We also note that a recent paper on opportunistic pairing \cite{Tao:2013} also suffers from the error propagation problem similar to \cite{Ong:2010}.

In this paper, we propose a novel pairing scheme for user pair
formation in a FDF MWRN. In this scheme, each user is paired with a
common user, which is chosen by the relay as the user with the best
average channel gain. This allows the user with the best channel
conditions to contribute to improving the overall system performance
by reducing the error propagation in the network. The major
contributions of this paper are as follows:
\begin{itemize}

\item
Considering an $L$-user FDF MWRN employing sufficiently large dimension lattice codes, we derive upper bounds for the common rate and sum rate with the proposed pairing scheme (cf. Theorems $1-2$).

\item
Considering an $L$-user FDF MWRN with $M$-ary quadrature amplitude modulation (QAM) based transmission, which is a special case of lattice code based transmission, we derive the asymptotic average SER with the proposed pairing scheme (cf. Theorem $3$).

\item We present important insights, obtained from a careful analysis of the results in Theorems 1-3, in the form of Propositions 1-9. Analyzing the results in Theorems 1-3, we compare the performance of the proposed pairing scheme with the existing pairing schemes and show that:
\begin{itemize}
\item
For the equal average channel gain scenario, the average common rate and the average sum rate are the same for the proposed and existing pairing schemes, but the average SER improves with the proposed pairing scheme (cf. Propositions $1$, $4$ and $7$).

\item
For the unequal average channel gain scenario, the average common rate, the average sum rate and the average SER all improve for the proposed pairing scheme (cf. Propositions $2$, $5$ and $8$).

\item For the variable average channel gain scenario, the average common rate for the proposed pairing scheme is practically the same as the existing schemes, whereas, the average sum rate and the average SER improve for the proposed pairing scheme (cf. Propositions $3$, $6$ and $9$).
\end{itemize}

\end{itemize}

The rest of the paper is organized as follows. The system model is presented in Section~\ref{sec:sys_model}. The proposed pairing scheme is discussed in Section \ref{sec:pairing} and the general lattice code based transmissions with the proposed pairing scheme are presented in Section~\ref{sec:signal}. The common rate and the sum rate for a FDF MWRN with the proposed scheme is derived in Section \ref{performance}. The average SER is derived in Section \ref{error}. The numerical and simulation results for verification of the analytical solutions are provided in Section \ref{sec:results}. Finally, conclusions are provided in Section \ref{conclusion}.

Throughout this paper, we use the following notations: $\hat{(\cdot)}$ denotes the
estimate of a message, $\dhat{(\cdot)}$ denotes that the message is
estimated for the second time, $\mid\cdot\mid$ denotes absolute
value of a complex variable, $\|\cdot\|$ denotes Euclidean norm, $\arg{(\cdot)}$ denotes the argument,
$\max{(\cdot)}$ denotes the maximum value, $\min{(\cdot)}$ denotes
the minimum value, $E[\cdot]$ denotes the expected value of a random
variable, $\lfloor\cdot\rfloor$ denotes the floor operation,
$\log(\cdot)$ denotes logarithm to the base two and $Q(\cdot)$ is
the Gaussian Q-function.

\section{System Model}\label{sec:sys_model}
\begin{figure}
\centering
  \includegraphics[width=0.95\textwidth]{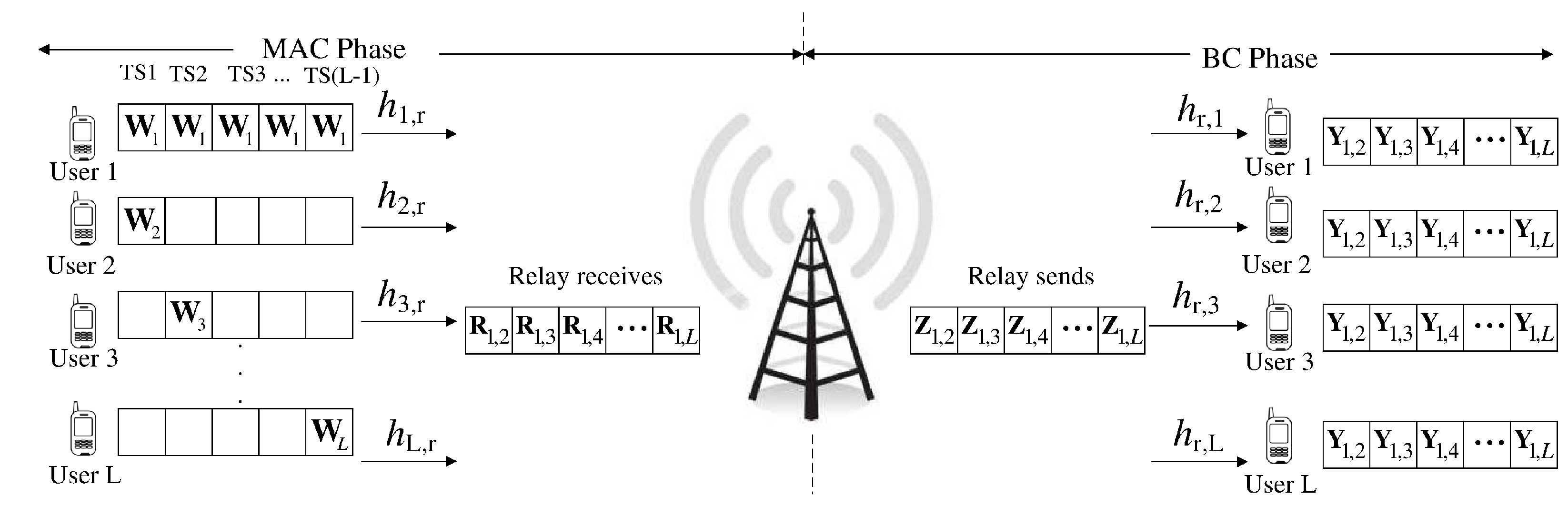}
  \caption{System model for an $L$-user multi-way relay network
(MWRN), where the users exchange information with each other via the
relay $R$. Here, `TS' means time slot and user $1$ is considered to
be the common user (for illustration purpose).} \label{fig:Fig1}
\end{figure}

We consider an $L$-user MWRN, where all the users exchange their
information with each other through a single relay, as illustrated
in Fig. \ref{fig:Fig1}. In this setup, a pair of users communicate
with each other at a time, while, the remaining users are silent. We
assume that the users transmit in a half-duplex manner and they do
not have any direct link in between them. The information exchange
takes place in two phases$-$multiple access and broadcast
phase$-$each comprising $L-1$ time slots for an $L$-user MWRN
\cite{Ong:2010}. In the \textit{multiple access phase}, the users
transmit their data in a pairwise manner. In the \textit{broadcast
phase}, the relay broadcasts the decoded network coded message to
all users. After $2(L-1)$ time slots, all users have the network
coded messages corresponding to each user pair and then they utilize
self information to extract the messages of all the other users. We
refer to these $2(L-1)$ time slots in the two phases as one
\textit{time frame}. That is, in each time frame, each user
transmits a message packet of length $T$ and the relay transmits
$(L-1)$ message packets, each of length $T$. Thus, a total of
$(2L-1)$ message packets are communicated in an entire time frame.
We choose the index for time slot and time frame as $t_s$ and $t_f$,
respectively, and the message index as $t$ where, $t_s\in[1,L-1]$,
$t\in[1,T]$ and $t_f\in[1,F]$, where, $F$ is the total number of
time frames. The transmission power of each user is $P$, whereas,
the transmission power of the relay is $P_r$. At the $t_f^{th}$ time
frame and the $t_s^{th}$ time slot, the channel from the $j^{th}$
user to the relay is denoted by $h_{j,r}^{t_s,t_f}$ and the channel
from the relay to the $j^{th}$
user by $h_{r,j}^{t_s,t_f}$, where $j\in[1,L]$. 
We make the following assumptions regarding the
channels:

\begin{itemize}
\item The channels are assumed to be block Rayleigh fading channels, which remain
constant during one message packet transmission in a certain time
slot in a certain multiple access or broadcast phase. The channels
in different time slots (e.g., $h_{1,r}^{1,1}$ and $h_{1,r}^{2,1}$) and
different time frames (e.g., $h_{1,r}^{1,1}$ and $h_{1,r}^{1,2}$) are considered to be independent. Also, the channels from users to the relay (e.g., $h_{j,r}^{t_s,t_f}$) and the channels
from the relay to users (e.g., $h_{r,j}^{t_s,t_f}$) are reciprocal.
\item The fading channel coefficients are zero mean
complex-valued Gaussian random variables with variances
$\sigma^{2}_{h_{j,r}}=\sigma^{2}_{h_{r,j}}$.

\item The perfect instantaneous channel state information (CSI) of all users is available to the relay. The users have access to the self CSI only, which has been assumed in many research works~\cite{MJu:2010,Louie-2010,Zhao:2011}.
\item Perfect channel phase synchronization is assumed because physical layer network coding requires that the signals arrive at the relay with the same phase and this allows benchmark performance to be determined \cite{Zhang-2006,Cui:2009}.
\end{itemize}

We consider the following three different channel scenarios in this work:

\begin{enumerate}
\item \emph{Equal average channel gain scenario:} All the channels from the relay to the users and the users
to the relay have equal average channel gain, which remain fixed for
all time frames. That is, $E[\mid h_{1,r}^{t_s,t_f}\mid^2]=E[\mid
h_{2,r}^{t_s,t_f}\mid^2]=...=E[\mid h_{L,r}^{t_s,t_f}\mid^2]$.

\item \emph{Unequal average channel gain scenario:} All the channels from the relay to the users and the users
to the relay have unequal average channel gains which remain fixed
for all the time frames. That is, $E[\mid
h_{1,r}^{t_s,t_f}\mid^2]\neq E[\mid
h_{2,r}^{t_s,t_f}\mid^2]\neq...\neq E[\mid h_{L,r}^{t_s,t_f}\mid^2]$
and $E[\mid h_{j,r}^{t_s,1}\mid^2]=E[\mid
h_{j,r}^{t_s,2}\mid^2]=...=E[\mid h_{j,r}^{t_s,F}\mid^2]$. 

\item \emph{Variable average channel gain scenario:} All the channels from the relay to the users and the users
to the relay have unequal average channel gains and the channel
conditions change after a block of $T'_f$
($T'_f<F$) time frames. That is, $E[\mid
h_{1,r}^{t_s,t_f}\mid^2]\neq E[\mid
h_{2,r}^{t_s,t_f}\mid^2]\neq...\neq E[\mid h_{L,r}^{t_s,t_f}\mid^2]$
and $E[\mid h_{j,r}^{t_s,aT'_f+1}\mid^2]=E[\mid
h_{j,r}^{t_s,aT'_f+2}\mid^2]=...=E[\mid
h_{j,r}^{t_s,(a+1)T'_f}\mid^2]$ for $j\in[1,L]$ and $0\leq a\leq
\frac{F}{T'_f}-1$, where $T'_f$ is the number of time frames after
which the unequal average channel gains change.
\end{enumerate}

The above scenarios can model a wide variety of practical channel
scenarios. For example, the equal average channel gain scenario is
applicable to satellite communications, where the users are
equidistant from the relay. The unequal average channel gain
scenario is applicable to fixed users (e.g., located at home or
workplace) in a network, where the users' distances from the relay
are unequal but remain fixed. The variable average channel gain
scenario is applicable to mobile users in a network, where the
users' distances from the relay are unequal and vary due to user
mobility.

\section{Proposed Pairing Scheme for MWRN} \label{sec:pairing}
In this section, we propose a new pairing scheme for user pair
formation in the multiple access phase (illustrated in Fig.
\ref{fig:Fig1}) which is defined by the following set of
principles:

\begin{description}
\item[P1] The common user is selected by the relay to be the user that has the best average channel gain in the system.
\item[P2] The common user's index is broadcast by the relay prior to each multiple access phase.
This common user transmits in all the time slots in the multiple
access phase and the other users take turns to form a pair with
this common user.
\item[P3] The common user is kept fixed for all the time slots within a certain time frame. After some time frames, the common user might change depending upon the changing channel conditions. 
\end{description}

The proposed pairing scheme allows the best channel in the system to
contribute towards the error-free detection of each user's message,
which would not be possible if the common user is chosen without
considering the channel conditions, as in
\cite{Ong:2010,Noori:2012}. Note that taking channel state information into account is a well established design principle in wireless communication systems \cite{Alouini:2000}.

In the propsed scheme, since the common user is involved in all the transmissions in the multiple access phase, an issue of transmission fairness arises. In the context of the proposed scheme, on average, each user should transmit the same number of times (equivalently consume the same amount of power overall). We propose to achieve transmission fairness for the three channel scenarios, considered in this work, in the following manner

\begin{enumerate}
\item \emph{Equal average channel gain scenario:} In this scenario, 
to maintain transmission fairness among the users, we randomly select a different common user in each time frame so that, on average, every user gets the opportunity to become the common user.

\item \emph{Unequal average channel gain scenario:} In this scenario, 
the common user's transmission power must be scaled by $(L-1)$, since it transmits $(L-1)$ times, whereas, other users transmit only once.

\item \emph{Variable average channel gain scenario:} In this scenario, 
during each time frame, the user with the best average channel gain is chosen as the common user and this process is repeated for every time frame so that, on average, every user with changing channel conditions, gets the opportunity to become the common user.

\end{enumerate}

\section{Signal Transmissions With the Proposed Pairing Scheme}\label{sec:signal}

In this section, we discuss the general lattice code based transmissions with the proposed pairing scheme in a MWRN. We denote the $i^{th}$ user as the common user and the $\ell^{th}$ user as the other users, where, $i,\ell\in[1,L]$ and $\ell\neq i$. For the rest of this paper, we consider message exchange within a certain time frame and choose to omit the superscript $t_f$ from the symbols for simplifying the notations.

\subsection{Preliminaries on Lattice Codes}
As our proposed pairing scheme is based on lattice codes, we first
present the definitions of some primary operations on lattice codes,
which we have used in the later subsections. Our notations for
lattice codes follow those of \cite{Ma:2013,Ong:2012}. Further
details on lattice codes are available in
\cite{Erez:2004,Nazer:2011,Nam:2011,YSong:2013}.

An $N$-dimensional lattice is a discrete subgroup of the
$N$-dimensional complex field $\mathbb{C}^N$ under the normal vector
addition operation and can be expressed as
\cite{Ma:2013,YSong:2013}:
\begin{equation}\label{eq:latiice_def}
\Lambda=\{\lambda=\mathbf{G}_\Lambda c: c\in \mathbb{Z}^N\}
\end{equation}

\noindent where, $\mathbf{G}_\Lambda\in\mathbb{C}^{N\times N}$ is
the generator matrix corresponding to the lattice $\Lambda$ and
$\mathbb{Z}$ is the set of integers.

\begin{itemize}
\item The nearest neighbour lattice quantizer maps a point
$\mathbf{x}\in\mathbb{C}^N$ to a nearest lattice point
$\mathbf{\lambda}\in\Lambda$ in Euclidean distance \cite{Ma:2013}. That is,
\begin{equation}\label{eq:quantizer}
Q_\Lambda(\mathbf{x})=\arg\min_\mathbf{\lambda}\| \mathbf{x}-\mathbf{\lambda}\|^2
\end{equation}
\item The modulo-$\Lambda$ operation is defined by
$\mathbf{x}\mod\Lambda=\textbf{x}-Q_\Lambda(\textbf{x})$
\cite{Nam:2011,Erez:2004,Ong:2012,YSong:2013}.
\item The Voronoi region $\mathcal{V}(\Lambda)$ denotes
the set of all points in the $N$-dimensional complex field
$\mathbb{C}^N$, which are closest to the zero vector \cite{Ma:2013}, i.e.,
\begin{equation}\label{eq:voronoi}
\mathcal
{V}(\Lambda)=\{\mathbf{x}\in\mathbb{C}^N:Q_\Lambda(\mathbf{x})=\mathbf{0}\},
\end{equation}
\item $\psi(\cdot)$ denotes the mapping of messages from a finite dimensional field to lattice points,
i.e., $\psi(\mathbf{w})\in \Lambda$, where $\mathbf{w}$ is a message from a finite dimensional field.
\item A coarse lattice $\Lambda$ is nested in a fine lattice $\Lambda_f$, i.e., $\Lambda\subseteq\Lambda_f$, so that the messages mapped into fine lattice points remain in the voronoi region of the coarse lattice.
\item The dither vectors $\mathbf{d}$ are generated
independently from a uniform distribution over the fundamental
Voronoi region $\mathcal V(\Lambda)$.
\end{itemize}
\subsection{Multiple Access Phase}

In this phase, the common user and one other user transmit simultaneously using FDF based on lattice codes and the relay receives the sum of the signals, i.e., at the $(\ell-1)^{th}$ time slot, users $i$ and $\ell$ transmit simultaneously.

\subsubsection{Communication Protocol at the Users}

In a certain time frame, the message packet of the $\ell^{th}$ user
is denoted by
\begin{equation}
\textbf{W}_{\ell}^{t_s}=\left\{\begin{array}{ll}\{W_\ell^{t_s,1},W_\ell^{t_s,2},...,
W_\ell^{t_s,T}\} &\mbox {$t_s=\ell-1$}\\ 0 &\mbox
{$t_s\neq\ell-1$},\end{array}\right.\end{equation}

\noindent where, the elements $W_\ell^{t_s,t}$ are generated
independently and uniformly over a finite field. Similarly, the
message packet of the $i^{th}$ user is given by
$\textbf{W}_{i}=\{W_i^{t_s,1},W_i^{t_s,2},..., W_i^{t_s,T}\}$ for
$t_s\in[1,L-1]$.

During a certain time frame, in the $t_s=(\ell-1)^{th}$ time slot,
the $i^{th}$ user and the $\ell^{th}$ user transmit their messages
using lattice codes
$\textbf{X}_i=\{X_i^{t_s,1},X_i^{t_s,2},...,X_i^{t_s,T}\}$ and
$\textbf{X}_\ell^{t_s}=\{X_{\ell}^{t_s,1},X_{\ell}^{t_s,2},...,X_{\ell}^{t_s,T}\}$,
respectively, which can be given by \cite{Ong:2012,Gunduz:2013}:
\begin{subequations}\label{eq:lattice_code}
\begin{align}
X_{i}^{t_s,t}&=(\psi(W_i^{t_s,t})+d_i)\mod
\Lambda, \\
X_{\ell}^{t_s,t}&=(\psi(W_\ell^{t_s,t})+d_\ell)\mod\Lambda,
\end{align}
\end{subequations}

\noindent where, $d_i$ and $d_\ell$ are the
dither vectors for the $i^{th}$ and the $\ell^{th}$ user. The dither vectors are generated at the users and transmitted to the relay
prior to message transmission in the multiple access phase
\cite{Ma:2013}. 

\subsubsection{Communication Protocol at the Relay} The relay
receives the signal
$\textbf{R}_{i,\ell}^{t_s}=\{r_{i,\ell}^{t_s,1},r_{i,\ell}^{t_s,2},...,r_{i,\ell}^{t_s,T}\}$,
where
\begin{equation}\label{eq:received}
r_{i,\ell}^{t_s,t}=\sqrt{P}h_{i,r}^{t_s}X_{i}^{t_s,t}+\sqrt{P}h_{\ell,r}^{t_s}X_{\ell}^{t_s,t}+n_1,
\end{equation}

\noindent where $n_{1}$ is the zero mean complex AWGN at the relay with noise variance $\sigma^2_{n_1}=\frac{N_0}{2}$ per dimension and $N_0$ is the noise power. 

\subsection{Broadcast Phase}

In this phase, the relay broadcasts the decoded network
coded message and each user receives it.

\ifCLASSOPTIONpeerreview
\vspace{-0.1cm}
\fi

\subsubsection{Communication Protocol at the Relay}
The relay scales the received signal with a scalar coefficient
$\alpha$ \cite{Nazer:2011} and removes the dithers $d_i,d_\ell$
scaled by $\sqrt{P}h_{i,r}^{t_s}$ and $\sqrt{P}h_{\ell,r}^{t_s}$,
respectively. The resulting signal is given by
{\fontsize{11}{13.2}\selectfont
\begin{align}\label{eq:scaling}
X_r^{t_s,t}&=[\alpha
r_{i,\ell}^{t_s,t}-\sqrt{P}h_{i,r}^{t_s}d_i-\sqrt{P}h_{\ell,r}^{t_s}
d_\ell]\mod\Lambda\nonumber\\
&=[\sqrt{P}h_{i,r}^{t_s}X_i^{t_s,t}+\sqrt{P}h_{\ell,r}^{t_s}
X_\ell^{t_s,t}+(\alpha-1)\sqrt{P}(h_{i,r}^{t_s}X_i^{t_s,t}+
h_{\ell,r}^{t_s}X_\ell^{t_s,t})+\alpha
n_1-\sqrt{P}h_{i,r}^{t_s}d_i-\sqrt{P}h_{\ell,r}^{t_s}
d_\ell]\nonumber\mod\Lambda\nonumber\\
&=[\sqrt{P}h_{i,r}^{t_s}\psi(W_i^t)+\sqrt{P}h_{\ell,r}^{t_s}
\psi(W_\ell^{t_s,t})+n]\mod\Lambda,
\end{align}\par
}
\noindent where, $n=(\alpha-1)\sqrt{P}(h_{i,r}^{t_s}X_i^t+
h_{\ell,r}^{t_s}X_\ell^{ts,t})+\alpha n_1$ and $\alpha$ is chosen to
minimize the noise variance \cite{Nam:2011,Erez:2004}. 

The relay decodes the signal in \eqref{eq:scaling} with a lattice
quantizer \cite{Erez:2004,Nazer:2011} to obtain an estimate
$\hat{\textbf{V}}_{i,\ell}^{t_s}=\{\hat{V}_{i,\ell}^{t_s,1},\hat{V}_{i,\ell}^{t_s,2},...,\hat{V}_{i,\ell}^{t_s,T}\}$
which is a function of the messages $\textbf{W}_i$ and
$\textbf{W}_\ell^{t_s}$. Since, for sufficiently large $N$,
Pr$(n\notin\mathcal{V})$ approaches zero \cite{Nazer:2011},
$\hat{\textbf{V}}_{i,\ell}^{t_s}=(\psi(\textbf{W}_{i})+\psi(\textbf{W}_{\ell}^{t_s}))\mod
\Lambda$. The relay then adds a dither $d_r$ with the network coded
message which is generated at the relay and broadcast to the users
prior to message transmission in the broadcast phase \cite{Ma:2013}.
Then it broadcasts the resulting message using lattice codes, which
is given as
$\textbf{Z}_{i,\ell}^{t_s}=\{Z_{i,\ell}^{t_s,1},Z_{i,\ell}^{t_s,2},...,Z_{i,\ell}^{t_s,T}\}$,
where
$Z_{i,\ell}^{t_s,t}=(\hat{{V}}_{i,\ell}^{t_s,t}+d_r)\mod\Lambda$.


\subsubsection{Communication Protocol at the Users}
The $j^{th}$ user receives
$\textbf{Y}_{i,\ell}^{t_s}=\{Y_{i,\ell}^{t_s,1},Y_{i,\ell}^{t_s,2},...,Y_{i,\ell}^{t_s,T}\}$,
where
\begin{equation}\label{eq:dfdetect}
    Y_{i,\ell}^{t_s,t}=\sqrt{P_r}h_{r,j}^{t_s}Z_{i,\ell}^{t_s,t}+n_{2},
\end{equation}

\noindent and $n_{2}$ is the zero mean complex AWGN at the user with
noise variance $\sigma^2_{n_2}=\frac{N_0}{2}$ per dimension. 
At the end of the broadcast phase, the $j^{th}$ user scales the
received signal with a scalar coefficient $\beta_j$ and removes the
dithers $d_r$ multiplied by $\sqrt{P_r}h_{r,j}$. The resulting signal
is
%
{\fontsize{11}{13.2}\selectfont
\begin{align}\label{eq:rec_scaled}
[\beta_jY_{i,\ell}^{t_s,t}-\sqrt{P_r}h_{r,j}^{t_s}d_r]\mod\Lambda =
&[\sqrt{P_r}h_{r,j}^{t_s}\hat{V}^{t_s,t}_{i,\ell}+(\beta_j-1)\sqrt{P_r}h_{r,j}^{t_s}\hat{V}^{t_s,t}_{i,\ell}+\beta_jn_2]\mod\Lambda  \nonumber \\
&=[\sqrt{P_r}h_{r,j}^{t_s}\hat{V}_{i,\ell}^{t_s,t}+n']\mod\Lambda,
\end{align}\par}

\noindent where,
$n'=\sqrt{P_r}h_{r,j}^{t_s}(\beta_j-1)\hat{V}^{t_s,t}_{i,\ell}+\beta_jn_2$
and $\beta_j$ is chosen to minimize the noise variance
\cite{Ma:2013}. The users then detect the received signal with a
lattice quantizer \cite{Ma:2013} and obtain the estimate
$\dhat{\textbf{V}}_{i,\ell}^{t_s}=(\psi(\textbf{W}_{i})$
$+\psi(\textbf{W}_{\ell}^{t_s}))\mod\Lambda$, assuming that the
lattice dimension is large enough such that
Pr$(n'\notin\mathcal{V})$ approaches zero. After decoding all the
network coded messages, each user performs message extraction of
every other user by canceling self information.
\ifCLASSOPTIONpeerreview
\vspace{-0.1cm}
\fi
\subsubsection{Message Extraction at the Common User}
For the common user ($i^{th}$ user), this message extraction
involves simply subtracting the lattice point corresponding to its
own message from the lattice network coded messages
$\dhat{\textbf{V}}_{i,\ell}^{t_s}$. The process can be shown as %
%
\begin{align}
\psi(\hat{\textbf{W}}_{\ell}^{t_s})&=(\dhat{\textbf{V}}_{i,\ell}^{t_s}-
\psi(\textbf{W}_{i})) \mod\Lambda, \quad
\ell\in[1,L], \ell\neq i.
\end{align}

\ifCLASSOPTIONpeerreview
\vspace{-0.1cm}
\fi
\subsubsection{Message Extraction at the Other Users}
For other users, the process is different from the common user. At
first, the $\ell^{th}$ user subtracts the scaled lattice point
corresponding to its own message, i.e., $\psi(\textbf{W}_{\ell}^{t_s})$ from the network
coded message received in the $(\ell-1)^{th}$ time slot (i.e.,
$\dhat{\textbf{V}}_{i,\ell}^{t_s}$) and extracts the message of the
$i^{th}$ user as $\psi(\hat{\textbf{W}}_{i})$.
After that, it utilizes the extracted message of the $i^{th}$ user
to obtain the messages of other users in a similar manner. The
message extraction process in this case can be shown as
%
\begin{align}\label{eq:otherextract}
\psi(\hat{\textbf{W}}_{i})=(\dhat{\textbf{V}}_{i,\ell}^{t_s}-
\psi(\textbf{W}_{\ell}^{t_s})) \mod\Lambda,
\psi(\hat{\textbf{W}}_{m}^{t_s})=(\dhat{\textbf{V}}_{i,m}^{t_s}-
\psi(\hat{\textbf{W}}_{i}))\mod\Lambda,\quad m\in[1,L],m\neq i,\ell.
\end{align}
%
\section{Common Rate and Sum Rate Analysis}\label{performance}
In this section, we investigate common rate and sum rate of
the MWRN with the proposed pairing scheme. We first analyze the SNR
of each user pair in a MWRN and use these results to obtain
expressions for the achievable rates. For the rest of this paper, we
simplify the notations by omitting the time slot superscript $t_s$.

\subsection{SNR analysis}\label{SNR:df}
In a FDF MWRN, the decoding operation is performed after both the
multiple access phase and the broadcast phase. Thus, we need to
consider the SNR at the users and the SNR at the relay, separately.
%

\subsubsection{SNR at the Users}
The SNR at the users have the same expressions for all the three
pairing schemes. The signal transmission from the relay to any user
$j\in[1,L]$ is the same as that in a point-to-point fading channel.
Thus, the SNR of the $m^{th} (m\in [1,L])$ user's signal received at
the $j^{th}$ user is given by:
\begin{equation}\label{eq:snrlatreceived}
\gamma_{j}=\frac{P_r\mid h_{r,j}\mid^2}{\mid\beta_j\mid^2
N_0+P_r\mid\beta_j-1\mid^2\mid h_{r,j}\mid^2}.
\end{equation}
\noindent where, the numerator represents power of the signal part in \eqref{eq:rec_scaled} and the denominator represents the power of the noise term $n'$ in \eqref{eq:rec_scaled}.

\subsubsection{SNR at the Relay}

In a FDF MWRN based on lattice coding with the proposed pairing
scheme, the SNR of the received signal at the relay can be obtained from
\eqref{eq:scaling} as
\begin{equation}\label{eq:snrlatticenp}
\gamma_{r}(i,\ell)=\frac{P\min(\mid h_{i,r}\mid^2,\mid
h_{\ell,r}\mid^2)}{\mid\alpha\mid^2 N_0+P\mid\alpha-1\mid^2 (\mid
h_{i,r}\mid^2+\mid h_{\ell,r}\mid^2)},
\end{equation}

\noindent where, the numerator represents the power of the signal
part (i.e., $\sqrt{P}h_{i,r}^{t_s}\psi(W_i^t)+\sqrt{P}h_{\ell,r}^{t_s}
\psi(W_\ell^{t_s,t})$ in \eqref{eq:scaling}) and the denominator represents the power of
the noise terms $n$ in \eqref{eq:scaling}.

For the pairing scheme in \cite{Ong:2010}, the SNR received at the
relay can be expressed as
\begin{equation}\label{eq:snrlatticeong}
\gamma_{r}(i)=\frac{P\min(\mid h_{\ell,r}\mid^2,\mid
h_{\ell+1,r}\mid^2)}{\mid\alpha\mid^2 N_0+P\mid\alpha-1\mid^2 (\mid
h_{\ell,r}\mid^2+\mid h_{\ell+1,r}\mid^2)}.
\end{equation}

Similarly, for the pairing scheme in \cite{Noori:2012}, the SNR at
the relay is given by
\begin{equation}\label{eq:snrlatticenoori}
\gamma_{r}(i)=\frac{P\min(\mid h_{\ell,r}\mid^2,\mid
h_{L-\ell+2,r}\mid^2)}{\mid\alpha\mid^2 N_0+P\mid\alpha-1\mid^2
(\mid h_{\ell,r}\mid^2+\mid h_{L-\ell+2,r}\mid^2)}.
\end{equation}

Note that \eqref{eq:snrlatticenp}, \eqref{eq:snrlatticeong} and
\eqref{eq:snrlatticenoori} have the same form and differ in the
indices of the channel coefficients, which is determined by the
pairing scheme.

\subsection{Common Rate}\label{capacity}

Common rate indicates the maximum possible information rate
of the system that can be exchanged with negligible error. It can be
a useful metric for the systems where all the users have the same
amount of
information to exchange \cite{Ong:2012}. 

Assuming lattice codes with sufficiently large dimensions are
employed, the common rate for an $L$-user FDF MWRN is given
by \cite{Ong:2010,Noori:2012}
\begin{equation}\label{eq:Rc}
R_{c}=\frac{1}{L-1}\min_{\ell-1\in[1,L-1]}\{R_{c,\ell-1}\},
\end{equation}

\noindent where, the factor $\frac{1}{L-1}$ is due to the fact that
the complete message exchange requires $L-1$ time slots and
$R_{c,\ell-1}$ is the achievable rate in the $(\ell-1)^{th}$ time
slot, given by
\begin{equation}\label{eq:R}
R_{c,\ell-1}=\min\{R_{M,\ell-1},R_{B,\ell-1}\},
\end{equation}

\noindent where, $R_{M,\ell-1}$ and $R_{B,\ell-1}$ are the maximum
achievable rates at the $(\ell-1)^{th}$ time slot during the
multiple access phase and the broadcast phase, respectively. Next,
we derive the upper bounds on the maximum achievable rates in the
multiple access and broadcast phases.

\begin{theorem}\label{th:cap} For the proposed pairing scheme in a FDF MWRN, the maximum achievable rate during the $(\ell-1)^{th}$ time slot in
the multiple access phase is upper bounded by
\begin{equation}\label{eq:Rm}
R_{M,\ell-1}\leq \frac{1}{2}\log\left(\min\left(\frac{\mid
h_{i,r}\mid^2}{\mid h_{i,r}\mid^2+\mid h_{\ell,r}\mid^2}+\frac{P\mid
h_{i,r}\mid^2}{N_0},\frac{\mid h_{\ell,r}\mid^2}{\mid
h_{i,r}\mid^2+\mid h_{\ell,r}\mid^2}+\frac{P\mid
h_{\ell,r}\mid^2}{N_0}\right)\right),
\end{equation}

\noindent and the maximum achievable rate during the $(\ell-1)^{th}$
time slot in the broadcast phase is upper bounded by
\begin{equation}\label{eq:Rb}
R_{B,\ell-1}\leq \frac{1}{2}\log\left(1+\frac{\min_{j\in[1,L]}\mid
h_{j,r}\mid^2P_r}{N_0}\right).
\end{equation}
\end{theorem}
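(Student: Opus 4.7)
My plan is to derive each bound by identifying the effective SNR after the lattice mod-and-decode operation and then invoking the standard nested-lattice achievable-rate formula. I will treat the multiple access and broadcast bounds separately, optimizing the relay scalar $\alpha$ for the former and the user-side scalars $\beta_j$ for the latter.

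For the multiple access bound \eqref{eq:Rm}, I start from the post-processing signal \eqref{eq:scaling}, whose effective noise $n=(\alpha-1)\sqrt{P}(h_{i,r}X_i+h_{\ell,r}X_\ell)+\alpha n_1$ has variance $|\alpha-1|^2 P(|h_{i,r}|^2+|h_{\ell,r}|^2)+|\alpha|^2 N_0$ under the unit-power lattice codeword assumption. Because FDF lattice decoding of the modulo sum $(\psi(W_i)+\psi(W_\ell))\bmod\Lambda$ is constrained by the weaker of the two constituent messages, the effective per-user signal power entering the SNR is $P\min(|h_{i,r}|^2,|h_{\ell,r}|^2)$, recovering $\gamma_r(i,\ell)$ in \eqref{eq:snrlatticenp}. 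Differentiating the variance with respect to $\alpha^*$ and setting it to zero yields the MMSE choice $\alpha_{opt}=P(|h_{i,r}|^2+|h_{\ell,r}|^2)/[P(|h_{i,r}|^2+|h_{\ell,r}|^2)+N_0]$, which collapses the denominator of \eqref{eq:snrlatticenp} to $PN_0(|h_{i,r}|^2+|h_{\ell,r}|^2)/[P(|h_{i,r}|^2+|h_{\ell,r}|^2)+N_0]$. The resulting MMSE SNR simplifies algebraically to $\tfrac{\min(|h_{i,r}|^2,|h_{\ell,r}|^2)}{|h_{i,r}|^2+|h_{\ell,r}|^2}+\tfrac{P\min(|h_{i,r}|^2,|h_{\ell,r}|^2)}{N_0}$, and the nested-lattice result of \cite{Erez:2004,Nam:2011} makes a rate of $\tfrac{1}{2}\log$ of this quantity achievable. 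Pushing the $\min$ back inside each of the two additive terms produces the two-argument min form appearing in \eqref{eq:Rm}.

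For the broadcast bound \eqref{eq:Rb}, the argument is analogous but simpler because every user sees only a point-to-point fading channel from the relay. Beginning from \eqref{eq:rec_scaled}, I minimize the variance of $n'$ over $\beta_j$ to obtain $\beta_{j,opt}=P_r|h_{r,j}|^2/(P_r|h_{r,j}|^2+N_0)$ and an MMSE SNR of $1+P_r|h_{r,j}|^2/N_0$ at user $j$. Each user therefore supports a per-link lattice-decoding rate of at most $\tfrac{1}{2}\log(1+P_r|h_{r,j}|^2/N_0)$; since the same coded message $\mathbf{Z}_{i,\ell}^{t_s}$ must be reliably recovered by every user $j\in[1,L]$, the broadcast-phase rate is bottlenecked by the worst user, giving the $\min_j$ in \eqref{eq:Rb}.

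The principal obstacle will be the multiple access step---specifically justifying the min/sum asymmetry in which the signal power uses $\min(|h_{i,r}|^2,|h_{\ell,r}|^2)$ while the self-interference uses $|h_{i,r}|^2+|h_{\ell,r}|^2$. I intend to handle this by invoking the nested-lattice construction explicitly: the relay decodes $(\psi(W_i)+\psi(W_\ell))\bmod\Lambda$ after the modulo operation, and for the quantizer not to lose information on either message, the common codebook rate must be dictated by the weaker of the two scaled lattice points. Once this structural fact is in place, the rest of the proof reduces to the MMSE optimization and the algebraic simplification sketched above.
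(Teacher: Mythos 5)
Your proposal follows essentially the same route as the paper's proof: both obtain the MMSE scalars $\alpha$ and $\beta_j$ by minimizing the effective noise variance in \eqref{eq:scaling} and \eqref{eq:rec_scaled}, substitute them into the SNR expressions \eqref{eq:snrlatticenp} and \eqref{eq:snrlatreceived}, and then appeal to the nested-lattice achievability machinery of Erez--Zamir and Nam et al.\ (the paper phrases this via the Poltyrev exponent and volume-to-noise ratio, while you invoke the resulting $\tfrac{1}{2}\log(\mathrm{SNR})$ rate directly, which is the same underlying result). Your explicit algebraic simplification of the MMSE SNR and the observation that the $\min$ distributes over the two additive terms are correct and in fact fill in details the paper omits "for brevity"; the one step you flag as delicate --- the $\min$-over-gains signal power against the sum-of-gains self-noise --- is handled at the same level of rigor as the paper, namely by deferring to the nested-lattice construction of \cite{Nam:2011}.
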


\begin{proof}
See Appendix A.
\end{proof}


Note that the common rate for the pairing scheme in \cite{Ong:2010}
and in \cite{Noori:2012} can be obtained by replacing the subscript
$i$ with $\ell-1$ and $L-\ell+2$, respectively in \eqref{eq:Rm} and
using \eqref{eq:Rb}, \eqref{eq:R} and \eqref{eq:Rc}.

Using Theorem \ref{th:cap} and substituting in \eqref{eq:R} and
\eqref{eq:Rc}, the average common rate for the proposed pairing
scheme can be given as in \eqref{eq:Rcnew},
\begin{figure*}
{\fontsize{11}{13.2}\selectfont
\begin{subequations}
\begin{align}
E[R_{c}]&\leq \frac{1}{2(L-1)}E\left[\log\left(\min\left(\frac{1}{1+
\frac{\mid h_{\ell,r}\mid^2}{\mid h_{i,r}\mid^2}}+\frac{P\mid
h_{i,r}\mid^2}{N_0},\frac{1}{1+\frac{\mid h_{i,r}\mid^2}{\mid
h_{\ell,r}\mid^2}}+\frac{P\mid h_{\ell,r}\mid^2}{N_0}\right)\right)\right]\label{eq:R1}\\
&\leq \frac{1}{2(L-1)}\log \left(E\left[\min\left(\frac{1}{1+
\frac{\mid h_{\ell,r}\mid^2}{\mid h_{i,r}\mid^2}}+\frac{P\mid
h_{i,r}\mid^2}{N_0},\frac{1}{1+\frac{\mid h_{i,r}\mid^2}{\mid
h_{\ell,r}\mid^2}}+\frac{P\mid h_{\ell,r}\mid^2}{N_0}\right)\right]\right)\label{eq:Jensen}\\
&\leq \frac{1}{2(L-1)}\log \left(\min\left( E\left[\frac{1}{1+
\frac{\mid h_{\ell,r}\mid^2}{\mid h_{i,r}\mid^2}}+\frac{P\mid
h_{i,r}\mid^2}{N_0}\right],E\left[\frac{1}{1+\frac{\mid
h_{i,r}\mid^2}{\mid
h_{\ell,r}\mid^2}}+\frac{P\mid h_{\ell,r}\mid^2}{N_0}\right]\right)\right)\label{eq:mean}\\
&= \frac{1}{2(L-1)}\log\left(\min\left(\frac{1}{1+ \frac{\sigma_{
h_{\ell,r}}^2}{\sigma_
{h_{i,r}}^2}}+\frac{P\sigma_{h_{i,r}}^2}{N_0},\frac{1}{1+\frac{\sigma_{
h_{i,r}}^2}{\sigma_{
h_{\ell,r}}^2}}+\frac{P\sigma_{h_{\ell,r}}^2}{N_0}\right)\right),\label{eq:Rcnew}
\end{align}
\end{subequations}\par}
\hrulefill
\end{figure*}
\noindent where the inequality in \eqref{eq:Jensen} holds from
Jensen's inequality and the inequality \eqref{eq:mean} comes from
the fact that $E[\min(A_1,A_2)]\leq E[A_1],E[A_2]$, where $A_1,A_2$ are independent
random variables.

Similarly, the average common rate for the pairing scheme in
\cite{Ong:2010} can be expressed as
{\fontsize{11}{13.2}\selectfont
\begin{equation}\label{eq:Rcong}
E[R_{c}]\leq \frac{1}{2(L-1)}\log\left(\min\left(\frac{1}{1+
\frac{\sigma_{ h_{\ell,r}}^2}{\sigma_
{h_{\ell-1,r}}^2}}+\frac{P\sigma_{h_{\ell-1,r}}^2}{N_0},\frac{1}{1+\frac{\sigma_{
h_{\ell-1,r}}^2}{\sigma_{
h_{\ell,r}}^2}}+\frac{P\sigma_{h_{\ell,r}}^2}{N_0}\right)\right),
\end{equation}
}

\noindent and the average common rate for the pairing scheme in
\cite{Noori:2012} can be given as
{\fontsize{11}{13.2}\selectfont
\begin{align}\label{eq:Rcnoori}
E[R_{c}]\leq \frac{1}{2(L-1)}&\log\left(\min\left(\frac{1}{1+
\frac{\sigma_{ h_{\ell-1,r}}^2}{\sigma_
{h_{L-\ell+2,r}}^2}}+\frac{P\sigma_{h_{L-\ell+2,r}}^2}{N_0},\frac{1}{1+\frac{\sigma_{
h_{L-\ell+2,r}}^2}{\sigma_{
h_{\ell-1,r}}^2}}+\frac{P\sigma_{h_{\ell-1,r}}^2}{N_0},\right.\right.\nonumber\\
&\left.\left. \frac{1}{1+ \frac{\sigma_{ h_{\ell,r}}^2}{\sigma_
{h_{L-\ell+2,r}}^2}}+\frac{P\sigma_{h_{L-\ell+2,r}}^2}{N_0},\frac{1}{1+\frac{\sigma_{
h_{L-\ell+2,r}}^2}{\sigma_{
h_{\ell,r}}^2}}+\frac{P\sigma_{h_{\ell,r}}^2}{N_0}\right)\right).
\end{align}\par
}

While \eqref{eq:Rcnew}$-$\eqref{eq:Rcnoori} do not provide tight upper
bounds on the average common rate, they allow an analytical
comparison of the proposed and existing pairing schemes. The main
results from the analytical comparison are summarized in the Propositions 1$-$3. Note that in
Section \ref{sec:results}, the actual expressions of the
instantaneous rates are averaged over a large number of channel
realizations to corroborate the insights presented in Propositions 1$-$3.

\begin{proposition}\label{prop:eq_cap}
The average common rate for the proposed pairing scheme and the pairing schemes in \cite{Ong:2010} and~\cite{Noori:2012} are the same for the equal average channel gain scenario.
\end{proposition}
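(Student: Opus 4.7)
The plan is to specialize the three upper-bound expressions in \eqref{eq:Rcnew}, \eqref{eq:Rcong} and \eqref{eq:Rcnoori} to the equal-channel-gain setting and observe that they collapse to the same closed form.

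First, I would invoke the equal average channel gain assumption from Section \ref{sec:sys_model}, setting $\sigma^2_{h_{j,r}} = \sigma^2$ for every $j \in [1,L]$. Under this substitution every variance ratio appearing inside the logarithms of \eqref{eq:Rcnew}--\eqref{eq:Rcnoori} becomes unity, so each term of the form $1/(1+\sigma^2_{h_{a,r}}/\sigma^2_{h_{b,r}})$ reduces to $1/2$, and every SNR-like term $P\sigma^2_{h_{\cdot,r}}/N_0$ reduces to $P\sigma^2/N_0$. Consequently every argument of the inner $\min$ in all three expressions becomes the identical quantity $\tfrac{1}{2} + P\sigma^2/N_0$, the $\min$ is trivially attained, and each of the three bounds collapses to the single scalar $\frac{1}{2(L-1)}\log\!\left(\tfrac{1}{2} + P\sigma^2/N_0\right)$.

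The step requiring the most care is arguing that equality of these upper bounds actually reflects equality of the underlying exact average common rates, since the bounds in \eqref{eq:Rcnew}--\eqref{eq:Rcnoori} were derived via Jensen's inequality and $E[\min(A_1,A_2)] \leq \min(E[A_1],E[A_2])$. To support this I would appeal to the permutation invariance of the i.i.d. channel vector $(h_{1,r},\ldots,h_{L,r})$ under the equal-gain assumption: the distribution of the per-slot rate $R_{c,\ell-1}$ depends only on the unordered pair of channel indices involved, so the three schemes generate marginally identically distributed per-slot rates. The main technical subtlety is that the joint dependency structure across time slots is not the same, since the proposed scheme induces a star-shaped pairing graph (all time slots share the common user), whereas \cite{Ong:2010} and \cite{Noori:2012} induce path-shaped graphs; a fully rigorous equality of $E[\min_\ell R_{c,\ell-1}]$ would require showing that this difference does not matter statistically. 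Following the convention in the paragraph preceding the proposition, I would close the argument by relying on the numerical corroboration in Section \ref{sec:results}, which averages the instantaneous rates over a large number of channel realizations.
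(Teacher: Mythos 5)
Your proposal is correct and takes essentially the same route as the paper: Appendix~B simply sets all variances equal so that \eqref{eq:Rcnew}, \eqref{eq:Rcong} and \eqref{eq:Rcnoori} coincide. Your additional caveat---that equality of the Jensen-type upper bounds does not by itself establish equality of the exact average rates---is a limitation the paper itself concedes by deferring to the numerically averaged instantaneous rates in Section~\ref{sec:results}, so no gap relative to the paper's own argument.
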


\begin{proof}
See Appendix B.
\end{proof}

\begin{proposition}\label{prop:uneq_cap}
The average common rate for the proposed pairing scheme is
larger than that of the pairing schemes in
\cite{Ong:2010} and \cite{Noori:2012} for the unequal average channel gain
scenario.
\end{proposition}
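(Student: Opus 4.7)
The plan is to substitute the unequal-gain assumption directly into the three upper bounds \eqref{eq:Rcnew}, \eqref{eq:Rcong}, and \eqref{eq:Rcnoori} and extract an explicit ordering of the resulting expressions. Let $i^*$ denote the index of the user with the largest average channel gain. By the pairing rule of Section~\ref{sec:pairing}, $i^*$ is the common user in the proposed scheme, so every pair entering \eqref{eq:Rcnew} contains $\sigma_{h_{i^*,r}}^2$; in the schemes of \cite{Ong:2010} and \cite{Noori:2012} the pair assignments are fixed by user labeling and in general do not contain $i^*$.

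First I would resolve each inner $\min$. For a pair with variances $\sigma_A^2 > \sigma_B^2$, a direct comparison of the two arguments shows that the one driven by the weaker variance $\sigma_B^2$ is strictly smaller, so the per-pair bound reduces to $\frac{1}{1+\sigma_A^2/\sigma_B^2}+\frac{P\sigma_B^2}{N_0}$. On the proposed side I would also substitute the common-user power scaling $P \mapsto P/(L-1)$ required by the transmission-fairness rule of Section~\ref{sec:pairing} for this scenario. The same pairwise reduction applies to \eqref{eq:Rcong} and to each of the four candidates inside the $\min$ in \eqref{eq:Rcnoori}.

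Second I would take the outer $\min$ over the $L-1$ pairs. Since the reduced per-pair expression is increasing in the weaker variance, the minimizing pair in every scheme contains the weakest user $j^* = \arg\min_j\sigma_{h_{j,r}}^2$. The proof then reduces to comparing one closed-form expression per scheme: for the proposed scheme, the partner of $j^*$ is $i^*$; for \cite{Ong:2010} it is the index-neighbor of $j^*$ with the larger variance; for \cite{Noori:2012} it is the partner determined by the $(\ell, L-\ell+1)$ rule. After inserting the common-user power scaling and re-identifying which argument of the proposed scheme's inner $\min$ is active, I would conclude by showing the resulting expression strictly dominates the analogous expressions for the two existing schemes whenever all variances are distinct.

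The chief obstacle is that a naive term-by-term comparison goes the wrong way: the factor $\frac{1}{1+\sigma_A^2/\sigma_B^2}$ is \emph{decreasing} in the partner variance $\sigma_A^2$, so pairing $j^*$ with the strongest user $i^*$ actually makes this factor smaller than pairing $j^*$ with a weaker partner as in \cite{Ong:2010} or \cite{Noori:2012}. Resolving this apparent inversion is the crux of the proof: once the transmission-fairness power scaling $P/(L-1)$ is inserted on the common-user side of \eqref{eq:Rcnew}, the active argument of the inner $\min$ switches to the one proportional to $P\sigma_{h_{i^*,r}}^2/((L-1)N_0)$, and strict dominance then follows because $\sigma_{h_{i^*,r}}^2$ is the largest variance in the network and exceeds the partner variances appearing in \eqref{eq:Rcong} and \eqref{eq:Rcnoori} by a margin sufficient to overcome the $1/(L-1)$ penalty for the range of $L$ considered.
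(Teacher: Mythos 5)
There is a genuine gap, and it sits exactly at the point you flag as the crux. You correctly observe that the naive comparison inverts: pairing the weakest user with the strongest partner shrinks the factor $\frac{1}{1+\sigma_A^2/\sigma_B^2}$, and once the common user's power is scaled by $1/(L-1)$ the dominant term of \eqref{eq:Rcnew} becomes (roughly) $\frac{P}{N_0}\min\bigl(\sigma_{h_{i^*,r}}^2/(L-1),\,\sigma_{h_{\ell,r}}^2\bigr)$. But your resolution --- that the active argument switches to the one proportional to $P\sigma_{h_{i^*,r}}^2/((L-1)N_0)$ and that $\sigma_{h_{i^*,r}}^2$ exceeds the other variances ``by a margin sufficient to overcome the $1/(L-1)$ penalty for the range of $L$ considered'' --- does not hold. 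Whether that argument is active depends on the sign of $\sigma_{h_{i^*,r}}^2-(L-1)\sigma_{h_{\ell,r}}^2$, which is not determined by the hypotheses; and when it \emph{is} active, the resulting bound $\approx P\sigma_{h_{i^*,r}}^2/((L-1)N_0)$ is \emph{smaller} than the corresponding term $\approx P\min_j\sigma_{h_{j,r}}^2/N_0$ of \eqref{eq:Rcong} precisely in the regime where it is active. The proposition places no restriction on $L$ or on the spread of the variances, so with only the $1/(L-1)$ downscaling the upper bound for the proposed scheme can end up \emph{below} that of \cite{Ong:2010}, and the argument fails.

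The ingredient you are missing is the second power normalization the paper introduces for a ``fair comparison'': because the average transmit power per user is $P$ in the proposed scheme but $\frac{2L-2}{L}P$ in the existing schemes, the paper replaces $P$ by $(2L-2)P$ throughout \eqref{eq:Rcnew}, yielding \eqref{eq:Rcunequal} with dominant terms $\frac{(2L-2)P\sigma_{h_{i,r}}^2}{(L-1)N_0}=\frac{2P\sigma_{h_{i,r}}^2}{N_0}$ and $\frac{(2L-2)P\sigma_{h_{\ell,r}}^2}{N_0}$. The paper then splits into the two cases $\sigma_{h_{i,r}}^2\gtrless(L-1)\sigma_{h_{\ell,r}}^2$ to decide which argument of the inner $\min$ is active, and in each case the factor $(2L-2)$ versus $1$ on the SNR term is what delivers the inequality against \eqref{eq:Rcong} and \eqref{eq:Rcnoori}; the sub-unity interference factors you worried about are then irrelevant. (The paper also does not bother resolving the outer $\min$ over $\ell$ or locating the weakest user --- it compares pairwise in both cases --- so that part of your plan, while reasonable, is more machinery than needed.) Without importing this $(2L-2)$ renormalization, your proof cannot be completed.
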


\begin{proof}
See Appendix B.
\end{proof}

\begin{proposition} \label{prop:var_cap}
The average common rate for the proposed pairing scheme is
practically the same as that of the pairing schemes in
\cite{Ong:2010} and \cite{Noori:2012}, for the variable
average channel gain scenario.
\end{proposition}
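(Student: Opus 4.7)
The plan is to reduce the variable average channel gain scenario to a block-wise reuse of Proposition \ref{prop:uneq_cap}, and then show that once the expectation is taken over the $F/T'_f$ blocks the per-block rate advantage of the proposed scheme becomes negligible relative to the common bottleneck that all three schemes share. In other words, I will argue that the difference inside the logarithm in \eqref{eq:Rcnew}, \eqref{eq:Rcong}, and \eqref{eq:Rcnoori} is an additive $O(1)$ term once the dominant $P\sigma_{\min}^2/N_0$ contribution is isolated, which then dissolves under the $1/(2(L-1))$ normalization and the block averaging.

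First, I would partition the $F$ time frames into $F/T'_f$ sub-blocks in each of which the average channel gains $\{\sigma_{h_{j,r}}^2\}$ are constant but unequal. Within a sub-block the common rate upper bounds \eqref{eq:Rcnew}, \eqref{eq:Rcong}, and \eqref{eq:Rcnoori} all collapse, after taking the minimum over the $L-1$ time slots, to an expression of the form $\frac{1}{2(L-1)}\log\!\bigl(P\sigma_{\min}^2/N_0 + c_{\text{scheme}}\bigr)$, where $\sigma_{\min}^2$ is the smallest average channel variance in that sub-block and $c_{\text{scheme}}\in[0,1]$ is a scheme-dependent residual coming from the $\frac{1}{1+\sigma^2/\sigma^2}$ summand. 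The key point is that $\sigma_{\min}^2$ is independent of the pairing rule, since the user with the smallest average channel gain must appear in at least one pair for every scheme.

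Next, I would average over sub-blocks. In the variable scenario the identity of the best and the worst user changes from block to block, and the proposed scheme refreshes its common user accordingly while the index-based pairings in \cite{Ong:2010} and \cite{Noori:2012} are effectively permuted with respect to the channel ordering. As a result, the random variable $\sigma_{\min}^2$ that dominates the logarithm has the same distribution across the three schemes, and $E[\log(P\sigma_{\min}^2/N_0 + c_{\text{scheme}})]$ differs across schemes only through $c_{\text{scheme}}$. Using monotonicity and concavity of $\log(\cdot)$, I would bound this residual by $\log(1 + 1/(P\sigma_{\min}^2/N_0))$, which vanishes in the high-SNR regime that dictates most of the curve and is further attenuated by the $1/(2(L-1))$ prefactor.

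The main obstacle will be quantifying what "practically the same" means, since unlike Proposition \ref{prop:uneq_cap} the proposed scheme still has a per-block non-negative edge. I would dispose of this by showing that the per-block gap $E[R_c^{\text{prop}}]-E[R_c^{\text{exist}}]$ is $O\!\bigl(1/(L-1)\bigr)$ and numerically small relative to the bottleneck term, with the aggregate gap averaged over the blocks of the variable scenario being dominated by its fluctuations rather than its bias. The Monte Carlo simulations of Section \ref{sec:results} would then be used to confirm that the residual is indeed indistinguishable in practice, completing the informal equality claimed by the proposition.
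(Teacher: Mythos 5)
Your proposal is correct and follows essentially the same route as the paper's Appendix~B argument: both isolate the common bottleneck term $P\sigma_{\min}^2/N_0$ (tied to the weakest user, who appears in some pair under every scheme) inside the logarithms of \eqref{eq:Rcnew}--\eqref{eq:Rcnoori}, and observe that the only scheme-dependent difference is the bounded residual $\frac{1}{1+\sigma^2/\sigma^2}\in[0,1]$, whose effect is negligible. Your explicit bound $\log\bigl(1+1/(P\sigma_{\min}^2/N_0)\bigr)$ on that residual and the block-wise averaging are slightly more formal than the paper's qualitative statement, but the underlying idea is identical.
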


\begin{proof}
See Appendix B.
\end{proof}

\subsection{Sum Rate}\label{sec:sum_rate}
The sum rate indicates the maximum throughput of the system. For a
FDF MWRN, the sum rate can be defined as the sum of the achievable
rates of all users for a complete round of information exchange.

\begin{theorem}\label{th:sum}
For the proposed pairing scheme in a FDF MWRN, the sum rate is given
by:
\begin{equation}\label{eq:Rs}
R_{s}= \frac{1}{2(L-1)}\sum_{\ell=1, \ell\neq
i}^{L}\left(\log\left(\frac{\mid h_{i,r}\mid^2}{\mid
h_{i,r}\mid^2+\mid h_{\ell,r}\mid^2}+\frac{P\mid
h_{i,r}\mid^2}{N_0}\right)+\log\left(\frac{\mid
h_{\ell,r}\mid^2}{\mid h_{i,r}\mid^2+\mid
h_{\ell,r}\mid^2}+\frac{P\mid h_{\ell,r}\mid^2}{N_0}\right)\right).
\end{equation}
\end{theorem}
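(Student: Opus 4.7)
The plan is to build Theorem 2 directly on top of the multiple-access-phase analysis already carried out in Theorem 1. In the $(\ell-1)^{th}$ multiple access slot, the common user $i$ and user $\ell$ form a two-user Gaussian MAC at the relay with the per-dimension effective noise derived in \eqref{eq:scaling}. The proof of Theorem 1 (Appendix A) yields two per-transmitter achievable rate bounds, namely the two arguments appearing inside the $\min$ in \eqref{eq:Rm}: one scales with $|h_{i,r}|^2$ and one with $|h_{\ell,r}|^2$. The common rate took the minimum because every pair had to sustain the rate simultaneously; for the sum rate I would instead \emph{sum} these two per-user contributions, since the sum rate measures total throughput across the pair.

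Concretely, I would proceed in three steps. First, extract from Appendix A the individual rate expressions
\[
R_{i,\ell-1}=\tfrac{1}{2}\log\!\left(\tfrac{|h_{i,r}|^2}{|h_{i,r}|^2+|h_{\ell,r}|^2}+\tfrac{P|h_{i,r}|^2}{N_0}\right),\qquad
R_{\ell,\ell-1}=\tfrac{1}{2}\log\!\left(\tfrac{|h_{\ell,r}|^2}{|h_{i,r}|^2+|h_{\ell,r}|^2}+\tfrac{P|h_{\ell,r}|^2}{N_0}\right),
\]
which are jointly attainable under the lattice-coded FDF scheme with the variance-minimizing choice of $\alpha$. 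Second, define the per-slot sum rate as $R_{s,\ell-1}=R_{i,\ell-1}+R_{\ell,\ell-1}$ and note that, as $\ell$ sweeps over $\{1,\ldots,L\}\setminus\{i\}$, this simultaneously accounts for the single transmission of each non-common user and the $(L-1)$ transmissions of the common user. Third, normalize: the complete round consumes $2(L-1)$ time slots (multiple access plus broadcast), so dividing the accumulated sum by $(L-1)$ (time-slot averaging) with the pre-existing $\tfrac{1}{2}$ factor absorbed in each log term produces the prefactor $\tfrac{1}{2(L-1)}$ that appears in \eqref{eq:Rs}.

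The main obstacle is two-fold. The first subtlety is arguing that the two per-user rates extracted from the MAC analysis are \emph{simultaneously} achievable rather than being distinct corner points of the MAC region requiring incompatible decoding orders; this is resolved by observing that both bounds arise from the same single-stage lattice decoding in \eqref{eq:scaling}, where each user's contribution is recovered through the modulo-$\Lambda$ operation using a common $\alpha$. The second subtlety is the absence of a broadcast-phase $\min$ in \eqref{eq:Rs}: the broadcast rate \eqref{eq:Rb} is not binding in the regime of interest (and can be made non-binding by choosing $P_r$ sufficiently large so that $\min_{j}|h_{r,j}|^2 P_r/N_0$ dominates the per-pair MAC rates), so the multiple-access-phase expression already captures the sum rate of the scheme. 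Once these two points are settled, the remainder of the argument is bookkeeping that collapses to \eqref{eq:Rs}.
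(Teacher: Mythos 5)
Your proposal is correct and follows essentially the same route as the paper's own (very brief) Appendix~C argument: identify the two per-user rate terms from the multiple-access-phase analysis of Theorem~1, argue that the broadcast phase is not the bottleneck, and sum the per-user contributions over the $L-1$ slots with the $\frac{1}{2(L-1)}$ normalization. The only minor divergence is in how the broadcast phase is dismissed --- the paper appeals to the observation that $\frac{\mid h_{i,r}\mid^2}{\mid h_{i,r}\mid^2+\mid h_{\ell,r}\mid^2}<1$ so that the multiple-access rate \eqref{eq:Rm} determines the per-slot rate in \eqref{eq:R}, whereas you invoke a large-$P_r$ regime; both are informal, and your version is otherwise more explicit about simultaneous achievability of the two per-user rates.
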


\begin{proof}
see Appendix C.
\end{proof}

Note that the sum rate for the pairing scheme in \cite{Ong:2010} and the
pairing scheme in \cite{Noori:2012} can be obtained by replacing the
subscript $i$ with $\ell-1$ and $L-\ell+2$, respectively in
\eqref{eq:Rs}. 
%

Using Theorem \ref{th:sum}, the average sum rate (averaged over all
channel realizations) for the proposed pairing scheme can be given
as in \eqref{eq:Rsavg}, using similar steps as in \eqref{eq:R1}, \eqref{eq:Jensen} and \eqref{eq:Rcnew}.
\begin{figure*}
{\fontsize{11}{13.2}\selectfont
\begin{align}
E[R_s]\leq
\frac{1}{2(L-1)}\sum_{\ell=1, \ell\neq
i}^{L}\left(\log\left(\frac{1}{1+\frac{\sigma^2_
{h_{\ell,r}}}{\sigma^2_{h_{i,r}}}}+\frac{P\sigma^2_{
h_{i,r}}}{N_0}\right)+\log\left(\frac{1}{1+\frac{\sigma^2_
{h_{i,r}}}{\sigma^2_{h_{\ell,r}}}}+\frac{P\sigma^2_
{h_{\ell,r}}}{N_0}\right)\right),\label{eq:Rsavg}
\end{align}\par}
\hrulefill
\end{figure*}

Similarly, the average sum rate for the pairing scheme in
\cite{Ong:2010} can be written as
\vspace{-12pt}
{\fontsize{11}{13.2}\selectfont
\begin{equation}\label{eq:Rsongavg}
E[R_s]\leq
\frac{1}{2(L-1)}\sum_{\ell=2}^{L}\left(\log\left(\frac{1}{1+\frac{\sigma^2_
{h_{\ell,r}}}{\sigma^2_{h_{\ell-1,r}}}}+\frac{P\sigma^2_{
h_{\ell-1,r}}}{N_0}\right)+\log\left(\frac{1}{1+\frac{\sigma^2_
{h_{\ell-1,r}}}{\sigma^2_{h_{\ell,r}}}}+\frac{P\sigma^2_
{h_{\ell,r}}}{N_0}\right)\right),
\end{equation}\par}

\noindent and the average sum rate for the pairing scheme in
\cite{Noori:2012} can be written as
{\fontsize{11}{13.2}\selectfont
\begin{align}\label{eq:Rsnooriavg}
E[R_s]&\leq \frac{1}{2(L-1)}\sum_{\ell=2}^{\lfloor
L/2\rfloor+1}\left(\log\left(\frac{1}{1+\frac{\sigma^2_
{h_{\ell-1,r}}}{\sigma^2_{h_{L-\ell+2,r}}}}+\frac{P\sigma^2_{
h_{L-\ell+2,r}}}{N_0}\right)+\log\left(\frac{1}{1+\frac{\sigma^2_
{h_{L-\ell+2,r}}}{\sigma^2_{h_{\ell-1,r}}}}+\frac{P\sigma^2_
{h_{\ell-1,r}}}{N_0}\right)\right)\nonumber\\
&+\sum_{\ell=\lfloor
L/2\rfloor+2}^{L}\left(\log\left(\frac{1}{1+\frac{\sigma^2_
{h_{\ell,r}}}{\sigma^2_{h_{L-\ell+2,r}}}}+\frac{P\sigma^2_{
h_{L-\ell+2,r}}}{N_0}\right)+\log\left(\frac{1}{1+\frac{\sigma^2_
{h_{L-\ell+2,r}}}{\sigma^2_{h_{\ell,r}}}}+\frac{P\sigma^2_
{h_{\ell,r}}}{N_0}\right)\right).
\end{align}\par}



\eqref{eq:Rsavg}$-$\eqref{eq:Rsnooriavg} provide upper bounds on the actual average sum rate and they
allow an analytical comparison of the proposed and existing pairing
schemes. The main results are summarized in Propositions 4$-$6. Note that similar to the case of common rate, in Section \ref{sec:results}, the actual expression for
the instantaneous sum rate in \eqref{eq:Rs} is averaged over a large
number of channel realizations to validate the insights presented in
the Propositions 4$-$6.

\begin{proposition}\label{prop:eq_sum}
The average sum rate of the proposed pairing scheme and the
pairing schemes in \cite{Ong:2010} and \cite{Noori:2012} are the
same for the equal average channel gain scenario.
\end{proposition}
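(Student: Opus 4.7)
The plan is to show that under the equal-average-gain assumption, all three sum-rate upper bounds in \eqref{eq:Rsavg}, \eqref{eq:Rsongavg} and \eqref{eq:Rsnooriavg} collapse to the same closed-form expression, independent of which pair of indices is selected in each time slot. Let $\sigma^2_{h_{k,r}}=\sigma^2$ for every $k\in[1,L]$. Then every ratio of the form $\sigma^2_{h_{a,r}}/\sigma^2_{h_{b,r}}$ appearing inside the logarithms equals $1$, so each of the two reciprocal terms reduces to $1/(1+1)=1/2$, and every scaled channel-gain term $P\sigma^2_{h_{a,r}}/N_0$ reduces to $P\sigma^2/N_0$. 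Consequently, every single $\log(\cdot)$ summand in any of the three expressions simplifies to $\log\!\bigl(\tfrac{1}{2}+\tfrac{P\sigma^2}{N_0}\bigr)$.

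Next, I would simply count summands. In \eqref{eq:Rsavg}, the outer sum runs over $\ell\in[1,L]\setminus\{i\}$ giving $L-1$ pairs, and each pair contributes two logarithms, for a total of $2(L-1)$ identical terms; dividing by $2(L-1)$ yields $\log\!\bigl(\tfrac{1}{2}+\tfrac{P\sigma^2}{N_0}\bigr)$. The same count works for \eqref{eq:Rsongavg}: the sum over $\ell\in[2,L]$ again produces $L-1$ pairs and $2(L-1)$ identical logarithmic terms. For \eqref{eq:Rsnooriavg}, the two partial sums together cover $\ell\in[2,\lfloor L/2\rfloor+1]$ and $\ell\in[\lfloor L/2\rfloor+2,L]$, which contribute $\lfloor L/2\rfloor$ and $L-1-\lfloor L/2\rfloor$ pairs respectively, again totalling $L-1$ pairs and $2(L-1)$ identical logarithms. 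After the $1/[2(L-1)]$ normalisation, all three schemes therefore evaluate to
\[
E[R_s]\leq \log\!\left(\tfrac{1}{2}+\tfrac{P\sigma^2}{N_0}\right),
\]
establishing the claim on the upper bound.

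Finally, to argue that this equality also holds for the true average sum rate (not just the bound), I would revisit the exact expression in Theorem \ref{th:sum}: under the equal-gain assumption, the channel coefficients $h_{k,r}$ are i.i.d.\ zero-mean complex Gaussian with the same variance $\sigma^2$. Hence, for any choice of indices in any of the three pairing rules, the pair $(h_{a,r},h_{b,r})$ inside the instantaneous sum-rate expression \eqref{eq:Rs} has the same joint distribution. Taking expectations, each of the $L-1$ summands produces the same scalar; summing and normalising gives a common value of $E[R_s]$ across all three schemes. The only real obstacle is a bookkeeping one, namely verifying that Noori's split summation contributes exactly $L-1$ distinct pairs, but once the index ranges $[2,\lfloor L/2\rfloor+1]$ and $[\lfloor L/2\rfloor+2,L]$ are unioned the count is immediate, with no case distinction required between even and odd $L$.
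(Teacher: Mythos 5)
Your proposal is correct and follows essentially the same route as the paper: the paper's proof simply observes that when all $\sigma^2_{h_{k,r}}$ are equal, the three bounds \eqref{eq:Rsavg}, \eqref{eq:Rsongavg} and \eqref{eq:Rsnooriavg} coincide, which is exactly your computation made explicit (including the correct count of $L-1$ pairs in Noori's split summation). Your additional distributional argument that the \emph{exact} average sum rate, not just the upper bound, is identical across schemes is a worthwhile strengthening the paper does not bother with, but it does not change the underlying approach.
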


\begin{proof}
See Appendix D.
\end{proof}

\begin{proposition}\label{prop:uneq_sum}
The average sum rate of the proposed pairing scheme is larger than
that of the pairing schemes in \cite{Ong:2010} and
\cite{Noori:2012} for the unequal average channel gain scenario.
\end{proposition}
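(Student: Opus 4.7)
The plan is to directly compare the upper bounds in \eqref{eq:Rsavg}, \eqref{eq:Rsongavg} and \eqref{eq:Rsnooriavg} through a convenient algebraic re-grouping. Each summand $\log\left(\frac{1}{1+\sigma^2_{h_{b,r}}/\sigma^2_{h_{a,r}}}+P\sigma^2_{h_{a,r}}/N_0\right)$ can be factored (by pulling $\sigma^2_{h_{a,r}}$ out of the argument) as $\log(\sigma^2_{h_{a,r}})+\log\left(\frac{1}{\sigma^2_{h_{a,r}}+\sigma^2_{h_{b,r}}}+P/N_0\right)$. Applying this identity pairwise, the upper bound under any scheme $X$ decomposes as $\frac{1}{2(L-1)}\bigl[\sum_{j=1}^{L} n_j^X \log(\sigma^2_{h_{j,r}}) + 2\sum_{(a,b)\in\mathcal{P}_X}\log(1/(\sigma^2_{h_{a,r}}+\sigma^2_{h_{b,r}})+P/N_0)\bigr]$, where $n_j^X$ counts the pairs containing user $j$ and $\mathcal{P}_X$ is the pair set of scheme $X$.

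I would then tabulate multiplicities. The proposed scheme yields $n_i^{\mathrm{new}}=L-1$ since the best-gain common user $i$ appears in every pair, and $n_j^{\mathrm{new}}=1$ for $j\neq i$. The scheme of~\cite{Ong:2010} yields $n_1=n_L=1$ and $n_j=2$ for $2\leq j\leq L-1$; the scheme of~\cite{Noori:2012} also has $n_j\leq 2$ with exact values dictated by the parity of $L$. All three satisfy $\sum_j n_j^X=2(L-1)$. The key computation is $\sum_j(n_j^{\mathrm{new}}-n_j^{\mathrm{Ong}})\log(\sigma^2_{h_{j,r}})$, which after cancellation becomes $(L-3)\log(\sigma^2_{h_{i,r}})-\sum_{j\neq 1,i,L}\log(\sigma^2_{h_{j,r}})$ for interior $i$ (with analogous reductions when $i\in\{1,L\}$). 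Because $\sigma^2_{h_{i,r}}$ strictly dominates every other $\sigma^2_{h_{j,r}}$ in the unequal-gain scenario and the residual sum has exactly $L-3$ terms, this is strictly positive; equivalently, the multiplicity vector $n^{\mathrm{new}}$ majorizes $n^{\mathrm{Ong}}$ relative to the ordering of the $\sigma^2_{h_{j,r}}$'s. The comparison with~\cite{Noori:2012} follows by the same template after a case split on where $i$ sits relative to $\lfloor L/2\rfloor$.

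The main obstacle is the pair-sum residual $2\sum_{(a,b)\in\mathcal{P}_X}\log(1/(\sigma^2_{h_{a,r}}+\sigma^2_{h_{b,r}})+P/N_0)$, which actually favors the existing schemes: because the proposed scheme's pair sums always contain $\sigma^2_{h_{i,r}}$, they are uniformly larger, so each log-pair-sum term is smaller. The plan is to show that this residual disadvantage is $O(1)$ and approaches the common limit $2(L-1)\log(P/N_0)$ as $P/N_0\to\infty$, so it cannot reverse the first-term gap $(L-3)\log(\sigma^2_{h_{i,r}})-\sum_{j\neq 1,i,L}\log(\sigma^2_{h_{j,r}})$ at practical operating SNRs under the unequal-gain hypothesis. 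The secondary complication of transmission-fairness power scaling (replacing $P$ by $P/(L-1)$ for the common user, yielding a $-\log(L-1)$ overhead per pair) would be absorbed into the same comparison, with the argument that the concentration gain in $\log(\sigma^2_{h_{i,r}})$ dominates whenever the channel-gain disparity is meaningful---precisely the regime targeted by the proposition.
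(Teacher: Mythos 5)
Your factoring identity $\frac{1}{1+\sigma^2_{h_{b,r}}/\sigma^2_{h_{a,r}}}+\frac{P\sigma^2_{h_{a,r}}}{N_0}=\sigma^2_{h_{a,r}}\bigl(\frac{1}{\sigma^2_{h_{a,r}}+\sigma^2_{h_{b,r}}}+\frac{P}{N_0}\bigr)$ is correct, and the multiplicity bookkeeping ($n_i^{\mathrm{new}}=L-1$ versus $n_j\leq 2$ for the existing schemes, both summing to $2(L-1)$) is a genuinely different and more transparent route than the paper takes. However, two things you defer are exactly where the work is, and the second one breaks your main computation. First, your control of the pair-sum residual is only asymptotic: ``cannot reverse the gap at practical operating SNRs'' proves a high-SNR statement, not the proposition as written (the paper is admittedly similarly loose, since its term-by-term comparison also silently drops the $\frac{1}{1+x}\in[0,1]$ terms, but you should at least state the high-SNR qualification explicitly rather than appeal to practicality).

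Second, and more seriously, the transmission-fairness power scaling is not a ``secondary complication'' to be absorbed at the end --- it is the starting point of the paper's proof. The paper first rewrites \eqref{eq:Rsavg} with $\sigma^2_{h_{i,r}}$ replaced by $\sigma^2_{h_{i,r}}/(L-1)$ and $P$ replaced by $(2L-2)P$ (its \eqref{eq:Rsunequal}), and only then compares dominant terms with \eqref{eq:Rsongavg}, reducing the whole argument to $2\sigma^2_{h_{i,r}}>\sigma^2_{h_{\ell-1,r}}$ (true since $i$ has the best gain) and $(2L-2)\sigma^2_{h_{\ell,r}}>\sigma^2_{h_{\ell,r}}$. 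Under that scaling your decomposition still applies, but with effective gain $\sigma^2_{h_{i,r}}/(L-1)$ for the common user, so your first-term gap picks up a penalty $-(L-1)\log(L-1)$ and the claimed positivity of $(L-3)\log(\sigma^2_{h_{i,r}}/(L-1))-\sum_{j}\log(\sigma^2_{h_{j,r}})$ no longer follows from $\sigma^2_{h_{i,r}}$ being maximal; you would need to also credit the $+\log((2L-2))$ gained in each of the $2(L-1)$ residual terms (from $(2L-2)P/N_0$ versus $P/N_0$) to recover a net positive, which restores the conclusion but is precisely the computation you did not do. Your fallback phrase ``whenever the channel-gain disparity is meaningful'' inserts a hypothesis the proposition does not have. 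So the decomposition is a promising skeleton, but as written the proof does not close.
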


\begin{proof}
See Appendix D.
\end{proof}

\begin{proposition}\label{prop:var_sum}
The average sum rate of the proposed pairing scheme is larger than
that of the pairing schemes in \cite{Ong:2010} and
\cite{Noori:2012} for the variable average channel gain scenario.
\end{proposition}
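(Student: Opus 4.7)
The plan is to reduce the variable average channel gain scenario to a sequence of unequal average channel gain sub-scenarios, one per block of $T'_f$ consecutive time frames, and then invoke Proposition~\ref{prop:uneq_sum} block-by-block. First I would decompose the overall expectation as an average over blocks: let $K=F/T'_f$ be the number of blocks and write $\sigma_{h_{j,r}}^{2,(a)}$ for the variance of user $j$'s channel during block $a\in[0,K-1]$. Since these variances are constant within a block and unequal across users, the block-conditioned expectation $E[R_s\mid \text{block }a]$ is exactly an instance of the unequal scenario studied in Proposition~\ref{prop:uneq_sum}, with the caveat that the proposed scheme re-selects its common user $i(a)=\arg\max_{j}\sigma_{h_{j,r}}^{2,(a)}$ at the start of each block (principle~\textbf{P3} together with the fairness rule for the variable scenario), while the fixed-index schemes of~\cite{Ong:2010} and~\cite{Noori:2012} keep their pairings throughout.

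Next I would apply the bounds \eqref{eq:Rsavg}--\eqref{eq:Rsnooriavg} inside block $a$ and repeat the argument of Proposition~\ref{prop:uneq_sum} after the substitutions $\sigma_{h_{j,r}}^{2}\leftarrow \sigma_{h_{j,r}}^{2,(a)}$ and $i\leftarrow i(a)$. This yields the block-wise inequality
\begin{equation}
E[R_s^{\text{prop}}\mid \text{block }a] \;\geq\; \max\{E[R_s^{\text{[Ong]}}\mid \text{block }a],\; E[R_s^{\text{[Noori]}}\mid \text{block }a]\},
\end{equation}
with strict inequality whenever at least two of the block-$a$ variances differ, which is the defining feature of the unequal-variance setting inside a block. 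Summing over $a=0,\ldots,K-1$ and dividing by $K$ preserves the inequality and delivers $E[R_s^{\text{prop}}] > E[R_s^{\text{[Ong]}}]$ and $E[R_s^{\text{prop}}] > E[R_s^{\text{[Noori]}}]$, which is the proposition.

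The hard part is ensuring that strictness survives the averaging step. In a block where the fixed-index pairing of~\cite{Ong:2010} or~\cite{Noori:2012} accidentally aligns with a pairing structurally equivalent to one centred on the block's strongest user $i(a)$, the block-wise gap collapses to equality. I would handle this by observing that in any genuinely variable scenario the identity of the maximiser $i(a)$ changes across blocks, so the fixed-index schemes cannot simultaneously mimic every $i(a)$; hence a strictly positive gap is retained on at least one block, and averaging over the $K$ blocks then yields the strict global inequality claimed. A secondary but routine point is that the per-block fading is independent of the per-block variances, so the block-wise expectations factor cleanly and the substitution above is justified.
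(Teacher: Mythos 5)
Your block decomposition of the variable scenario is harmless (the channel statistics are constant within each block of $T'_f$ frames, so the expectation does split as you describe), but the key step --- invoking Proposition~\ref{prop:uneq_sum} inside each block --- does not go through as stated. Proposition~\ref{prop:uneq_sum} is proved for a protocol in which transmission fairness is enforced by scaling the common user's power by $(L-1)$ and comparing at an average power of $(2L-2)P$; its proof rests on the resulting inequalities $2\sigma^2_{h_{i,r}}>\sigma^2_{h_{\ell-1,r}}$ and $(2L-2)\sigma^2_{h_{\ell,r}}>\sigma^2_{h_{\ell,r}}$ extracted from \eqref{eq:Rsunequal}. In the variable scenario the protocol is different: fairness is achieved by re-selecting the common user every time frame, no power scaling is applied, and the relevant bound is the unscaled \eqref{eq:Rsavg}. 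So the block-conditioned sub-problem is \emph{not} an instance of the setting Proposition~\ref{prop:uneq_sum} covers, and quoting it would compare the wrong expressions.

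The argument the paper actually uses --- and the one you need inside each block --- is more elementary and does not require the detour. Since the common user is chosen as the argmax of the (block's) average channel gains, $\sigma^2_{h_{i,r}}\geq\sigma^2_{h_{\ell-1,r}}$ and $\sigma^2_{h_{i,r}}\geq\sigma^2_{h_{L-\ell+2,r}}$ for every $\ell$; hence $\sum_{\ell\neq i}\sigma^2_{h_{i,r}}>\sum_{\ell=2}^{L}\sigma^2_{h_{\ell-1,r}}$ and likewise for the Noori indices, and a term-by-term comparison of \eqref{eq:Rsavg} against \eqref{eq:Rsongavg} and \eqref{eq:Rsnooriavg} gives the claim directly: the maximal variance appears in every one of the $L-1$ summands for the proposed scheme, whereas each user's variance appears in at most two summands for the existing schemes. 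Your closing discussion of strictness is also aimed at the wrong mechanism --- strictness comes from the existing schemes necessarily pairing non-maximal users in most time slots whenever the variances are unequal, not from the maximiser $i(a)$ changing identity across blocks. If you replace the appeal to Proposition~\ref{prop:uneq_sum} by this direct comparison, your block-wise averaging then correctly assembles the global inequality of Proposition~\ref{prop:var_sum}.
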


\begin{proof}
See Appendix D.
\end{proof}

\section{Error Performance Analysis}\label{error}
%
In this section, we characterize the error performance of a FDF MWRN with the new pairing scheme. We provide the analytical derivations for $M$-QAM modulation, which is a $2$ dimensional lattice code and is widely used in practical wireless communication systems. 

\subsection{System Model}

In the $M$-QAM modulated FDF MWRN system, during a certain time frame,
in the $t_s=(\ell-1)^{th}$ time slot, the $i^{th}$ user and the
$\ell^{th}$ user transmit their messages $\textbf{W}_i$ and
$\textbf{W}_\ell$ which are $M$-QAM modulated to
$\textbf{X}_i=\{X_i^{1},X_i^{2},...,X_i^{T}\}$ and
$\textbf{X}_\ell=\{X_{\ell}^{1},X_{\ell}^{2},...,X_{\ell}^{T}\}$,
respectively, where $X_i^t,X_{\ell}^{t}=a+jb$ and $a,b\in\{\pm1,\pm3,...,\pm(\sqrt{M}-1)\}$. The relay
receives the signal $\textbf{R}_{i,\ell}$ (see \eqref{eq:received})
and decodes it using ML criterion~\cite{MJu:2010} and
obtains an estimate $\hat{\textbf{V}}_{i,\ell}$ of the network
coded symbol
$\textbf{V}_{i,\ell}=(\textbf{W}_{i}+\textbf{W}_{\ell})\mod M$ as in \cite{Zhang-2006,Ronald:2013}. The relay then broadcasts the estimated network coded signal after $M$-QAM
modulation, which is given as $\textbf{Z}_{i,\ell}$. The $j^{th}$
($j\in[1,L]$) user receives $\textbf{Y}_{i,\ell}$ (see
\eqref{eq:dfdetect}) and detects the received signal through ML
criterion~\cite{MJu:2010} to obtain the estimate
$\dhat{\textbf{V}}_{i,\ell}$. After decoding all the network coded
messages, each user performs message extraction. For the common user
($i^{th}$ user), this message extraction involves subtracting its own
message $\textbf{W}_{i}$ from the
network coded messages $\dhat{\textbf{V}}_{i,\ell}$ and then performing the modulo-$M$ operation. The process can be shown as
{\fontsize{10.4}{12.48}\selectfont
\begin{equation}\label{eq:commonextract}
\hat{\textbf{W}}_{\ell}=(\dhat{\textbf{V}}_{i,\ell}-
\textbf{W}_{i}+M)\mod M,
\hat{\textbf{W}}_{\ell+1}=(\dhat{\textbf{V}}_{i,\ell+1}-
\textbf{W}_{i}+M) \mod M,...,\hat{\textbf{W}}_{L}=(\dhat{\textbf{V}}_{i,L}-\textbf{W}_{i}+M)\mod M.
\end{equation}\par}

For other users, the message extraction process can be shown as
{\fontsize{10.4}{12.48}\selectfont
\begin{equation}\label{eq:otherextract}
\hat{\textbf{W}}_{i}=(\dhat{\textbf{V}}_{i,\ell}-
\textbf{W}_{\ell}+M)\mod M,\hat{\textbf{W}}_{\ell+1}=(\dhat{\textbf{V}}_{i,\ell+1}-
\hat{\textbf{W}}_{i}+M)\mod M,...,\hat{\textbf{W}}_{L}=(\dhat{\textbf{V}}_{i,L}-
\hat{\textbf{W}}_{i}+M)\mod M.
\end{equation}\par}

%
\normalsize
\subsection{SER Analysis for the Proposed Pairing
Scheme}\label{sec:proposed} In this subsection, we investigate the
error performance of a FDF MWRN with the proposed pairing scheme.
Unlike the pairing schemes in \cite{Ong:2010} and \cite{Noori:2012},
the error performance of all the users is not the same for the
proposed pairing scheme. Hence, we need to obtain separate
expressions for the error probabilities at the common user ($i^{th}$
user) and other users ($\ell^{th}$ user).

First, we obtain the probability of incorrectly decoding a network
coded message at the common user and the other users. Since, any $M$-QAM signal with square constellation (i.e., $\sqrt{M}\in\mathbb{Z}$) can be decomposed to two $\sqrt{M}$-PAM signals \cite{Alouini:2000}, the network coded signal from a linear combination of two $M$-QAM signals can be decomposed to a network coded signal from two $\sqrt{M}$-PAM signals. Thus, we can obtain the probability that the $i^{th}$ (common) user incorrectly decodes the
network coded message involving its own message and the $m^{th}$ user's message, as:
\begin{align}\label{eq:df2waycommon}
P_{FDF}(i,m)&=1-\left(1-P_{\sqrt{M}-PAM,NC}(i,m)\right)^2,
\end{align}

\noindent where $P_{\sqrt{M}-PAM,NC}(i,m)$ is the probability of incorrectly decoding a network coded message resulting from the sum of two $\sqrt{M}$-PAM signals from the $i^{th}$ and the $m^{th}$ user and is derived in Appendix E.

%
%

%

Similarly, The probability that the $\ell^{th}$ (other) user incorrectly
decodes the network coded message involving the $i^{th}$ user's message and its own message or other user's messages is given as:
\begin{align}\label{eq:df2wayother}
P_{FDF}(\ell,m)=\left\{
\begin{array}{ll}1-\left(1-P_{\sqrt{M}-PAM,NC}(\ell,m)\right)^2
& \mbox{$m=i$}\\
1-\left(1-P_{\sqrt{M}-PAM,NC}(i,m)\right)^2
& \mbox{$m\in[1,L],m\neq i,\ell$}.
\end{array}
\right.
\end{align}

\noindent where $P_{\sqrt{M}-PAM,NC}(\ell,m)$ is the probability of incorrectly decoding a network coded message, i.e., the sum of two $\sqrt{M}$-PAM signals of the $\ell^{th}$ and the $m^{th}$ user and can be obtained from Appendix E.
Using \eqref{eq:df2waycommon} and \eqref{eq:df2wayother}, the
average SER at the common user and the other users can be derived
using the technique proposed in \cite{Shama:2012}. The result is
summarized in the following Theorem.

\begin{theorem}\label{th:aber}
For the proposed pairing scheme in a FDF MWRN, the average SER at
the $i^{th}$ (common) user is given by:
\begin{equation}\label{eq:abercommon}
P_{i,avg}=\frac{1}{L-1}{\sum_{m=1,m\neq i}^{L}P_{FDF}(i,m)},
\end{equation}

\noindent and the average SER at the $\ell^{th}$ (other) users is given by:
\begin{equation}\label{eq:aberother}
P_{\ell,avg}=\frac{1}{L-1}\left({\sum_{m=1,m\neq
i,\ell}^{L}P_{FDF}(\ell,m)+(L-1)P_{FDF}(\ell,i)}\right).
\end{equation}
\end{theorem}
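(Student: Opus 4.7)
The plan is to compute, for each user, the per-message error probability of every other user's recovered message and then average over the $L-1$ messages each user must extract. The key ingredient is the message-extraction procedure described in \eqref{eq:commonextract}--\eqref{eq:otherextract}, which determines whether error propagation is present.

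For the common user (the $i^{th}$ user), I would first observe that extraction of any other user's message $\hat{\textbf{W}}_m$ (for $m\neq i$) is a single-step operation: the $i^{th}$ user simply subtracts its own (known, noise-free) message $\textbf{W}_i$ from the doubly-estimated network-coded symbol $\dhat{\textbf{V}}_{i,m}$. Hence $\hat{\textbf{W}}_m$ is in error if and only if $\dhat{\textbf{V}}_{i,m}$ is in error, which happens with probability $P_{FDF}(i,m)$ as defined in \eqref{eq:df2waycommon}. Averaging these $L-1$ per-message error probabilities yields \eqref{eq:abercommon} directly.

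For an ``other'' user (the $\ell^{th}$ user, $\ell\neq i$), the argument is subtler because of sequential extraction. The $\ell^{th}$ user first extracts $\hat{\textbf{W}}_i$ from $\dhat{\textbf{V}}_{i,\ell}$, which is erroneous with probability $P_{FDF}(\ell,i)$. For each remaining $m\in[1,L]\setminus\{i,\ell\}$, the user extracts $\hat{\textbf{W}}_m$ by subtracting $\hat{\textbf{W}}_i$ from $\dhat{\textbf{V}}_{i,m}$, so $\hat{\textbf{W}}_m$ is in error whenever \emph{either} $\dhat{\textbf{V}}_{i,m}$ is in error (probability $P_{FDF}(\ell,m)$) \emph{or} $\hat{\textbf{W}}_i$ was in error (probability $P_{FDF}(\ell,i)$). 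Invoking the union bound (and neglecting the doubly-erroneous ``error-cancellation'' event, which is $O(P_{FDF}^{2})$ and vanishes in the asymptotic regime), the per-message SER for $m\neq i,\ell$ is $P_{FDF}(\ell,m)+P_{FDF}(\ell,i)$. Summing over the $L-1$ messages the $\ell^{th}$ user must recover gives
\begin{equation*}
P_{FDF}(\ell,i)\;+\;\sum_{m=1,\,m\neq i,\ell}^{L}\bigl(P_{FDF}(\ell,m)+P_{FDF}(\ell,i)\bigr),
\end{equation*}
and collecting the $1+(L-2)=L-1$ copies of $P_{FDF}(\ell,i)$ yields, after dividing by $L-1$, the expression in \eqref{eq:aberother}.

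The main obstacle is justifying the union-bound step cleanly: in principle two independent errors in modulo-$M$ arithmetic could cancel, producing a correct $\hat{\textbf{W}}_m$ even when both constituents are wrong. I would handle this by appealing to the asymptotic (high-SNR) regime in which $P_{FDF}\to 0$, so that the joint-error term $P_{FDF}(\ell,i)\,P_{FDF}(\ell,m)$ is dominated by the first-order terms and the union bound becomes tight; this is consistent with the asymptotic nature of the SER claim and with the technique of \cite{Shama:2012} that the theorem cites. A secondary subtlety is that the error events in $\hat{\textbf{W}}_i$ and $\dhat{\textbf{V}}_{i,m}$ involve different multiple-access-phase time slots and thus independent channel realizations and noise, so the union bound applies without correlation corrections, which I would note briefly before writing down the final averaged expressions.
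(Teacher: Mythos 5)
Your proof is correct and arrives at both \eqref{eq:abercommon} and \eqref{eq:aberother}, but it is organized differently from the paper's argument in Appendix F. The paper enumerates joint error events by cardinality: for each $k\in[1,L-1]$ it defines $A_k$ (exactly $k$ extracted messages wrong, the common user's message correct) and $B_k$ (exactly $k$ wrong including the common user's), writes the exact average SER as $\frac{1}{L-1}\sum_{k}k\,P(\cdot,k)$ in \eqref{eq:aber_exact}, and only then applies the high-SNR approximation, under which only $A_1$ (weight $1$) and $B_{L-1}$ (weight $L-1$) survive and the factors $1-P_{FDF}(\cdot,\cdot)$ are set to $1$. You instead compute the marginal error probability of each of the $L-1$ extracted messages and sum them; the factor $L-1$ multiplying $P_{FDF}(\ell,i)$ then arises because an error in $\hat{\textbf{W}}_i$ contaminates every one of the $L-1$ extractions, which is exactly the mechanism that makes $B_{L-1}$ the dominant $B$-event in the paper. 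The two computations coincide because $\sum_k k\,P(k\ \text{errors})$ is the expected number of erroneous messages, i.e., the sum of per-message error probabilities by linearity of expectation, and both require the same asymptotic step of discarding products of error probabilities (your union-bound remainder corresponds to the paper's neglected $A_k$ for $k>1$ and $B_k$ for $k<L-1$). Your route is more economical, avoiding the combinatorial bookkeeping of \eqref{eq:Aki}--\eqref{eq:Bkl}; the paper's route has the minor advantage of exhibiting an exact pre-asymptotic expression before truncation. Your closing remarks on possible error cancellation in modulo-$M$ arithmetic and on the independence of the relevant channel and noise realizations address precisely the points the paper leaves implicit, so nothing essential is missing.
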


\begin{proof}
See Appendix F.
\end{proof}

\begin{remark}
From Theorem \ref{th:aber}, it can be identified that the average SER at the other ($\ell^{th}$) users is at least twice compared to the average SER at the common ($i^{th}$) user. This can be intuitively explained from the fact that the $i^{th}$ user needs to correctly decode only one network coded message (${V}_{i,m}$) to correctly decode the $m^{th}$ user's message. However, the $\ell^{th}$ user needs to correctly decode two network coded messages ($V_{i,m}$ and $V_{i,\ell}$) to correctly decode the $m^{th}$ user's message. Thus, the average SER at the other users would at least be twice compared to that at the common user. 
\end{remark}

Using Theorem \ref{th:aber} and the average SER result for the pairing scheme in \cite{Ong:2010}, we can compare the performance of the proposed and the existing pairing schemes. Note that the error performance of the pairing scheme in
\cite{Noori:2012} would be the same as the pairing scheme in \cite{Ong:2010}, as the basic pairing process is the same for both these schemes and only the pairing orders are different. The main results are summarized in Propositions 7$-$9.

\begin{proposition}\label{prop:eq_aber}
The average SER of an $L$-user FDF MWRN with the proposed pairing
scheme is lower than the pairing scheme in \cite{Ong:2010} by a
factor of $\frac{L}{2}$ for the common user and a
factor of approximately $\frac{L}{4}$ for other users under the equal average channel gain
scenario.
\end{proposition}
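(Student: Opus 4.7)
The plan is to evaluate the two expressions from Theorem \ref{th:aber} under the symmetry implied by the equal average channel gain scenario, and then to match the outcome against the known average SER for the Ong scheme that was derived in \cite{Shama:2012}. The governing intuition is that, when all channel variances coincide, every pairwise network-coded (NC) decoding error probability has the same expectation, so the per-user SER expressions collapse into elementary arithmetic.

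First I would observe that since $\sigma^2_{h_{1,r}}=\sigma^2_{h_{2,r}}=\cdots=\sigma^2_{h_{L,r}}$, the quantity $P_{\sqrt{M}-PAM,NC}(i,m)$ derived in Appendix E depends only on the two variances associated with its indices, and is therefore statistically identical for every pair. Consequently $E[P_{FDF}(i,m)]=E[P_{FDF}(\ell,m')]=:P_{\mathrm{avg}}$ for every admissible choice of indices. This is a pure symmetry argument and requires no additional computation beyond the expressions already present in Appendix E.

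Next I would specialise Theorem \ref{th:aber}. Equation \eqref{eq:abercommon} is a sum of $L-1$ identical terms divided by $L-1$, which gives $P_{i,avg}=P_{\mathrm{avg}}$ exactly for the common user. Equation \eqref{eq:aberother} contains $L-2$ terms of the form $P_{FDF}(\ell,m)$ with $m\neq i,\ell$ plus one weighted term $(L-1)P_{FDF}(\ell,i)$; replacing each expected value by $P_{\mathrm{avg}}$ yields $P_{\ell,avg}=\frac{2L-3}{L-1}P_{\mathrm{avg}}$, which is essentially $2P_{\mathrm{avg}}$ for even moderate $L$. This is where the ``approximately'' qualifier in the statement originates.

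Finally I would invoke the average SER for the Ong scheme. In that scheme a user must traverse a chain of pairwise NC decodings whose length equals the index distance $|m-j|$ to the target, so by a union bound in the small-$P_{\mathrm{avg}}$ regime the error compounds to roughly $\frac{L}{2}P_{\mathrm{avg}}$, as quoted in \cite{Shama:2012}. Dividing this by the two values obtained above yields ratios of $L/2$ for the common user and $\frac{L(L-1)}{2(2L-3)}\approx\frac{L}{4}$ for the other users, matching the proposition. The main obstacle I anticipate is bookkeeping around the invocation of the $\frac{L}{2}P_{\mathrm{avg}}$ figure: being explicit about its interpretation as a user-averaged quantity, and verifying that the union-bound chaining used to derive it is tight in the low-error regime in which the comparison is meaningful. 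Once that reference is accepted, the proposition reduces to the elementary arithmetic above.
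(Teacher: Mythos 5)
Your proposal is correct and follows essentially the same route as the paper's Appendix G: under equal average channel gains all $P_{FDF}$ terms coincide, so \eqref{eq:abercommon} and \eqref{eq:aberother} reduce to $P_{FDF}$ and $\frac{2L-3}{L-1}P_{FDF}$ respectively, and comparison with the $\frac{L}{2}P_{FDF}$ figure from \cite{Shama:2012} gives the factors $L/2$ and $\frac{L(L-1)}{2(2L-3)}\approx L/4$. Your added justifications (the symmetry argument for the equality of the pairwise error probabilities, and the chaining intuition behind the cited $\frac{L}{2}P_{FDF}$ result) are elaborations of steps the paper simply asserts or cites, not a different method.
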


\begin{proof}
See Appendix G.
\end{proof}

\begin{proposition}\label{prop:uneq_aber}
The average SER of an $L$-user FDF MWRN with the proposed pairing
scheme is always lower than the pairing scheme in \cite{Ong:2010}
for all users under the unequal average channel gain scenario.
\end{proposition}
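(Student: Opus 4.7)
The plan is to compare the user-wise average SER from Theorem~\ref{th:aber} for the proposed scheme with the corresponding expressions for the scheme in \cite{Ong:2010} obtained in \cite{Shama:2012}, and show that under the unequal-gain assumption the proposed value is strictly smaller for every user. The backbone of the argument is a high-SNR asymptotic for $P_{FDF}(a,b)$ that decomposes cleanly in terms of the average channel gains, after which the comparison reduces to an arithmetic inequality over reciprocals of the $\sigma^2_{h_{k,r}}$.

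First I would extract the asymptotic $P_{FDF}(a,b) \sim c(M)(N_0/P)\bigl(1/\sigma^2_{h_{a,r}}+1/\sigma^2_{h_{b,r}}\bigr)$. This follows from \eqref{eq:snrlatticenp}, whose numerator is proportional to $\min(|h_{a,r}|^2,|h_{b,r}|^2)$; since the minimum of two independent exponentials is itself exponential with rate $1/\sigma^2_{h_{a,r}}+1/\sigma^2_{h_{b,r}}$, the Rayleigh-averaged $Q$-function produces the displayed inverse-SNR expression with a modulation-dependent constant $c(M)$. Substituting into the common-user SER \eqref{eq:abercommon} yields $P_{i,\mathrm{avg}}\propto(L-1)/\sigma^2_{h_{i,r}}+\sum_{m\neq i}1/\sigma^2_{h_{m,r}}$, while substituting into the analogous Ong-scheme expression from \cite{Shama:2012} yields $1/\sigma^2_{h_{1,r}}+1/\sigma^2_{h_{L,r}}+2\sum_{\ell=2}^{L-1}1/\sigma^2_{h_{\ell,r}}$, the latter because interior users appear in two consecutive pairs while the endpoints appear in only one.

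Forming the difference gives $\sum_{\ell=2}^{L-1}\bigl(1/\sigma^2_{h_{\ell,r}}-1/\sigma^2_{h_{i,r}}\bigr)$; since $i=\arg\max_k\sigma^2_{h_{k,r}}$, every summand is nonnegative and, under the strict unequal-gain assumption, at least one summand is strictly positive for $L\ge 4$ (the $L=3$ middle-index case degenerates to Ong's own pairing and is naturally excluded), which establishes strict inequality at the common user. For the other users in \eqref{eq:aberother}, I would repeat the substitution, observing that the dominant contribution is the error-propagation term $(L-1)P_{FDF}(\ell,i)$, whose pair always contains the best channel in the proposed scheme, whereas in Ong's scheme the analogous propagation chain can traverse arbitrarily weak consecutive pairs; a parallel reciprocal-gain arithmetic then closes the argument for every non-common user.

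The main obstacle will be strengthening the high-SNR asymptotic so that the strict inequality survives at all SNRs, as suggested by the word ``always'' in the statement. I would address this by establishing monotonicity of $P_{FDF}(a,b)$ as a strictly decreasing function of each of $\sigma^2_{h_{a,r}}$ and $\sigma^2_{h_{b,r}}$ via direct inspection of the underlying $Q$-function integrals, which upgrades the asymptotic comparison to a pointwise one. A secondary subtlety is reconciling the precise form of the Ong-scheme SER (including its own error-propagation pattern in \cite{Shama:2012}) with the proposed-scheme expression so that the per-user comparisons are over matched error events rather than mismatched aggregates.
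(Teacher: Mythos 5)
Your route is genuinely different from the paper's: you push the comparison all the way to a closed-form high-SNR asymptotic $P_{FDF}(a,b)\propto \frac{N_0}{P}\bigl(\tfrac{1}{\sigma^2_{h_{a,r}}}+\tfrac{1}{\sigma^2_{h_{b,r}}}\bigr)$ (via the exponential distribution of $\min(|h_{a,r}|^2,|h_{b,r}|^2)$) and then do reciprocal-gain arithmetic, whereas the paper never averages the $Q$-functions at all --- it only shows $E[\gamma_r(i,m)]\ge E[\gamma_r(m)]$ by a two-case analysis on whether $E[|h_{i,r}|^2/(L-1)]$ exceeds $E[|h_{m,r}|^2]$, and then infers the SER ordering from monotonicity. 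Your version is more explicit about where the gain comes from, and your closing remark about needing monotonicity in the average channel gains to justify the word ``always'' is a legitimate criticism that applies to the paper's own argument as well.

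However, there is a concrete gap: you have dropped the transmission-fairness power scaling that defines the protocol in the unequal average channel gain scenario. Per Section~\ref{sec:pairing}, the common user's power is divided by $(L-1)$ because it transmits $(L-1)$ times, and for a fair comparison $P$ is replaced by $(2L-2)P$; the paper's proof accordingly uses $\gamma_{r}(i,m)=\frac{(2L-2)P\min\left(\frac{\mid h_{i,r}\mid^2}{L-1},\mid h_{m,r}\mid^2\right)}{5N_0}$. Under your asymptotic the common-user pair term is then proportional to $\frac{1}{2\sigma^2_{h_{i,r}}}+\frac{1}{(2L-2)\sigma^2_{h_{m,r}}}$, not $\frac{1}{\sigma^2_{h_{i,r}}}+\frac{1}{\sigma^2_{h_{m,r}}}$, so your difference $\sum_{\ell=2}^{L-1}\bigl(\tfrac{1}{\sigma^2_{h_{\ell,r}}}-\tfrac{1}{\sigma^2_{h_{i,r}}}\bigr)$ is the answer to a different (unscaled) protocol; the arithmetic must be redone with the scaled rates, and the fact that $\sigma^2_{h_{i,r}}$ is the maximum no longer closes the argument by itself. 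A second, more benign mismatch: the Ong-scheme average SER from \cite{Shama:2012} is $\frac{1}{L-1}\sum_{m=1}^{L-1}m\,P_{FDF}(j,m)$, with the error-propagation weight $m$, not the unweighted ``interior users counted twice'' sum you use. Since the weights only increase the Ong-scheme SER, this omission works in your favor and the inequality would follow a fortiori, but the matched-error-event comparison you flag as a ``secondary subtlety'' is in fact where the bulk of the SER advantage lives (it is exactly what produces the $L/2$ factor in Proposition~\ref{prop:eq_aber}), so it should be carried explicitly rather than deferred.
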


\begin{proof}
See Appendix G. \end{proof}
%
%

\begin{proposition}\label{prop:var_aber}
The average SER of an $L$-user FDF MWRN with the proposed pairing
scheme is always lower than the pairing scheme in \cite{Ong:2010}
for all users under the variable average channel gain scenario.
\end{proposition}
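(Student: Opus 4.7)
The plan is to reduce the variable average channel gain scenario to a sequence of unequal average channel gain scenarios and then invoke Proposition \ref{prop:uneq_aber} on each block. By the definition of the variable scenario in Section \ref{sec:sys_model}, the average channel gains $\{\sigma^2_{h_{j,r}}\}_{j=1}^{L}$ are piecewise constant: they stay fixed over each block of $T'_f$ consecutive time frames indexed by $a\in\{0,1,\ldots,\frac{F}{T'_f}-1\}$, and only change from one block to the next. First I would fix an arbitrary block $a$ and observe that, conditioned on the channel statistics of that block, the situation is structurally identical to the unequal average channel gain scenario analyzed in Proposition \ref{prop:uneq_aber}.

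Next, I would track how each pairing scheme behaves within block $a$. For the proposed scheme, principle P3 mandates that the common user be re-selected at the start of each time frame as the one with the best average channel gain under the prevailing statistics; hence throughout block $a$ the common user coincides with $\arg\max_{j}\sigma^2_{h_{j,r}}$ for that block. For the scheme in \cite{Ong:2010}, the pair index assignment $(\ell,\ell+1)$ is oblivious to the channel statistics and remains the same in every block. Substituting these block-specific statistics into the expressions \eqref{eq:df2waycommon}, \eqref{eq:df2wayother}, \eqref{eq:abercommon}, \eqref{eq:aberother} obtained in Theorem \ref{th:aber} yields conditional average SER values $P_{i,avg}^{(a)}$ and $P_{\ell,avg}^{(a)}$ for block $a$. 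By direct application of Proposition \ref{prop:uneq_aber} to the fixed statistics of block $a$, these block-level SERs satisfy the strict inequality with respect to the corresponding block-level SERs of the scheme in \cite{Ong:2010}, for every user.

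Finally I would combine the blocks. Since the overall average SER in the variable scenario is the arithmetic mean of the block-level SERs over $a=0,\ldots,\frac{F}{T'_f}-1$, and each term in this average strictly dominates its counterpart for the scheme in \cite{Ong:2010}, the inequality is preserved under averaging, which yields the claim. I expect the main obstacle to be justifying that the blockwise re-selection of the common user under P3 incurs no hidden penalty; in particular I would need to argue that the transmission fairness mechanism described in Section \ref{sec:pairing} for the variable scenario does not change the per-block error analysis (it only rotates which user absorbs the $(L-1)$-fold transmission load across blocks, and this effect is already captured by the unequal-gain averaging, so the block-by-block dominance survives).
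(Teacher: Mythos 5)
Your outer structure (decompose the variable scenario into blocks of $T'_f$ frames with fixed statistics, prove the inequality per block, then note that averaging over blocks preserves it) is sound, and you correctly identify that under P3 the common user in each block is $\arg\max_j \sigma^2_{h_{j,r}}$ for that block. The gap is in the step you yourself flag as the main obstacle: the per-block system is \emph{not} an instance of the system analyzed in Proposition \ref{prop:uneq_aber}, so you cannot invoke that proposition directly. In the unequal-gain scenario, fairness is enforced by scaling the common user's power by $1/(L-1)$ (and normalizing the comparison by replacing $P$ with $(2L-2)P$), and the proof of Proposition \ref{prop:uneq_aber} is a two-case argument built around the scaled SNR $\gamma_r(i,m)=\frac{(2L-2)P\min\left(\frac{\mid h_{i,r}\mid^2}{L-1},\mid h_{m,r}\mid^2\right)}{5N_0}$. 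In the variable-gain scenario the paper achieves fairness by rotating the common user across time frames instead, with \emph{no} power scaling, so the relevant SNR is the unscaled $\gamma_r(i,m)$ of \eqref{eq:snrnpqam}. Your claim that the rotation "is already captured by the unequal-gain averaging" is not justified and does not match the model: the two fairness mechanisms yield different SNR expressions, hence different block-level SERs.

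The fix is simpler than the reduction you attempt. With unscaled powers and $i$ the best user in the block, one has $E[\mid h_{i,r}\mid^2]\geq E[\mid h_{m+1,r}\mid^2]$ for every $m$, hence $E[\min(\mid h_{i,r}\mid^2,\mid h_{m,r}\mid^2)]\geq E[\min(\mid h_{m+1,r}\mid^2,\mid h_{m,r}\mid^2)]$ and therefore $E[\gamma_r(i,m)]\geq E[\gamma_r(m)]$; since $\gamma_r$ is the only quantity that differs between $P_{FDF}(i,m)$, $P_{FDF}(\ell,m)$ in \eqref{eq:df2waycommon}--\eqref{eq:df2wayother} and $P_{FDF}(j,m)$ in \eqref{eq:df2wayong}, the error probabilities, and hence the average SERs \eqref{eq:abercommon}--\eqref{eq:aberother} versus \eqref{eq:aberong}, are lower for the proposed scheme. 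This is exactly the paper's argument; your block decomposition and final averaging step can be kept as the wrapper around it, but the appeal to Proposition \ref{prop:uneq_aber} must be replaced by this direct comparison.
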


\begin{proof}
See Appendix G. \end{proof}

From Propositions \ref{prop:eq_aber}-\ref{prop:var_aber}, it is
clear that choosing the user with the best average channel gain as
the common user reduces the average SER of the FDF MWRN.

\section{Results}\label{sec:results}
In this section, we provide numerical results to verify the insights
provided in Propositions \ref{prop:eq_cap}$-$\ref{prop:var_sum}. We
also provide simulation results to verify Propositions
\ref{prop:eq_aber}$-$\ref{prop:var_aber}. We consider an $L=10$ user
FDF MWRN where each user transmits a packet of $T=2000$ bits and uses $16$-QAM modulation. The
power at the users, $P$ and the power at the relay, $P_r$ are
assumed to be equal and normalized to unity. The SNR per bit per
user is defined as $\frac{1}{N_0}$. Following \cite{Gayan:2013}, the
average channel gain for the $j^{th}$ user is modeled by
$\sigma^2_{h_{j,r}}=(1/(d_j/d_0))^\nu$, where $d_0$ is the reference
distance, $d_j$ is the distance between the $j^{th}$ user and the
relay which is assumed to be uniformly randomly distributed between
$0$ and $d_0$, and $\nu$ is the path loss exponent, which is assumed
to be $3$. Such a distance based channel model takes into account
large scale path loss and has been widely considered in the
literature
\cite{Zhang-2006,Foroogh:2012,Louie-2010,MChen:2010,Kramer:2005}. 
All distances, once chosen, remain constant for unequal channel gain scenario and are
randomly chosen every time frame (i.e., worst case, $T'_f=1$) for
variable channel gain scenario. Note that all the distances are the same for the equal average channel gain scenario. All results are averaged over $F=100$ time frames.


\subsection{Common Rate}
Fig. \ref{fig:Fig5} shows the common rate for the proposed
and the existing pairing schemes in an $L=10$ user FDF MWRN. All the
numerical results are obtained by averaging the instantaneous common
rates for the pairing schemes over a large number of channel
realizations. Fig. \ref{fig:su1} shows that all the pairing schemes
have the same average common rate in equal average channel gain
scenario, which verifies Proposition \ref{prop:eq_cap}. The common
rate of the proposed pairing scheme is larger than the existing
pairing schemes for the unequal average channel gain scenario in
Fig. \ref{fig:su2}. This is because, scaling the common user's power
to ensure transmission fairness decreases the ratio of the maximum
and the minimum average channel gains in \eqref{eq:Rcnew}, resulting
in a larger common rate. For variable average channel gain scenario,
we can see that the common rate for the proposed scheme is practically the same as that of the
existing pairing schemes. This verifies Propositions
\ref{prop:uneq_cap} and \ref{prop:var_cap}, respectively.

%
\begin{figure}
{\subfigure[Equal channel gain scenario]{
  \includegraphics[width=0.33\textwidth]{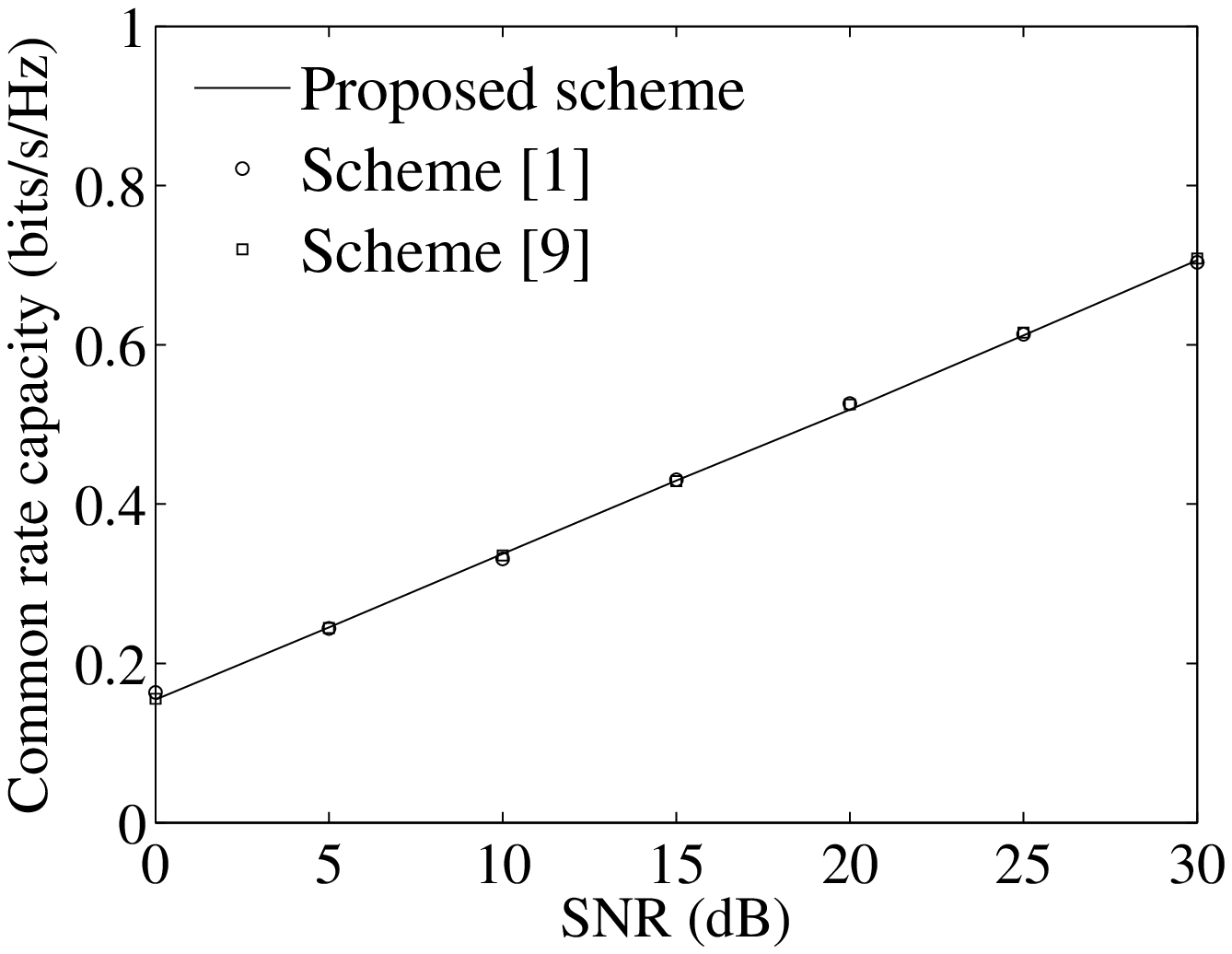}
  \label{fig:su1}}
\hspace{-20pt}\subfigure[Unequal channel gain scenario]{
  \includegraphics[width=0.33\textwidth]{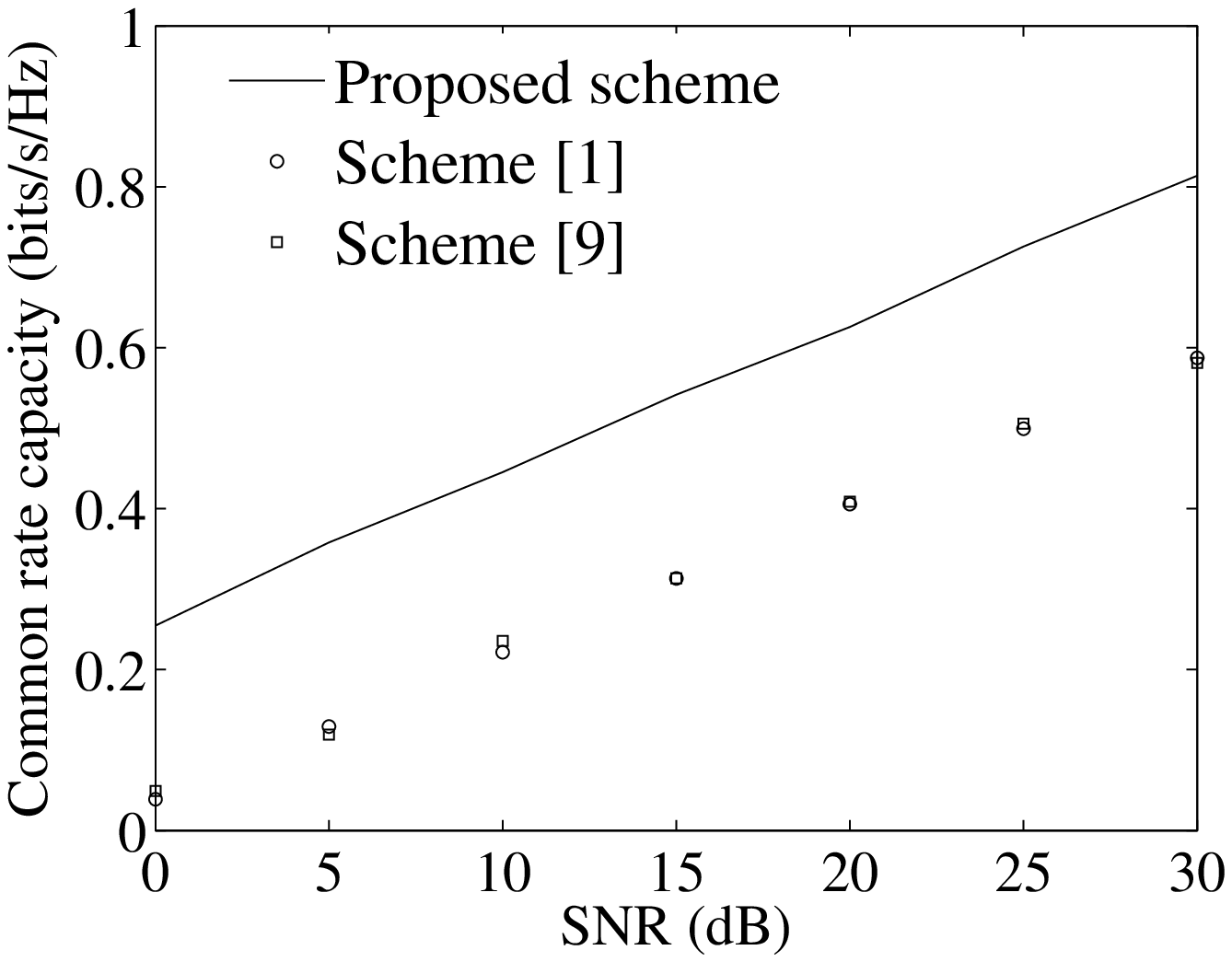}
 \label{fig:su2}}
\hspace{-20pt}\subfigure[Variable channel gain scenario]{
  \includegraphics[width=0.33\textwidth]{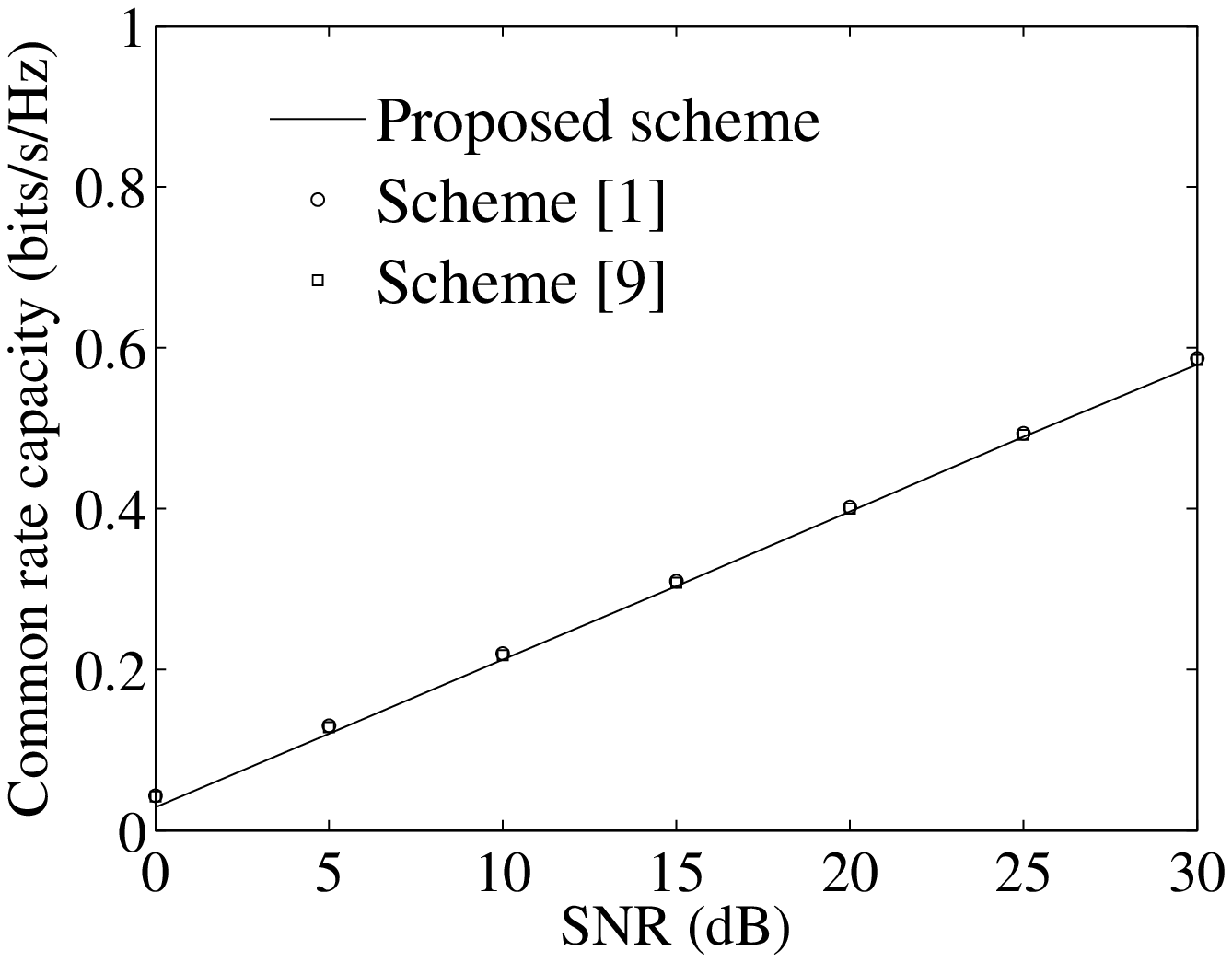}
 \label{fig:su3}}
   \caption{Common rate for a $L=10$ user FDF MWRN with different pairing schemes and different channel scenarios.}
\label{fig:Fig5}}
\end{figure}
\vspace{-15pt}
\subsection{Sum Rate}
Fig. \ref{fig:Fig6} shows the sum rate for the proposed and the
existing pairing schemes in an $L=10$ user FDF MWRN for the three
channel scenarios. All the numerical results are obtained by
averaging the instantaneous sum rates for the pairing schemes over a
large number of channel realizations. Fig. \ref{fig:su4} shows that
all the pairing schemes have the same average sum rate for equal
average channel gain scenario, which verifies Proposition \ref{prop:eq_sum}.
Similarly, Fig. \ref{fig:su5} and Fig. \ref{fig:su6} show that the
average sum rate for the proposed pairing scheme is larger than the
existing pairing schemes, which is in line with the Propositions
\ref{prop:uneq_sum} and \ref{prop:var_sum}. Intuitively, this can be explained as follows. In the proposed pairing scheme, the common user with the maximum average channel gain transmits more times than the other users. Unless all the average channel gains are equal, this results in a larger sum rate compared to the existing pairing schemes.
\begin{figure}
{\subfigure[Equal channel gain scenario]{
  \includegraphics[width=0.33\textwidth]{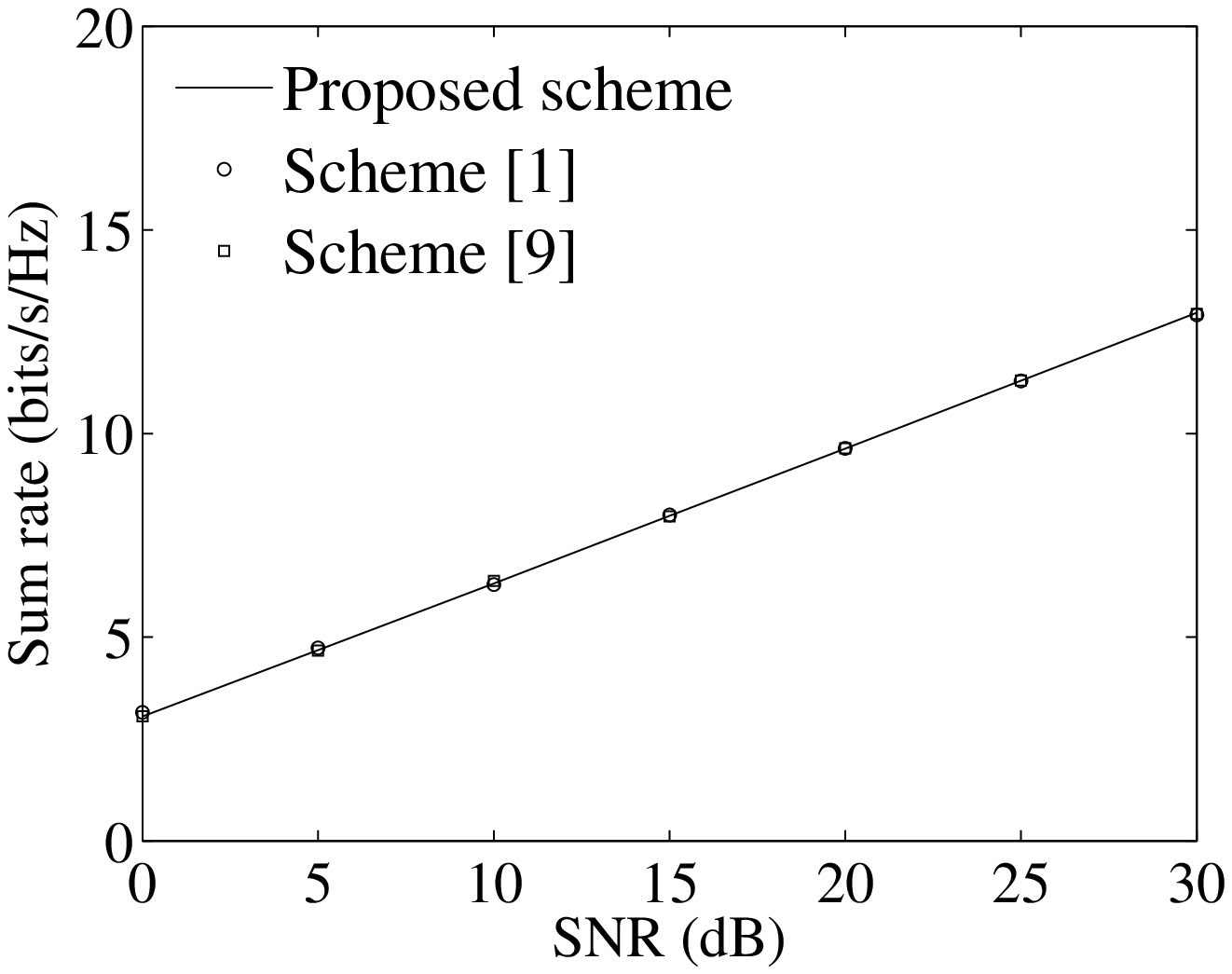}
  \label{fig:su4}}
\hspace{-20pt}\subfigure[Unequal channel gain scenario]{
  \includegraphics[width=0.33\textwidth]{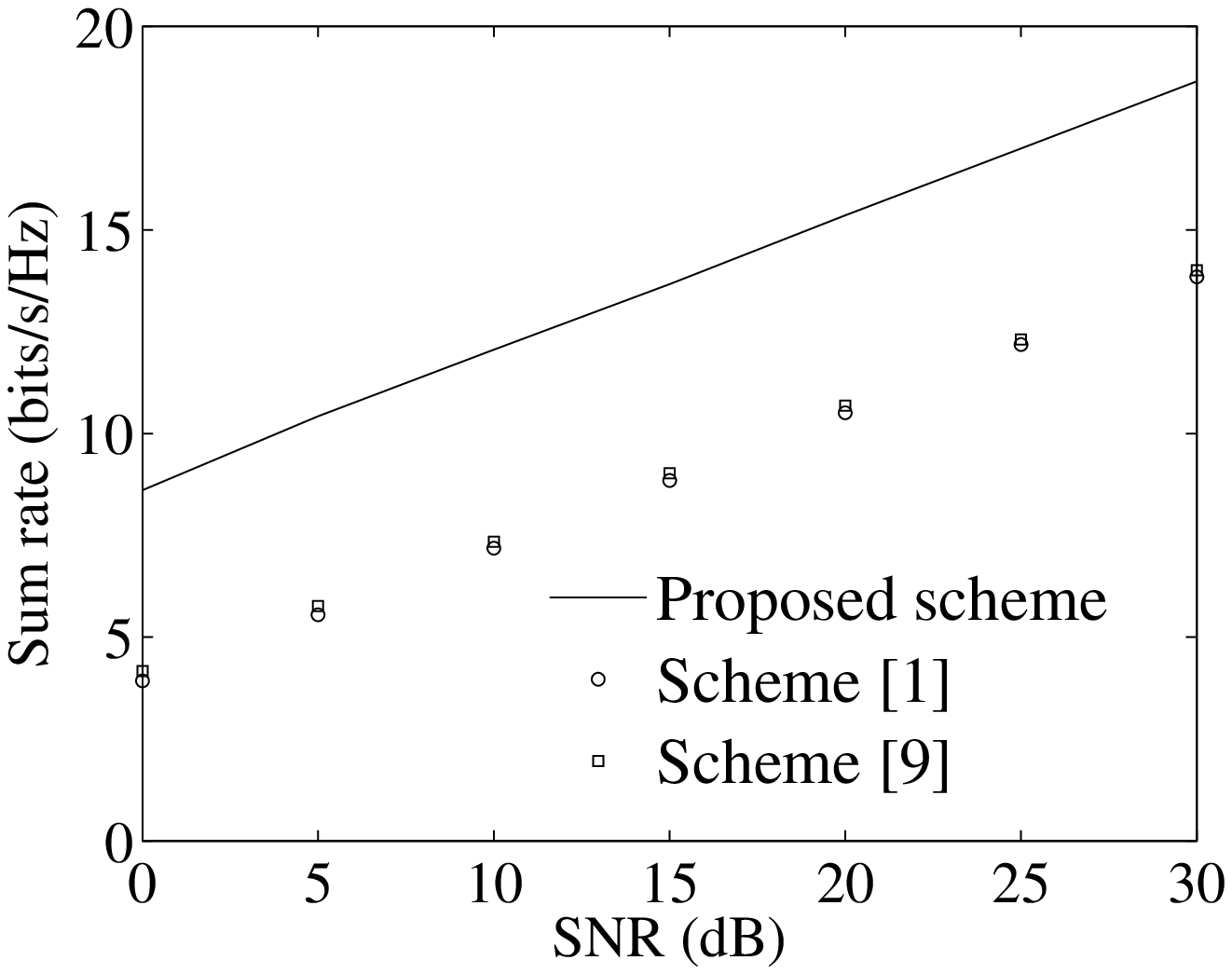}
 \label{fig:su5}}
\hspace{-20pt}\subfigure[Variable channel gain scenario]{
  \includegraphics[width=0.33\textwidth]{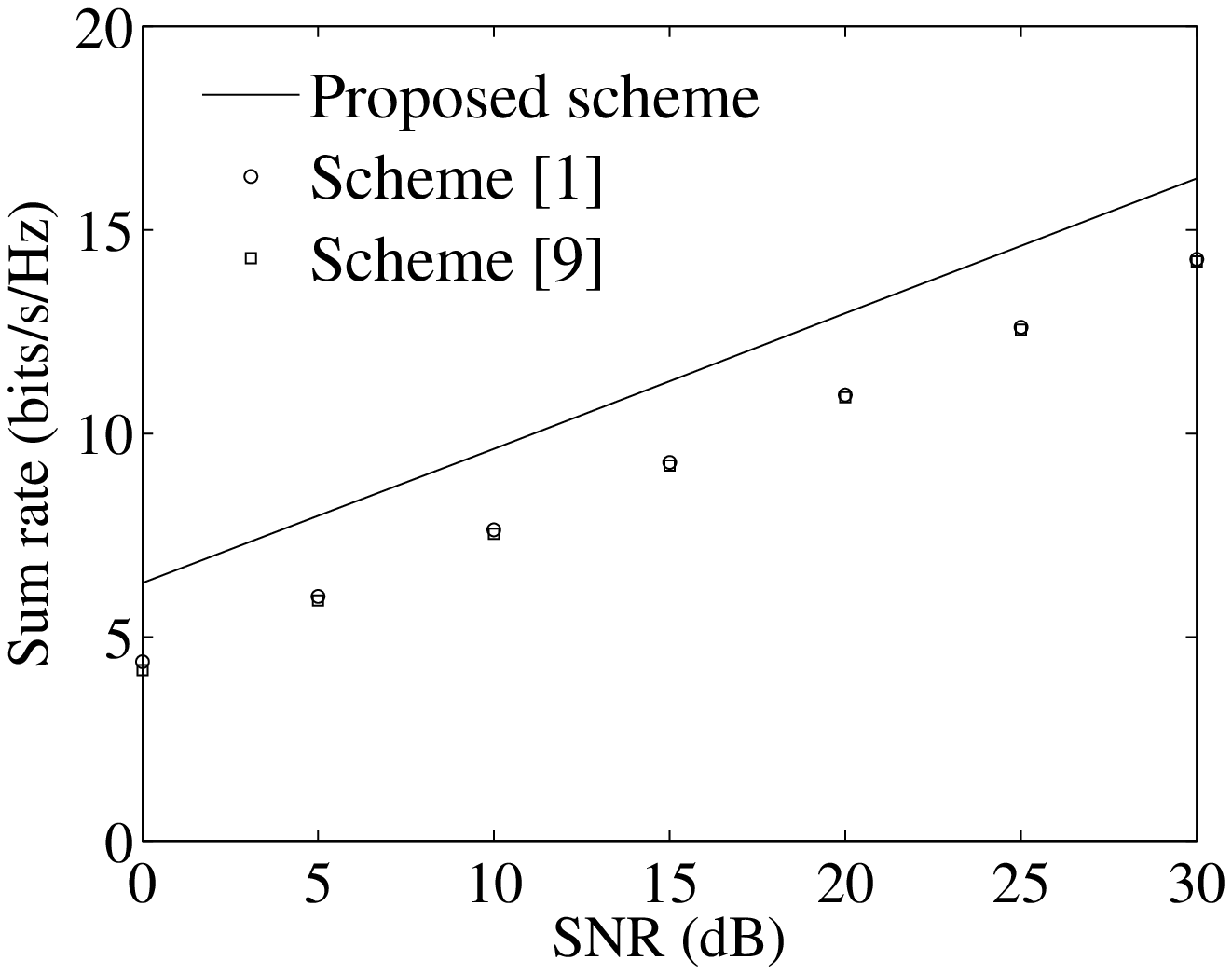}
 \label{fig:su6}}
   \caption{Sum rate for a $L=10$ user FDF MWRN with different pairing schemes and different channel scenarios.}
\label{fig:Fig6}}
\end{figure}
\subsection{Robustness of the Proposed Pairing Scheme}

To illustrate robustness of the proposed pairing scheme, we consider
two special cases of the variable average channel gain scenario,
where (i) $10\%$ of the users have distances below $0.1d_0$ (i.e.,
only a small proportion of the users are close to the relay and so,
they have good channel conditions) and (ii) $90\%$ of the users have
distances below $0.1d_0$ (i.e., a large proportion of users have good channel conditions). Fig.
\ref{fig:su7} plots the average common rate and Fig.
\ref{fig:su8} plots the average sum rate for the proposed and
existing pairing schemes. We can see from Fig. \ref{fig:su7} that
the common rate does not change much when either $10\%$ or $90\%$ of
users have good channel conditions as it depends upon the minimum average channel gain
in the system. However, we can see from Fig. \ref{fig:su8} that 
when the number of users with good channel conditions falls from
$90\%$ to $10\%$, the sum rate of the proposed scheme degrades to a
much lesser extent, compared to the existing pairing schemes. This
is because the average sum rate of the proposed pairing scheme
depends to a greater extent on the common user's average channel
gain compared to the other users' average channel gain (as evident
from \eqref{eq:Rsavg}). However, for the existing pairing schemes,
the sum rate depends on all the channel gains equally (as evident
from \eqref{eq:Rsongavg} and \eqref{eq:Rsnooriavg}) and degrades to a greater extent.
This illustrates the
robustness of the proposed pairing scheme.

\begin{figure}
{\subfigure[Common rate]{
  \includegraphics[width=0.44\textwidth]{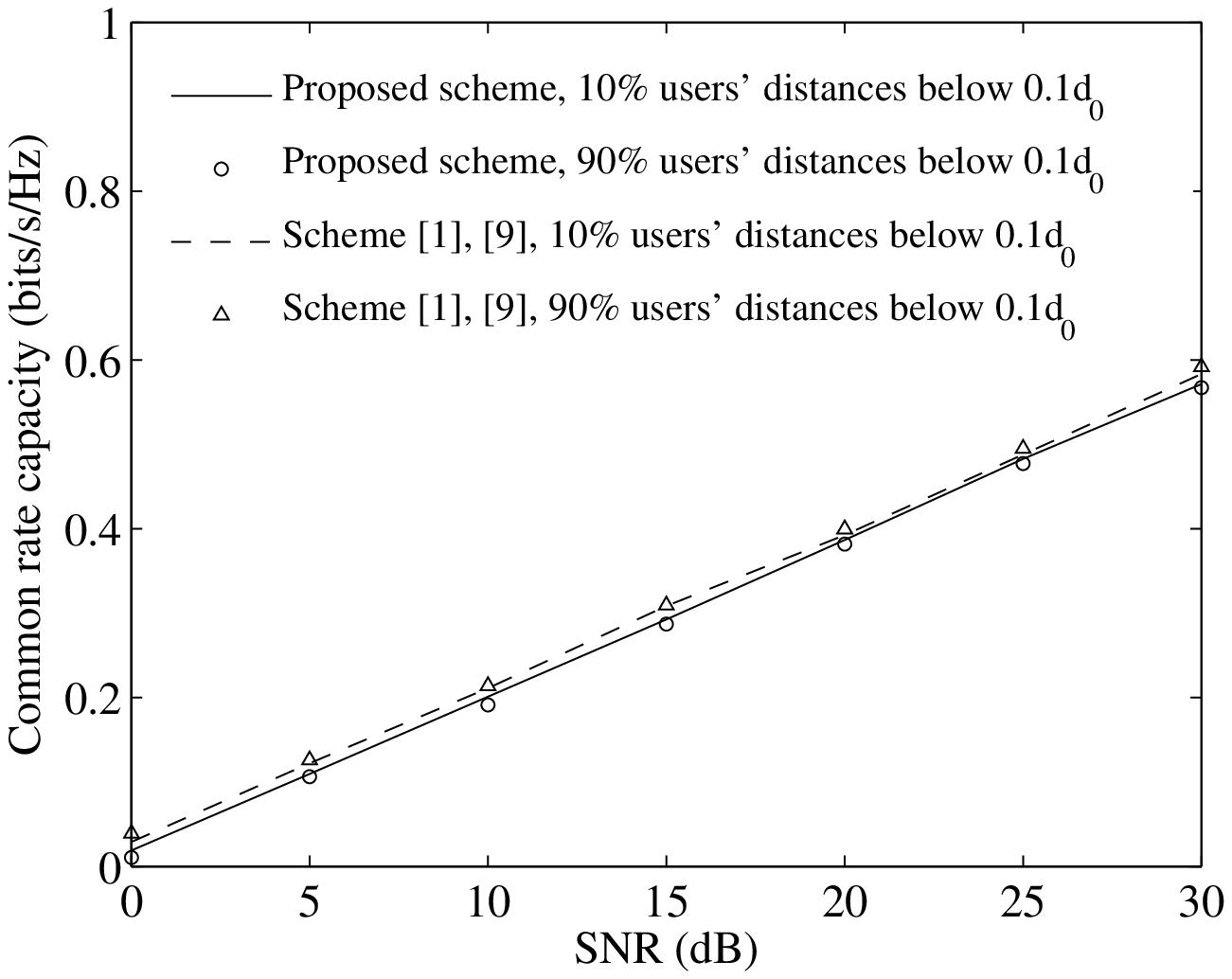}
  \label{fig:su7}}
\hspace{-20pt}\subfigure[Sum rate]{
  \includegraphics[width=0.44\textwidth]{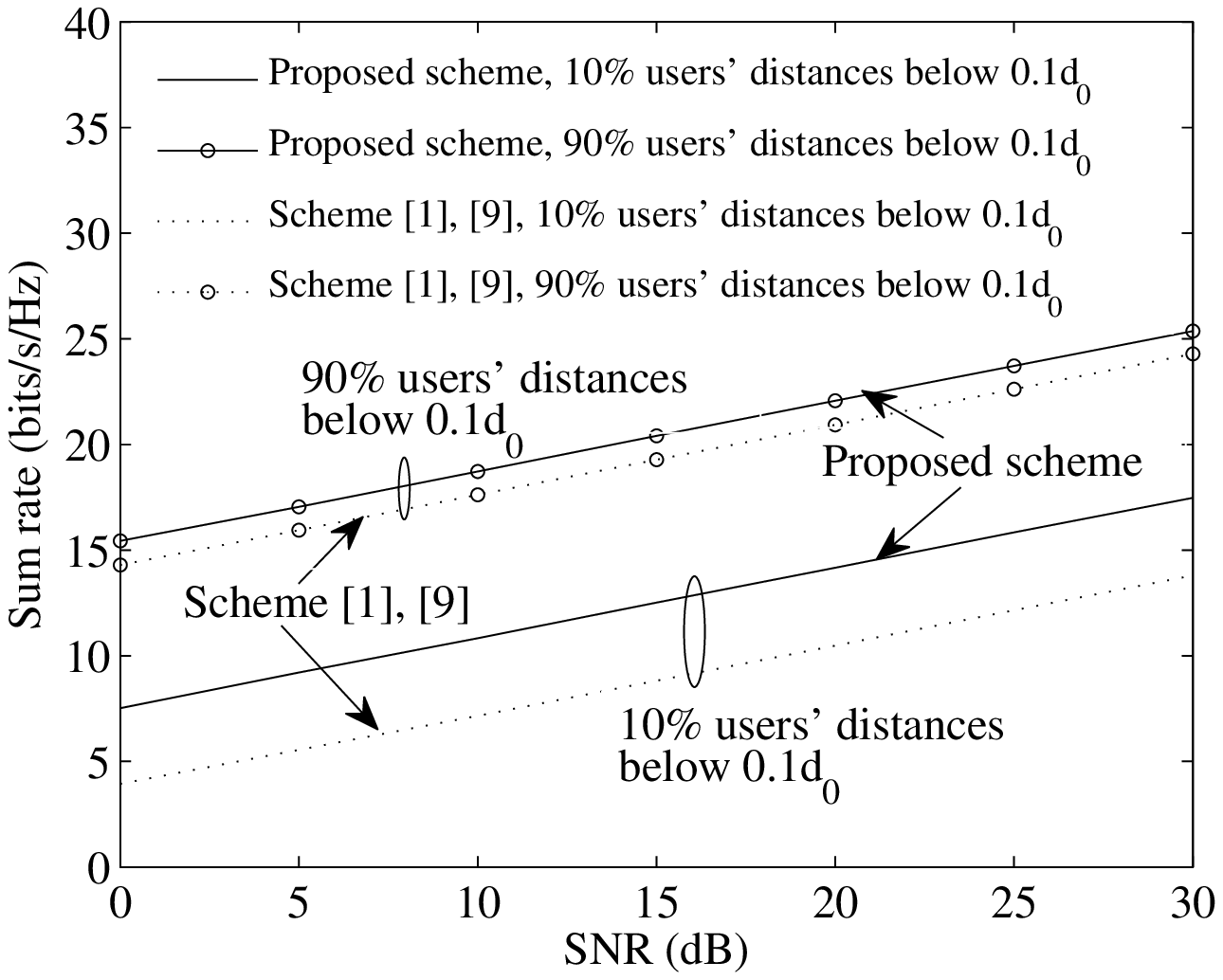}
 \label{fig:su8}}
  \caption{Common rate and sum rate of an $L=10$ user FDF
  MWRN when 10\% and 90\% users have distances below $0.1d_0$.}
\label{fig:Fig8}}
\end{figure}
\vspace{-15pt}
\subsection{Average SER}

Figures \ref{fig:su9}, \ref{fig:su10} and \ref{fig:su11} plot the
average SER of the proposed and the existing pairing schemes in an
$L=10$ user FDF MWRN for equal channel gain scenario (Fig.
\ref{fig:su9}), unequal channel gain scenario (Fig. \ref{fig:su10})
and variable channel gain scenario (Fig. \ref{fig:su11}). We can see
from all the figures that the simulation results match perfectly
with the analytical results at mid to high SNRs. 
This verifies the accuracy of Theorem \ref{th:aber}. Note
that the existing pairing schemes in \cite{Ong:2010} and
\cite{Noori:2012} have the same average SER. So, only the results
for pairing scheme in \cite{Ong:2010} have been shown in the above
figures. Figures \ref{fig:su9}, \ref{fig:su10} and \ref{fig:su11}
show that the proposed pairing scheme outperforms the existing
pairing schemes, in terms of average SER, which verifies
Propositions \ref{prop:eq_aber}-\ref{prop:var_aber}. In addition,
Fig. \ref{fig:su9} shows that the average SER at the common user and
other users are $5$ times and nearly $2.5$ times less than that of
the existing pairing schemes. This verifies the insight presented by Remark 1 and Proposition
\ref{prop:eq_aber}.

Fig. \ref{fig:su12} plots the average SER of the proposed and the
existing pairing schemes for the special cases of the variable
average channel gain scenario when (i) $10\%$ of the users have
distances below $0.1d_0$ and (ii) $90\%$ of the users have distances
below $0.1d_0$. The figure shows that the average SER for the
existing pairing schemes worsens by a larger extent compared to
that of the proposed scheme with the degradation in the users'
channel conditions. For the proposed pairing scheme, when the number
of users with good channel conditions increases from $10\%$ to
$90\%$, the average SER at other users improve significantly and
approaches the average SER at the common user. This is because the
average SER at the $\ell^{th}$ user depends not only on its own
channel conditions, but also the channel conditions of the common
($i^{th}$) user and the $m^{th}$ user (see \eqref{eq:aberother}).
This improvement in the overall channel conditions results in
improvement in the average SER, which illustrates the superiority of
the proposed pairing scheme.

\begin{figure}
{\subfigure[Equal channel gain scenario]{
  \includegraphics[width=0.44\textwidth]{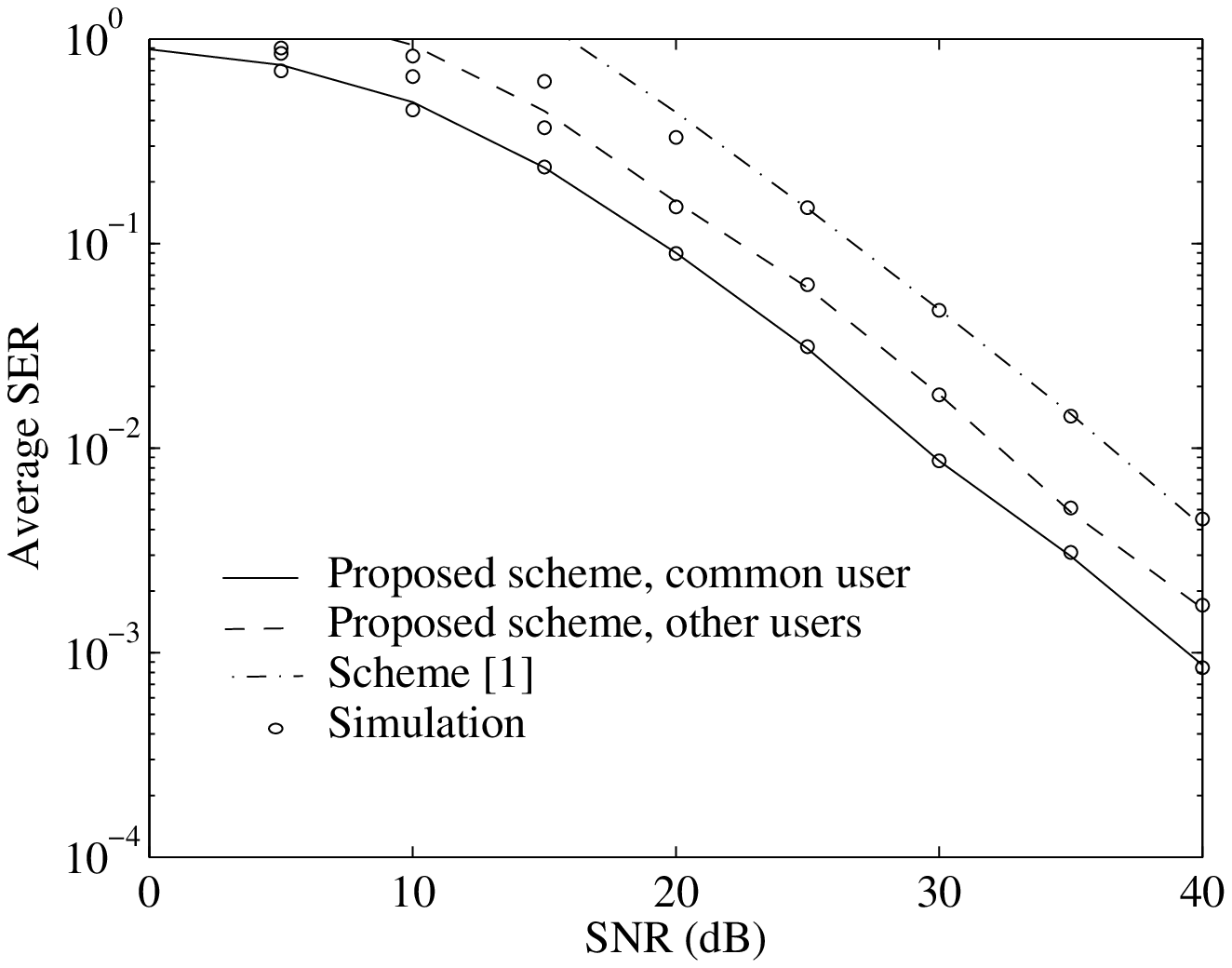}
  \label{fig:su9}}
\hspace{-20pt}\subfigure[Unequal channel gain scenario]{
  \includegraphics[width=0.44\textwidth]{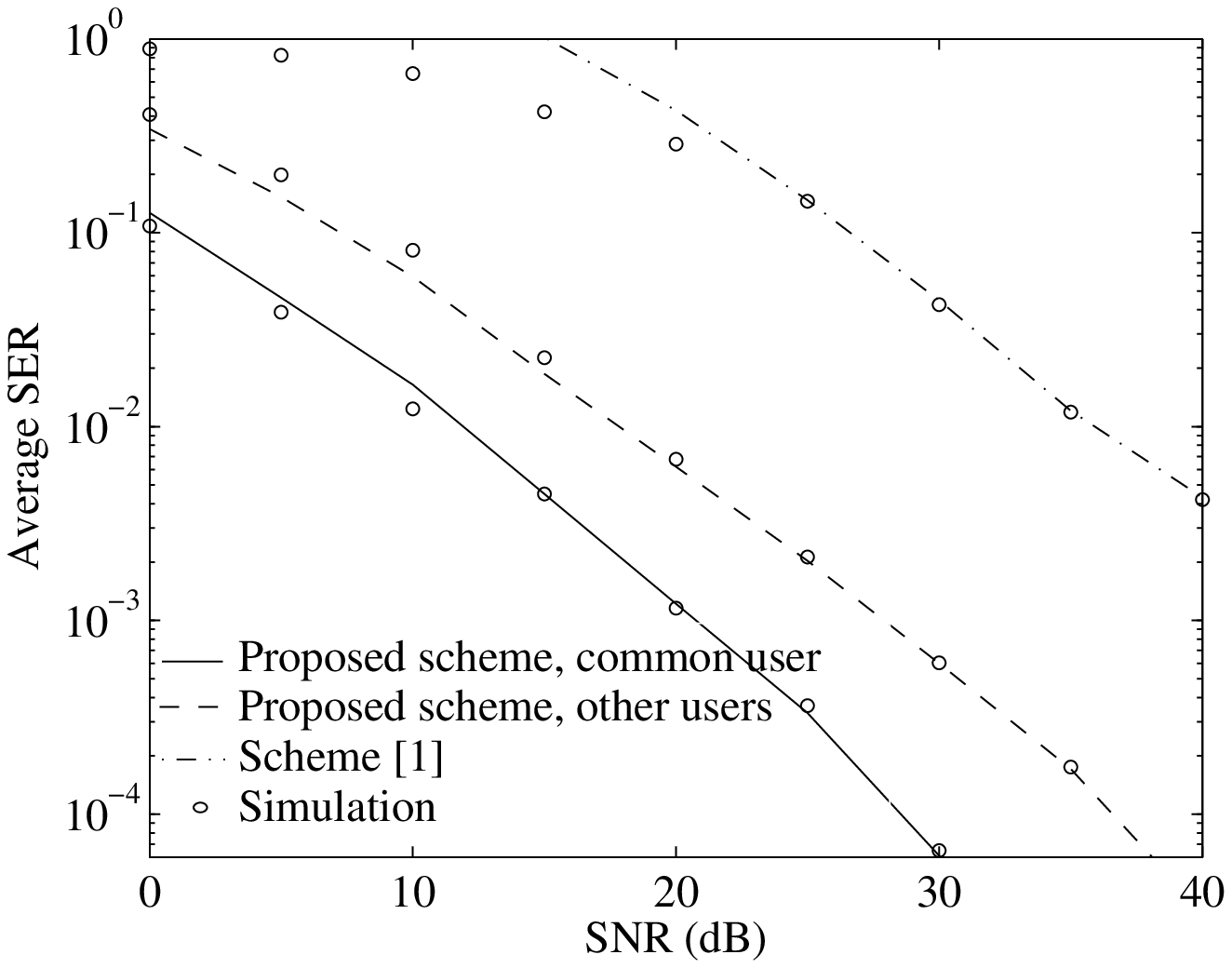}
 \label{fig:su10}}
  \caption{Average SER for equal and unequal average channel gains in an $L=10$ user FDF
  MWRN with different pairing schemes.}
\label{fig:Fig2}}
\vspace{-10pt}
\end{figure}
\begin{figure}
{\subfigure[Variable channel gain scenario]{
  \includegraphics[width=0.44\textwidth]{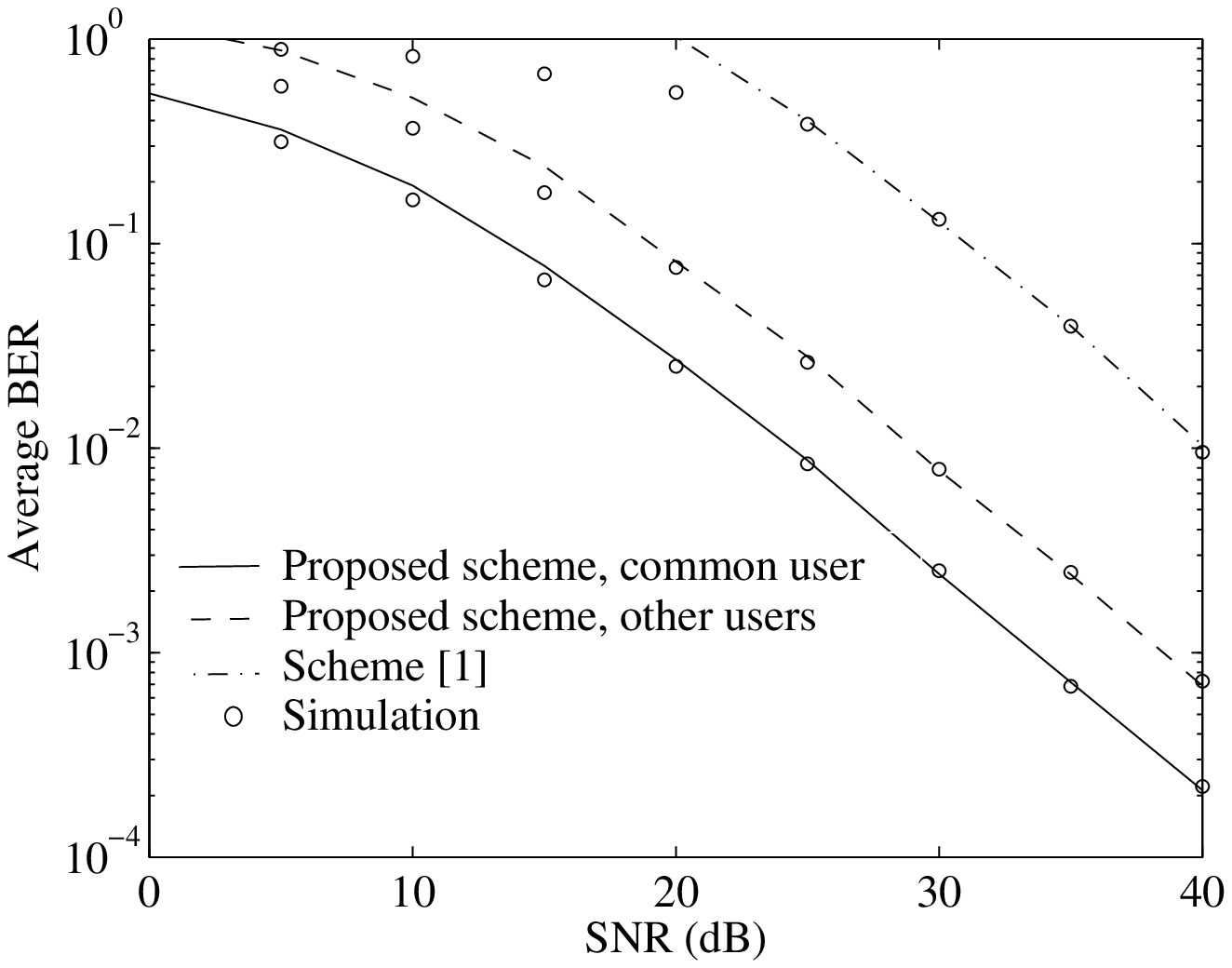}
  \label{fig:su11}}
\hspace{-20pt}\subfigure[Variable channel gain scenario where $10\%$
and $90\%$ users, respectively, have distances below $0.1d_0$.]{
  \includegraphics[width=0.44\textwidth]{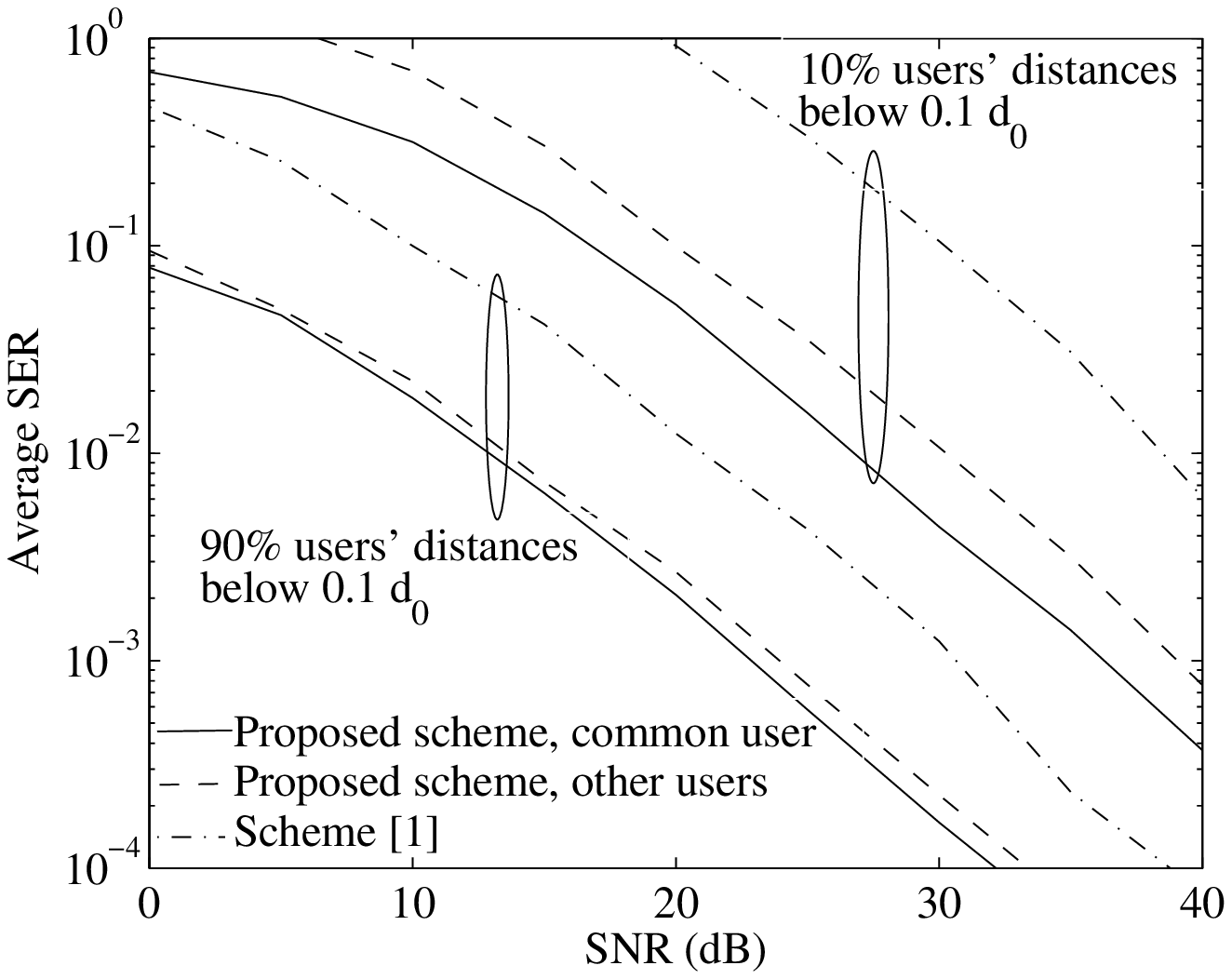}
 \label{fig:su12}}
  \caption{Average SER for variable average channel gains in an $L=10$ user FDF
  MWRN with different pairing schemes.}
\label{fig:Fig3}}
\vspace{-10pt}
\end{figure}
\vspace{-15pt}
\section{Conclusions} \label{conclusion}
In this paper, we have proposed a novel user pairing scheme in a FDF
MWRN. We have derived the upper bound on the average common rate (Theorem \ref{th:cap})
and the average sum rate (Theorem \ref{th:sum}) and the asymptotic average SER (Theorem \ref{th:aber}) for the proposed pairing scheme. We have analyzed the results in Theorems
\ref{th:cap}-\ref{th:aber} to compare the performance of the proposed scheme with
existing pairing schemes under different channel scenarios. The main
insights are summarized in Propositions
\ref{prop:eq_cap}-\ref{prop:var_aber}. Our
analysis shows that the proposed pairing scheme improves the
aforementioned performance metrics compared to that of the existing
pairing schemes for different channel conditions. 
\vspace{-20pt}
\section*{Appendix A\\ \vspace{-10pt} Proof of Theorem
\ref{th:cap}}\label{proof:cap}

In the proposed pairing scheme, the $i^{th}$ and the $\ell^{th}$
user transmit simultaneously in the $(\ell-1)^{th}$ time slot in the
multiple access phase. Also, in the broadcast phase, in the
$(\ell-1)^{th}$ time slot, the relay broadcasts the decoded network
coded message to all the users. For the \textit{multiple access
phase}, the optimum values of $\alpha$ and $\beta_j$ in
\eqref{eq:snrlatticenp} and \eqref{eq:snrlatreceived}, respectively,
are obtained by setting $\frac{dn}{d\alpha}=0$ and
$\frac{dn'}{d\beta_j}=0$, where $n$ and $n'$ are given in
\eqref{eq:scaling} and \eqref{eq:rec_scaled}, respectively. From
this, we obtain $\alpha=\frac{P\mid h_{i,r}\mid^2+P\mid
h_{\ell,r}\mid^2}{P\mid h_{i,r}\mid^2+P\mid h_{\ell,r}\mid^2+N_0}$
and $\beta_j=\frac{P_r\mid h_{j,r}\mid^2}{P_r\mid
h_{j,r}\mid^2+N_0}$. Substituting these values in
\eqref{eq:snrlatticenp} and \eqref{eq:snrlatreceived}, \eqref{eq:Rm}
and \eqref{eq:Rb} can be derived following the steps in
\cite{Nam:2011} and \cite{Noori:2012}, which are summarized as
follows. First, we assume that there exists a rate
$\bar{R}<R_{M,\ell-1}$ for which Pr$(n\notin\mathcal{V})$ (see
\eqref{eq:scaling}) is upper bounded by $e^{-N(E_p(\mu))}$, where
$E_p$ is the Poltyrev exponent,
$\mu=2^{2(R_{M,\ell-1}-\bar{R})}-O_N(1)$ \cite{Nam:2011} is the
volume to noise ratio of the lattice $\Lambda$ with respect to the
noise $n$, $O_N(1)$ indicates that the difference between $\mu$ and
$2^{2(R_{M,\ell-1}-\bar{R})}$ is a first degree function of $N$ and
$\Lambda$ is Poltyrev-good \cite{Nam:2011}. Then calculating $\mu$
and comparing with $2^{2(R_{M,\ell-1}-\bar{R})}-O_N(1)$ gives
\eqref{eq:Rm} in Theorem~\ref{th:cap}. For the \textit{broadcast
phase}, \eqref{eq:Rb} in Theorem~\ref{th:cap} can be obtained from
the point to point channel of the users. The details are
omitted here for the sake of brevity. This completes the proof.

\section*{Appendix B\\ \vspace{-11pt} Proof of Propositions
\ref{prop:eq_cap}$-$\ref{prop:var_cap}}\label{app:cap}

\noindent{\textit{Proof of Proposition 1:}} For the equal average channel gain scenario, $\sigma^2_{
h_{i,r}}=\sigma^2_{h_{\ell,r}}=\sigma^2_{
h_{\ell-1,r}}=\sigma^2_{h_{L-\ell+2,r}}$. Thus, the average common
rate expressed by \eqref{eq:Rcong}, \eqref{eq:Rcnoori} and
\eqref{eq:Rcnew} becomes the same for all the three pairing schemes.
This proves Proposition \ref{prop:eq_cap}.

\noindent{\textit{Proof of Proposition 2:}} For unequal average
channel gain scenario, as explained in Section~\ref{sec:pairing},
the transmit power of the $i^{th}$ user needs to be scaled by
$(L-1)$ to ensure transmission fairness. As a result, $\mid
h_{i,r}\mid^2$ can be replaced by $\frac{\mid h_{i,r}\mid^2}{L-1}$
in \eqref{eq:snrlatticenp}. In addition, for a fair comparison with
the existing pairing schemes, the transmit power $P$ in the proposed
scheme, needs to be multiplied by a factor $(2L-2)$. This is because
in the proposed pairing scheme, the common user transmits $(L-1)$
times with power $\frac{P}{L-1}$ and the other $(L-1)$ users
transmit once with power $P$. Hence, the average power per user
becomes $P$. However, for the existing pairing schemes, the average
power per user is $\frac{2L-2}{L}P$. Overall, \eqref{eq:Rcnew} can
be modified by scaling $\sigma^2_{h_{i,r}}$ with $L-1$ and replacing
$P$ with $(2L-2){P}$. Thus, the average common rate in
\eqref{eq:Rcnew} is
{\fontsize{11}{13.2}\selectfont
\begin{equation}\label{eq:Rcunequal}
E[R_{c}]\leq \frac{1}{2(L-1)}\log\left(\min\left(\frac{1}{1+
\frac{(L-1)\sigma_{ h_{\ell,r}}^2}{\sigma_
{h_{i,r}}^2}}+\frac{(2L-2)P\sigma_{h_{i,r}}^2}{(L-1)N_0},\frac{1}{1+\frac{\sigma_{
h_{i,r}}^2}{(L-1)\sigma_{
h_{\ell,r}}^2}}+\frac{(2L-2)P\sigma_{h_{\ell,r}}^2}{N_0}\right)\right).
\end{equation}\par}

\noindent We consider two cases:
\begin{itemize}
\item \textit{case 1}: $\sigma^2_{h_{i,r}}>(L-1)\sigma^2_{h_{\ell,r}}$. In this case, the second quantity in the right hand side of
\eqref{eq:Rcunequal} will be the minimum. Then, comparing
\eqref{eq:Rcunequal} and \eqref{eq:Rcong} shows that
$\frac{(2L-2)P\sigma^2_{h_{\ell,r}}}{N_0}>\frac{P\sigma^2_{h_{\ell,r}}}{N_0}$.
Thus, the average common rate for scheme \cite{Ong:2010} will be
smaller than that for the proposed pairing scheme, when
$\sigma^2_{h_{i,r}}<(L-1)\sigma^2_{h_{\ell-1,r}}$. Similarly, it can
be shown that for the pairing scheme in \cite{Noori:2012}, the
average common rate is smaller than that for the proposed scheme for
$\sigma^2_{h_{i,r}}<(L-1)\sigma^2_{h_{L-\ell+2,r}}$.

\item \textit{case 2}: $\sigma^2_{h_{i,r}}<(L-1)\sigma^2_{h_{\ell,r}}$. In this case, the first quantity in the right hand side of
\eqref{eq:Rcunequal} will be the minimum. Then comparing
\eqref{eq:Rcunequal} and \eqref{eq:Rcong} shows that the common rate
of scheme \cite{Ong:2010} will be smaller than that of the proposed
pairing scheme, when
$\sigma^2_{h_{i,r}}>(L-1)\sigma^2_{h_{\ell-1,r}}$. Similarly, it can
be shown that for the pairing scheme in \cite{Noori:2012}, the
average common rate is smaller than that of the proposed
scheme for $\sigma^2_{h_{i,r}}>(L-1)\sigma^2_{h_{L-\ell+2,r}}$.
\end{itemize}

Combining the result from the two cases, the proposed pairing scheme will have
larger average common rate compared to the two other
pairing schemes, which proves Proposition \ref{prop:uneq_cap}.

\noindent{\textit{Proof of Proposition 3:}} For the variable channel
gain scenario, $\sigma^2_{h_{i,r}}$ in \eqref{eq:Rcnew} is the
largest average channel gain in the system. Thus, from
\eqref{eq:Rcnew}, it can be shown that $\frac{\sigma_{
h_{i,r}}^2}{\sigma_{ h_{\ell,r}}^2}>\frac{\sigma_{
h_{\ell,r}}^2}{\sigma_{ h_{i,r}}^2}$ and the second quantity in the
right hand side of the inequality in \eqref{eq:Rcnew} is the
minimum. Then comparing \eqref{eq:Rcnew} and \eqref{eq:Rcong} would
show that $\frac{\sigma_{ h_{\ell-1,r}}^2}{\sigma_
{h_{\ell,r}}^2}\leq \frac{\sigma_{ h_{i,r}}^2}{\sigma_
{h_{\ell,r}}^2}$. 
Similarly, from \eqref{eq:Rcnew} and \eqref{eq:Rcnoori}, it can be
shown that $\frac{\sigma_{ h_{L-\ell+2,r}}^2}{\sigma_
{h_{\ell-1,r}}^2}\leq \frac{\sigma_{ h_{i,r}}^2}{\sigma_
{h_{\ell,r}}^2}$ and $\frac{\sigma_{ h_{L-\ell+2,r}}^2}{\sigma_
{h_{\ell,r}}^2}\leq \frac{\sigma_{ h_{i,r}}^2}{\sigma_
{h_{\ell,r}}^2}$. However, the impact of either of these ratios on
the overall average common rate is small compared to that of the
term $\frac{P\sigma^2_{h_{\ell,r}}}{N_0}$ in \eqref{eq:Rcnew},
\eqref{eq:Rcong} and \eqref{eq:Rcnoori}. Thus, the common rate for the proposed scheme will be practically the same as
that of the existing pairing schemes in \cite{Ong:2010} and
\cite{Noori:2012}, which proves Proposition \ref{prop:var_cap}.

\section*{Appendix C\\ Proof of Theorem
\ref{th:sum}}\label{proof:sum}
The achievable rate at the
$(\ell-1)^{th}$ time slot can be obtained from \eqref{eq:R}. Since,
$\frac{\mid h_{i,r}\mid^2}{\mid h_{i,r}\mid^2+\mid
h_{\ell,r}\mid^2}<1$, the achievable rate at the $(\ell-1)^{th}$
time slot will be determined by the achievable rate at the
corresponding time slot in the multiple access phase. Then,
obtaining the achievable rate in all the time slots and adding them
results into \eqref{eq:Rs}. The detailed steps are omitted here for
the sake of brevity.

\section*{Appendix D\\ Proof of Propositions
\ref{prop:eq_sum}$-$\ref{prop:var_sum}}\label{app:sum}

\noindent{\textit{Proof of Proposition 4:}} For the equal average channel gain scenario, $\sigma^2_{
h_{i,r}}=\sigma^2_{h_{\ell,r}}=\sigma^2_{
h_{\ell-1,r}}=\sigma^2_{h_{L-\ell+2,r}}$. Thus, the sum rates
expressed by \eqref{eq:Rsavg}, \eqref{eq:Rsongavg} and
\eqref{eq:Rsnooriavg} become the same for all the three pairing
schemes, which proves Proposition \ref{prop:eq_sum}.

\noindent{\textit{Proof of Proposition 5:}} For the unequal average channel
gain scenario, if the common user is made to transmit at all the
time slots with scaled power, the sum rate can be obtained from
\eqref{eq:Rsavg} with $\sigma^2_{h_{i,r}}$ scaled by $L-1$ and $P$
replaced with $(2L-2){P}$. In this case, the average sum rate in
\eqref{eq:Rsavg} becomes
{\fontsize{11}{13.2}\selectfont
\begin{align}\label{eq:Rsunequal}
E[R_s]&=\frac{1}{2(L-1)}\sum_{\ell=1, \ell\neq
i}^{L}\left(\log\left(\frac{1}{1+\frac{(L-1)\sigma^2_
{h_{\ell,r}}}{\sigma^2_{h_{i,r}}}}+\frac{(2L-2)P\sigma^2_{
h_{i,r}}}{(L-1)N_0}\right)+\log\left(\frac{1}{1+\frac{\sigma^2_
{h_{i,r}}}{(L-1)\sigma^2_{h_{\ell,r}}}}+\frac{(2L-2)P\sigma^2_
{h_{\ell,r}}}{N_0}\right)\right).
\end{align}
\par}
%
\normalsize
Comparing \eqref{eq:Rsunequal} and \eqref{eq:Rsongavg} shows that
$2\sigma^2_{h_{i,r}}>\sigma^2_{h_{\ell-1,r}}$ and
$(2L-2)\sigma^2_{h_{\ell,r}}>\sigma^2_{h_{\ell,r}}$. 
In a similar manner, it can be shown
that the average sum rate of the proposed scheme is larger than that
of the scheme in \cite{Noori:2012}. This completes the proof for
Proposition \ref{prop:uneq_sum}.

\noindent{\textit{Proof of Proposition 6:}} For the variable average
channel gain scenario, we have $\sigma^2_ {h_{i,r}}\geq
\sigma^2_{h_{\ell-1,r}}$. Hence, it is clear that
$\sum_{\ell=1,\ell\neq
i}^{L}\sigma^2_{h_{i,r}}>\sum_{\ell=2}^{L}\sigma^2_ {h_{\ell-1,r}}$.
Similarly, it can be shown that $\sum_{\ell=1,\ell\neq
i}^{L}\sigma^2_{h_{i,r}}>\sum_{\ell=2}^{L}\sigma^2_
{h_{L-\ell+2,r}}$. Thus the proposed pairing scheme will have a
larger average sum rate (given by \eqref{eq:Rsavg}), compared to
that of the pairing schemes in \cite{Ong:2010} and \cite{Noori:2012}
(given by \eqref{eq:Rsongavg} and \eqref{eq:Rsnooriavg},
respectively). This proves Proposition \ref{prop:var_sum}.
\section*{Appendix E\\Derivation of $P_{\sqrt{M}-PAM,NC}(i,m)$ in \eqref{eq:df2waycommon}}
In this appendix, we derive the probability of incorrectly decoding a PAM network coded signal by building on the symbol mapping idea in \cite{Ronald:2013}. We detail the necessary steps to obtain an exact expression for use in the analysis.

We assume $\sqrt{M}$-PAM signals at the $i^{th}$ and the $m^{th}$ users, such that the users' signals can take values from the set $\mathcal{S}=\{\pm1,\pm3,...,\pm(\sqrt{M}-1)\}$ and we denote each element of the set $\mathcal{S}$ as $s$. The true network coded signal resulting from the sum of the $\sqrt{M}$-PAM signals have a constellation with $(2\sqrt{M}-1)$ points, which takes values from the set $\mathcal{S}_{NC}=\{0,\pm2,...,\pm(2\sqrt{M}-2)\}$. 

In a noiseless environment, the relay maps the network coded signal to a $\sqrt{M}$-PAM signal $s$ in such a way that the same network coded signal is not mapped to different elements of $\mathcal{S}$ (i.e., there is no ambiguity). This can be ensured by mapping the network coded signal into modulo-$\sqrt{M}$ sum of the actual symbols at the $i^{th}$ and the $m^{th}$ user. In a noisy environment, the relay maps the network coded signal into $\hat{s}$ and broadcasts to the users, who decode the signal as $\dhat{s}$. The end-to-end probability of incorrectly detecting a network-coded signal resulting from $\sqrt{M}$-PAM signals, can be obtained from the sum of the off-diagonal elements of the product of two $\sqrt{M}\times\sqrt{M}$ matrices $C$ and $D$, with elements $c_{p,q}=P(\hat{s}=q|s=p)$ and $d_{p',q'}=P(\dhat{s}=q'|\hat{s}=p')$, respectively, where $p,q,p',q'\in[0,\sqrt{M}-1]$, multiplied by the factor $\sqrt{M}$. That is,
{\fontsize{11}{13.2}\selectfont
\begin{equation}\label{eq:pam_nc}
P_{\sqrt{M}-PAM,NC}(i,m)=\frac{1}{\sqrt{M}}\left(\sum_{p,q=0}^{\sqrt{M}-1}c_{p,q}\sum_{p',q'=0,p'\neq p, q'\neq q}^{\sqrt{M}-1}d_{p',q'}\right)
\end{equation}
\par}

\noindent The coefficients $c_{p,q}$ can be obtained by calculating the probability that the signal received at the relay whose mean (which takes value from the set $\mathcal{S}_{NC}$) should be mapped to $s=p$, falls in the decision region for the signal whose mean is mapped to $s=q$. Thus, $c_{p,q}$ can be expressed as the sum of $Q$-functions, as follows:
{\fontsize{11}{13.2}\selectfont
\begin{align}\label{eq:cpq}
c_{p,q}=\left\{\begin{array}{ll}\sum\limits_{u=1,u=\textrm{odd}}^{2(2\sqrt{M}-2)-1}a_{p,q,u}Q(u\sqrt{\gamma_{r}(i,m)}) & p\neq q\\
1+\sum\limits_{u=1,u=\textrm{odd}}^{2(2\sqrt{M}-2)-1}a_{p,q,u}Q(u\sqrt{\gamma_{r}(i,m)}) & p=q
\end{array}
\right.
\end{align}
\par}

\noindent where, $\gamma_r(i,m)$ represents the SNR of the $i^{th}$
and the $m^{th}$ users' signal at the relay for $M$-QAM modulation and
can be obtained 
as
{\fontsize{11}{13.2}\selectfont
\begin{equation}\label{eq:snrnpqam}
\gamma_{r}(i,m)=\frac{P\min(\mid {h}_{i,r}\mid^2,\mid
{h}_{m,r}\mid^2)}{E_{av}N_0}.
\end{equation}
\par}

\noindent \noindent where $E_{av}$ is the average energy of symbols for $\sqrt{M}$-PAM modulation (e.g., $E_{av}=5$ for $M=16$).

Similarly, the coefficients $d_{p',q'}$ can be obtained by calculating the probability that the signal received at the $i^{th}$ user with mean $s=p'$, falls in the decision region for the signal with mean $s=q'$. Thus,
{\fontsize{11}{13.2}\selectfont
\begin{align}\label{eq:dp'q'}
d_{p',q'}=\left\{\begin{array}{ll}\sum\limits_{v=1,v=\textrm{odd}}^{2(\sqrt{M}-1)-1}b_{p',q',v}Q(v\sqrt{\gamma_{i}}) & p'\neq q'\\
1+\sum\limits_{v=1,v=\textrm{odd}}^{2(\sqrt{M}-1)-1}b_{p',q',v}Q(v\sqrt{\gamma_{i}}) & p'=q'
\end{array}
\right.
\end{align}
\par}

\noindent where $\gamma_i=\frac{P_r\mid h_{r,i}\mid^2}{E_{av}N_0}$
represents the SNR at the $i^{th}$ user. The coefficients $a_{p,q,u}$ and $b_{p',q',v}$ for $M=16$ (or $\sqrt{M}=4$), have been tabulated in Table \ref{table:ab}.


\begin{table}[t]
{\fontsize{11}{13.2}\selectfont
\caption{Illustration of the coefficients $a_{p,q,u}$ and $b_{p',q',v}$ for $M=16$ corresponding to the probability $P(\hat{V}_{i,m}\neq V_{i,m})$ and $P(\dhat{V}_{i,m}\neq \hat{V}_{i,m})$, respectively.}
\label{table:ab} \centering
\begin{tabular}{|c |c |c |c |c |c |c |c |c |c |c |}
\hline
\multicolumn{2}{|c |}{} & \multicolumn{4}{|c |}{$a_{p,q,u}$} & & \multicolumn{4}{|c |}{$b_{p',q',v}$} \\
\hline
$p,p'$ & {\backslashbox{$u$~}{$q$~~}} & $q=0$ & $q=1$ & $q=2$ & $q=3$ & {\backslashbox{$v$~}{$q'$~~}}& $q'=0$ & $q'=1$ & $q'=2$ & $q'=3$\\
\cline{1-11} \multirow{6}{*}{$p=0$} & $u=1$ & $-7/4$ & $1$ & $0$&$3/4$ &  \multirow{2}{*}{$v=1$} & \multirow{2}{*}{$1/4$} & \multirow{2}{*}{$1/4$} & \multirow{2}{*}{$0$}& \multirow{2}{*}{$0$}\\
\cline{2-6}  & $u=3$ &0 &$-1$ &$7/4$  & $-3/4$ & & & & &\\
\cline{2-11}  & $u=5$ &0 & $3/4$& $-1$ &$1/4$ & \multirow{2}{*}{$v=3$} & \multirow{2}{*}{$0$} & \multirow{2}{*}{$-1/4$}& \multirow{2}{*}{$1/4$} & \multirow{2}{*}{$0$}\\
\cline{2-6}  & $u=7$ &$1$ &$-3/4$ &$0$  &$-1/4$ & & & & & \\
\cline{2-11}  & $u=9$ &$-1/4$ &$1/4$ &$0$  &$0$ & \multirow{2}{*}{$v=5$} & \multirow{2}{*}{$0$} & \multirow{2}{*}{$0$} & \multirow{2}{*}{$-1/4$}  & \multirow{2}{*}{$1/4$}\\
\cline{2-6}  & $u=11$ &0 &$-1/4$ & $1/4$ &$0$ & & & & &\\
\cline{1-11}
\multirow{6}{*}{$p=1$} & $u=1$ & $1$ & $1$ & $0$&$0$&  \multirow{2}{*}{$v=1$} & \multirow{2}{*}{$1/4$} & \multirow{2}{*}{$-1/4$} & \multirow{2}{*}{$1/4$}& \multirow{2}{*}{$0$}\\
\cline{2-6}  & $u=3$ &-1/2 &$0$ &$-1/2$  & $1$ & & & & &\\
\cline{2-11}  & $u=5$ &1/2 & $0$& $1/2$ &$-1$ &  \multirow{2}{*}{$v=3$} & \multirow{2}{*}{$-1/4$} & \multirow{2}{*}{$1/4$} & \multirow{2}{*}{$-1/4$}& \multirow{2}{*}{$1/4$}\\
\cline{2-6}  & $u=7$ &$-1/2$ &$1$ &$-1/2$  &$0$ & & & & &\\
\cline{2-11}  & $u=9$ &$1/2$ &$-1$ &$1/2$  &$0$ &  \multirow{2}{*}{$v=5$} & \multirow{2}{*}{$0$} & \multirow{2}{*}{$1/4$} & \multirow{2}{*}{$0$}& \multirow{2}{*}{$-1/4$}\\
\cline{2-6}  & $u=11$ &0 &$0$ & $0$ &$0$ & & & & &\\
\cline{1-11}
\multirow{6}{*}{$p=2$} & $u=1$ & $1$ & $1$ & $-7/4$&$3/4$ &  \multirow{2}{*}{$v=1$} & \multirow{2}{*}{$0$} & \multirow{2}{*}{$1/4$} & \multirow{2}{*}{$-1/4$}& \multirow{2}{*}{$1/4$}\\
\cline{2-6}  & $u=3$ &7/4 &$-1$ &$0$  & $-3/4$ & & & & &\\
\cline{2-11}  & $u=5$ &-1 & $3/4$& $0$ &$1/4$ &  \multirow{2}{*}{$v=3$} & \multirow{2}{*}{$1/4$} & \multirow{2}{*}{$-1/4$} & \multirow{2}{*}{$1/4$}& \multirow{2}{*}{$1/4$}\\
\cline{2-6}  & $u=7$ &$0$ &$-3/4$ &$1$  &$-1/4$ & & & & &\\
\cline{2-11}  & $u=9$ &$0$ &$1/4$ &$-1/4$  &$0$ &  \multirow{2}{*}{$v=5$} & \multirow{2}{*}{$-1/4$} & \multirow{2}{*}{$0$} & \multirow{2}{*}{$1/4$}& \multirow{2}{*}{$0$}\\
\cline{2-6}  & $u=11$ &$1/4$ &$-1/4$ & $0$ &$0$ & & & & &\\
\cline{1-11}
\multirow{6}{*}{$p=3$} & $u=1$ & $1$ & $0$ & $1$&$-2$ &  \multirow{2}{*}{$v=1$} & \multirow{2}{*}{$0$} & \multirow{2}{*}{$0$} & \multirow{2}{*}{$1/4$}& \multirow{2}{*}{$-1/4$}\\
\cline{2-6}  & $u=3$ &-1 &$2$ &$-1$  & $0$ & & & & &\\
\cline{2-11}  & $u=5$ &1 & $-2$& $1$ &$0$ &  \multirow{2}{*}{$v=3$} & \multirow{2}{*}{$1/4$} & \multirow{2}{*}{$1/4$} & \multirow{2}{*}{$-1/4$}& \multirow{2}{*}{$0$}\\
\cline{2-6}  & $u=7$ &$0$ &$0$ &$0$  &$0$ & & & & &\\
\cline{2-11}  & $u=9$ &$0$ &$0$ &$0$  &$0$ &  \multirow{2}{*}{$v=5$} & \multirow{2}{*}{$0$} & \multirow{2}{*}{$-1/4$} & \multirow{2}{*}{$0$}& \multirow{2}{*}{$0$}\\
\cline{2-6}  & $u=11$ &$0$ &$0$ & $0$ &$0$ & & & & &\\
\cline{1-11}
\end{tabular}\par}
\vspace{-10pt}
\end{table}
\vspace{-20pt}
\section*{Appendix F\\ \vspace{-10pt}Proof of Theorem
\ref{th:aber}}\label{app:aber}

The proof follows the steps outlined in \cite{Shama:2012}, which are applicable to any user
pairing scheme. However, for the proposed pairing scheme, we need
to modify these steps to take into account different error
probabilities at the common user and
the other users. The modified steps can be summarized as follows:
\begin{enumerate}
\item Determine the probabilities that the
$i^{th}$ user and the $\ell^{th}$ user incorrectly decode a network
coded message, respectively.

\item Define the possible error
cases for the $k^{th} (k\in[1,L-1])$ error event at the $i^{th}$ and
the $\ell^{th} $user, where the $k^{th}$ error event means that
exactly $k$ number of users' messages are incorrectly decoded.

\item Express the probabilities of
the aforementioned error cases in terms of the probabilities of
incorrectly decoding a network coded message.

\item Combine the probabilities of
different error cases to determine the probability of the $k^{th}$
error event at the $i^{th}$ and the $\ell^{th}$ user.

\item Obtain the expected
probability of all the error events to determine the exact average
SER expression.

\item Apply the high SNR approximation to obtain approximate but accurate average SER expressions.

\end{enumerate}

Now, we illustrate these steps in detail:\\
\textit{\underline{Step-1}}: The probabilities of incorrectly
decoding a network coded message at the $i^{th}$ and the $\ell^{th}$
user are obtained in \eqref{eq:df2waycommon} and
\eqref{eq:df2wayother}, respectively.

\noindent\textit{\underline{Step-2}}: In the proposed pairing
scheme, $k$ error events can occur in two cases

\begin{itemize}
\item $A_k$: If the decoding user incorrectly extracts exactly $k$
users' messages except the $i^{th}$ user's message. That is, the
decoding user ($j^{th}$ user, where $j\in[1,L]$) incorrectly decodes
$k$ network coded messages $V_{i,m_{1}},
V_{i,m_{2}},...,V_{i,m_{k}}$ and correctly decodes the remaining
$L-1-k$ network coded messages, where $m_1,m_2,...,m_k\in[1,L],
m_1\neq m_2\neq...\neq m_k\neq j$.

\item $B_k$: If the decoding user incorrectly decodes exactly $k$
users' messages including the $i^{th}$ user's message. This happens
when the decoding user ($\ell^{th}$ user, where $\ell\in[1,L],
\ell\neq i$ ) incorrectly decodes $V_{i,\ell}$ and correctly decodes
$k-1$ other network coded messages, $V_{i,m_{1}},
V_{i,m_{2}},...,V_{i,m_{k-1}}$ and incorrectly decodes the remaining
$L-1-k$ messages, where $m_1,m_2,...,m_{k-1}\in[1,L], m_1\neq
m_2\neq...\neq m_{k-1}\neq i,\ell$. 
\end{itemize}

Note that, the error case $A_k$ is applicable both for the common
user and the other users. However, case $B_k$ is applicable only for
users except the common user. 

\noindent\textit{\underline{Step-3}}: The probabilities of the
aforementioned error cases for the $i^{th}$ and the $\ell^{th}$
users are
{\fontsize{11}{13.2}\selectfont
\begin{equation}\label{eq:Aki}
P_{i,A_k}=\sum_{m_a=1,m_a\neq
i}^{L}\prod_{a=1}^{k}P_{FDF}(i,m_a)\prod_{m_b=1, m_b\neq m_a,
i}^{L}\{1-P_{FDF}(i,m_b)\}.
\end{equation}
\begin{equation}\label{eq:Akl}
P_{\ell,A_k}=\sum_{m_a=1,m_a\neq
i,\ell}^{L}\prod_{a=1}^{k}P_{FDF}(\ell,m_a)\prod_{m_b=1,m_b\neq
\ell,m_a}^{L}\{1-P_{FDF}(\ell,m_b)\}.
\end{equation}
\vspace{-10pt}
\begin{equation}\label{eq:Bkl}
P_{\ell,B_k}=\left\{\begin{array}{ll}
P_{FDF}(\ell,i)\sum_{m_a=1,m_a\neq
i,\ell}^{L}\prod_{a=1}^{k-1}\{1-P_{FDF}(\ell,m_a)\}\prod_{m_b=1,m_b\neq
i,\ell,m_a}^{L}P_{FDF}(\ell,m_b) &\mbox{$1<k<L-1$}\\
P_{FDF}(\ell,i)\prod_{m_b=1,m_b\neq
i,\ell}^{L}\{1-P_{FDF}(\ell,m_b)\}
&\mbox{$k=1$}\\
P_{FDF}(\ell,i)\sum_{m_a=1,m_a\neq
i,\ell}^{L}\prod_{a=1}^{L-1}\{1-P_{FDF}(\ell,m_a)\} &\mbox{$k=L-1$}.
\end{array}
\right.
\end{equation}\par}

\noindent\textit{\underline{Step-4}}: The probability of $k$ error
events for the $i^{th}$ and the $\ell^{th}$ user can be expressed as
%
\begin{align}\label{eq:kerrors}
P(i,k)=P_{i,A_k},
P(\ell,k)=P_{\ell,A_k}+P_{\ell,B_k}.
\end{align}

\noindent\textit{\underline{Step-5}}: Since, each user decodes $L-1$
other users' messages in an $L$-user MWRN, there are $L-1$ possible
error events. Thus, averaging over all the possible error events,
the average SER at the $i^{th}$ and the $\ell^{th}$ user can be
obtained as:
%
\begin{align}\label{eq:aber_exact}
P_{i,avg}=\frac{1}{L-1}\sum_{k=1}^{L-1}kP_{i,A_k},
P_{j,avg}=\frac{1}{L-1}\sum_{k=1}^{L-1}k(P_{\ell,A_k}+P_{\ell,B_k})
\end{align}

\noindent\textit{\underline{Step-6}}: At high SNR, the higher order
error terms in \eqref{eq:kerrors} can be neglected. Thus,
$P_{i,A_k}\approx0$ and $P_{\ell,A_k}\approx0$ for $k>1$ (see
\eqref{eq:Aki} and \eqref{eq:Akl}). Similarly,
$P_{\ell,B_k}\approx0$ for $k<L-1$ (see \eqref{eq:Bkl}). Thus, at
high SNR, \eqref{eq:aber_exact} can be approximated as
%
%
\begin{align}\label{eq:aber_k}
P_{i,avg}=\frac{1}{L-1}P_{i,A_1},
P_{\ell,avg}=\frac{1}{L-1}\left(P_{\ell,A_1}+(L-1)P_{\ell,B_{L-1}}\right).
\end{align}

In addition, at high SNR, we can approximate the terms $\{1-P_{FDF}(i,m_b)\}$,
$\{1-P_{FDF}(\ell,m_b)\}$ and $\{1-P_{FDF}(\ell,m_a)\}$ in
\eqref{eq:Aki}, \eqref{eq:Akl} and \eqref{eq:Bkl} to be $1$. Thus,
substituting \eqref{eq:Aki}, \eqref{eq:Akl} and \eqref{eq:Bkl} in
\eqref{eq:aber_k}, the average SER at the $i^{th}$ and the
$\ell^{th}$ user at high SNR can be expressed as
%
{\fontsize{11}{13.2}\selectfont
\begin{align}
P_{i,avg}=\frac{1}{L-1}{\sum_{m_1=1,m_1\neq
i}^{L}P_{FDF}(i,m_1)},
P_{\ell,avg}=\frac{1}{L-1}\left({\sum_{m_1=1,m_1\neq
i,\ell}^{L}P_{FDF}(\ell,m_1)+(L-1)P_{FDF}(\ell,i)}\right).\nonumber
\end{align}\par}

Finally, replacing $m_1$ with $m$ in the above equation completes
the proof.
\vspace{-10pt}
\section*{Appendix G\\ \vspace{-10pt}
Proof of Propositions
\ref{prop:eq_aber}$-$\ref{prop:var_aber}}\label{app:aber_comp}

\noindent{\textit{Proof of Proposition 7:}} For the equal average channel gain scenario,
the error probabilities
$P_{FDF}(j,1)=P_{FDF}(j,2)=...=P_{FDF}(j,L-1)=P_{FDF}$ for all
$j\in[1,L]$. Thus, the average SER expressions in
\eqref{eq:abercommon} and \eqref{eq:aberother} for the proposed
pairing scheme can be simplified as:
%
\begin{align}\label{eq:abereql}
P_{i,avg}=P_{FDF},
P_{\ell,avg}=\left(\frac{2L-3}{L-1}\right)P_{FDF}.
\end{align}

The average SER for the scheme in \cite{Ong:2010} can be given by
\cite{Shama:2012}:
\vspace{-10pt}
\begin{equation}\label{eq:dfavg}
P_{avg}=\frac{L}{2}P_{FDF}.
\end{equation}

Comparing \eqref{eq:abereql} and \eqref{eq:dfavg}, we arrive at Proposition \ref{prop:eq_aber}.

\noindent{\textit{Proof of Proposition 8:}} For the unequal average channel
gain scenario, the average SER expressions for the proposed pairing
scheme is given by \eqref{eq:abercommon} and \eqref{eq:aberother},
with $\gamma_{r}(i,m)=\frac{(2L-2)P\min\left(\frac{\mid
{h}_{i,r}\mid^2}{L-1},\mid {h}_{m,r}\mid^2\right)}{5N_0}$ and
$\gamma_i=\frac{(2L-2)P_r\mid h_{i,r}\mid^2}{5N_0}$. For the scheme
in \cite{Ong:2010}, the average SER at the $j^{th} (j\in[1,L])$ user
can be written as
\begin{equation}\label{eq:aberong}
P_{j,avg}=\frac{1}{L-1}\sum_{m=1}^{L-1}mP_{FDF}(j,m),
\end{equation}

\noindent where
\vspace{-10pt}
{\fontsize{11}{13.2}\selectfont
\begin{equation}\label{eq:df2wayong}
P_{FDF}(j,m)=1-(1-P_{\sqrt{M}-PAM,NC}(j,m))^2,
\end{equation}
\par}

\noindent with $\gamma_{r}(m)=\frac{P\min(\mid {h}_{m,r}\mid^2,\mid
{h}_{m+1,r}\mid^2)}{5N_0}$ and $\gamma_j=\frac{P_r\mid
h_{j,r}\mid^2}{5N_0}$ in \eqref{eq:cpq} and \eqref{eq:dp'q'}, respectively. Now we consider two cases:
\begin{itemize}
\item \textit{case 1}: $E[\frac{\mid h_{i,r}\mid^2}{L-1}]>E[\mid
h_{m,r}\mid^2]$. In this case,
\vspace{-5pt}
{\fontsize{11}{13.2}\selectfont
\begin{align}
&E\left[\min\left(\frac{(2L-2)P\mid
h_{i,r}\mid^2}{5(L-1)N_0},\frac{(2L-2)P\mid
h_{m,r}\mid^2}{5N_0}\right)\right]\nonumber\\&\leq
\min\left(E\left[\frac{(2L-2)P\mid
h_{i,r}\mid^2}{5(L-1)N_0}\right],E\left[\frac{(2L-2)P\mid
h_{m,r}\mid^2}{5N_0}\right]\right)=E\left[\frac{(2L-2)P\mid h_{m,r}\mid^2}{5N_0}\right]\nonumber\\
&\geq \min\left(E\left[\frac{P\mid
h_{m,r}\mid^2}{5N_0}\right],E\left[\frac{P\mid
h_{m+1,r}\mid^2}{5N_0}\right]\right)\geq E\left[\min\left(\frac{P\mid h_{m,r}\mid^2}{5N_0},\frac{P\mid
h_{m+1,r}\mid^2}{5N_0}\right)\right].
\end{align}
\par}

Thus, $E[\gamma_r(i,m)]\geq E[\gamma_r(m)]$.

\item \textit{case 2}: $E[\frac{\mid h_{i,r}\mid^2}{L-1}]<E[\mid
h_{m,r}\mid^2]$. In this case, $E\left[\min\left(\frac{(2L-2)P\mid
h_{i,r}\mid^2}{5(L-1)N_0},\frac{(2L-2)P\mid
h_{m,r}\mid^2}{5N_0}\right)\right]\leq E[\frac{(2L-2)P\mid
h_{i,r}\mid^2}{5(L-1)N_0}]$ and since, $\mid h_{i,r}\mid^2>\mid
h_{m,r}\mid^2, \mid h_{m+1,r}\mid^2$, $E\left[\min\left(\frac{\mid
h_{m,r}\mid^2}{5N_0},\frac{\mid h_{m+1,r}\mid^2}{5N_0}\right)\right]$
$\leq E\left[\frac{(2L-2)P\mid h_{i,r}\mid^2}{5(L-1)N_0}\right]$.
Thus, $E[\gamma_r(i,m)]\geq E[\gamma_r(m)]$.
\end{itemize}

From the above cases, the probability $P_{FDF}(i,m)$ and
$P_{FDF}(\ell,m)$ for the proposed scheme would be larger than
$P_{FDF}(j,m)$ for scheme \cite{Ong:2010}. Thus, comparing
\eqref{eq:abercommon}, \eqref{eq:aberother} and \eqref{eq:aberong}
shows that the average SER for the proposed scheme would be smaller
than that for scheme \cite{Ong:2010}. This proves Proposition
\ref{prop:uneq_aber}.

\noindent{\textit{Proof of Proposition 9:}} For the variable average channel gain scenario, the average SER expression for the proposed pairing scheme is
given by \eqref{eq:abercommon} and \eqref{eq:aberother}. The average
SER for the pairing scheme in \cite{Ong:2010} is the same as in
\eqref{eq:aberong}. Now, comparing $P_{FDF}(i,m)$ (from
\eqref{eq:df2waycommon}), $P_{FDF}(\ell,m)$ (from
\eqref{eq:df2wayother}) and $P_{FDF}(j,m)$ (from
\eqref{eq:df2wayong}) shows that the only terms which are different
in all these probabilities are $\gamma_r(i,m)$ and $\gamma_r(m)$.
Note that, if $E[\mid {h}_{i,r}\mid^2]>E[\mid {h}_{m+1,r}\mid^2]$,
then $E[\min(\mid {h}_{i,r}\mid^2,\mid {h}_{m,r}\mid^2)]\geq
E[\min(\mid {h}_{m+1,r}\mid^2,\mid {h}_{m,r}\mid^2)]$. Thus,
$E[\gamma_r(i,m)]\geq E[\gamma_r(m)]$ and in effect, from
\eqref{eq:df2waycommon}, \eqref{eq:df2wayother} and
\eqref{eq:df2wayong}, the error probability for the new pairing
scheme would be less than that for scheme \cite{Ong:2010}. As a
result, the average SER for the proposed scheme 
is less than that of
scheme \cite{Ong:2010} (in \eqref{eq:aberong}) for both $j=i$ and
$j=\ell$, which proves Proposition \ref{prop:var_aber}.
\vspace{-10pt}
\bibliographystyle{IEEEtran}





\end{document}